\keywords{bisimulation, computational complexity, tree-width}
\definecolor{darkgreen}{rgb}{0.1,0.5,0.1}
\definecolor{darkred}{rgb}{0.6,0.1,0.0}
\tikzset{every tree node/.style={draw, circle, black, semithick, inner sep = 2pt, minimum size = 10pt}}
\tikzstyle{arrow} = [semithick,->,>=stealth]
\tikzset{edge from parent/.append style={arrow, edge from parent path = {(\tikzparentnode) -- (\tikzchildnode)}}}
\tikzset{level distance = 40pt, sibling distance = 20pt}
\tikzset{gate/.style={draw, circle, black, semithick, inner sep = 1.5pt}}
\tikzstyle{u} = [darkgreen!50]
\tikzstyle{etc} = [-,semithick,shorten <= .3cm,shorten >=.3cm,dash pattern=on \pgflinewidth off 6pt]
\newcommand{\C}{\mathcal{C}}
\newcommand{\Bisim}{\textsc{Bisim}}
\newcommand{\Sim}{\textsc{Sim}}
\newcommand{\N}{\mathbb{N}}
\newcommand{\DLOGTIME}{{\sf DLOGTIME}}
\renewcommand{\L}{{\sf L}}
\newcommand{\NL}{{\sf NL}}
\newcommand{\LogCFL}{{\sf LogCFL}}
\newcommand{\NC}{{\sf NC}}
\newcommand{\AC}{{\sf AC}}
\newcommand{\SAC}{{\sf SAC}}
\newcommand{\TC}{{\sf TC}}
\renewcommand{\P}{{\sf P}}
\renewcommand{\S}{\mathcal{S}}
\newcommand{\arrow}[1]{\stackrel{#1}{\longrightarrow}}
\newcommand{\auf}{\texttt{(}}
\newcommand{\zu}{\texttt{)}}
\newcommand{\Zu}{\texttt{]}}
\newcommand{\problem}[3]{\bigskip\noindent\begin{tabular}{@{}l l}\multicolumn{2}{l}{~#1}\\[3px]{\bf Given:} & #2 \\[2px] {\bf Question:} & #3\end{tabular}\medskip}
\DeclareMathOperator{\depth}{depth}
\begin{document}

\title{The Complexity of Bisimulation and Simulation on Finite Systems}

\author[M.~Ganardi]{Moses Ganardi}
\address{University of Siegen, Germany}
\email{ganardi@eti.uni-siegen.de}

\author[S.~G\"oller]{Stefan G\"oller}
\address{Laboratoire Specification et Verification (LSV), ENS de Cachan \& CNRS, France}
\email{goeller@lsv.ens-cachan.fr}

\author[M.~Lohrey]{Markus Lohrey}
\address{University of Siegen, Germany}
\email{lohrey@eti.uni-siegen.de}

\begin{abstract}
	
	In this paper the computational complexity of the (bi)simu\-lation problem over restricted graph classes is studied.
	For trees given as pointer structures or terms the (bi)simulation problem is complete for logarithmic space or $\NC^1$, respectively.
	This solves an open problem from Balc{\'a}zar, Gabarr\'o, and S\'antha.
	Furthermore, if only one of the input graphs is required to be a tree,
	the bisimulation (simulation) problem is contained in $\AC^1$ ($\LogCFL$).
	In contrast, it is also shown that the simulation problem is \P-complete already for graphs of bounded path-width.
	
\end{abstract}

\maketitle

\section{Introduction}

Courcelle's theorem states that every problem definable in monadic second-order logic (MSO) is solvable in linear time
on graphs of bounded tree-width.
In recent works by Elberfeld, Jakoby, and Tantau, techniques have been developed to transfer this famous result to low space and
circuit complexity classes \cite{ElberfeldJT10,ElberfeldJT12}.
In particular, the following logspace (resp., $\NC^1$) version of Courcelle's theorem was shown (see Section~\ref{sec-prel} for the necessary
definitions):
\begin{thmC}[\cite{ElberfeldJT10,ElberfeldJT12}] \label{thm-elberfeld}
	For a fixed MSO-sentence $\psi$ and a fixed constant $k$ one can check in logspace whether a given structure $\mathcal{A}$ of 
	tree-width at most $k$ satisfies $\psi$. If a tree decomposition of $\mathcal{A}$ of width $k$ is 
	given in term representation, then one can check in $\DLOGTIME$-uniform $\NC^1$ whether $\mathcal{A}$ satisfies $\psi$.
\end{thmC}
This result is a very powerful metatheorem, which can be applied to many computational problems.
On the other hand, there are important problems solvable in logspace on graphs of bounded tree-width that are not covered by
Theorem~\ref{thm-elberfeld}. One example is the graph isomorphism problem. Graph isomorphism is not MSO-definable even
over finite paths since two finite paths are isomorphic if they have the same length, but one cannot express in MSO that two finite
sets have the same size. Lindell \cite{Lindell92} has shown that isomorphism of trees is in logspace, and only very recently 
Elberfeld and Schweitzer \cite{ElberfeldS16} extended this result
to graphs of bounded tree-width.

In this paper, we are concerned with the complexity of simulation and bisimulation, which are of fundamental
importance in the theory of reactive systems, see e.g. \cite{AcIng07} for more background.
It is known that on finite state systems simulation and bisimulation are both \P-complete  \cite{BalcazarGS92},
and hence have no efficient parallel algorithm unless $\P = \NC$.
Surprisingly, no results on the complexity of (bi)simulation on natural subclasses of finite state systems 
are known (whereas there exists an extensive literature on (bi)simulation problems for various classes
of infinite state systems, like pushdown systems or Petri nets; see e.g.~\cite{Srba:roadmap:04}).
The authors of \cite{BalcazarGS92} pose this open question and suggest to consider the bisimulation problem on trees.
The above remark that tree isomorphism cannot be expressed in MSO applies to bisimulation on trees as well (two 
finite paths are bisimilar if and only if they are isomorphic). Moreover,  it is not clear whether there is a natural reduction 
of the bisimulation problem on trees to the logspace-solvable isomorphism problem for trees
(or even bounded tree-width graphs). 

In this paper, we determine the complexity of the bisimulation problem and simulation problem on several subclasses of finite state systems. 
More precisely, we show the following results; see also Table~\ref{table:complexity}.
\begin{enumerate}
	\item On trees the (bi)simulation problem is complete for logarithmic space (resp., $\NC^1$)
	if the trees are given as pointer structures (resp., in term representation).
	\item The bisimulation problem between a tree and a dag (or arbitrary graph) belongs to $\AC^1$ and is $\NL$-hard.
	\item The simulation problem is \P-hard (and hence \P-complete) already for graphs of bounded path-width.
	\item Simulation of a tree by a dag as well as simulation of a dag by a tree is \LogCFL-complete.
\end{enumerate}
Whether the bisimulation problem on graphs of bounded tree-width is in $\NC$ remains open.

We prove our results for the bisimulation problem for trees (statement (1) above) by a reduction to the evaluation problem 
for a new class of Boolean circuits that we call  {\em tree-shaped circuits}. These are circuits
that are composed in a tree-like fashion of smaller subcircuits.
The important parameter of such a circuit is its {\em width} which is defined to be
the maximal number of different paths from the root to an input gate in one of the above mentioned subcircuits.
We emphasize that the size of the subcircuits (number of gates) may be unbounded.
The main technical contributions of this paper are logspace- and $\NC^1$-evaluation algorithms
(depending on the input representation) for  tree-shaped circuits of bounded width.
These circuits should not be confused with circuits of bounded tree-width, which are known to have
logspace- and $\NC^1$-evaluation algorithms (depending on the representation) by the above Theorem~\ref{thm-elberfeld}.
We show how to partially unfold tree-shaped circuits of bounded width into circuits of bounded tree-width. This unfolding
is possible in $\TC^0$ (assuming the right representation of the circuit). Finally, the resulting bounded tree-width circuit
can be evaluated using Theorem~\ref{thm-elberfeld}. 
For the above logspace result we actually prove a stronger statement: A given tree-shaped circuit of size $N$ and width
$m$ can be evaluated in space $O(\log N \cdot \log m)$.

One should also mention the paper \cite{GrGRHeLa12}, where a logic $\mathsf{LREG}$ (which extends classical
first-order logic) is introduced. It is shown that $\mathsf{LREG}$ captures logspace on directed trees. Hence,
bisimulation for trees is expressible in $\mathsf{LREG}$. Due to the very technical definition of $\mathsf{LREG}$ we think
that it is not easier to express the bisimulation problem in $\mathsf{LREG}$ than giving a direct logspace algorithm.
Moreover expressibility in $\mathsf{LREG}$ does not imply that bisimulation for trees in term representation is in $\NC^1$.

A conference version of this paper appeared in \cite{GanardiGL16}.

{
\setlength\tabcolsep{.2cm}
\renewcommand{\arraystretch}{2}
\begin{table}[]
\centering
\begin{tabular}{r|c|c|c|cc|}
  & \thead{trees \\ (term rep.)} & \thead{trees \\ (pointer rep.)} & \thead{tree, \\ graph} & \multicolumn{1}{l|}{\thead{bounded \\ path-width}} & \thead{general \\ graphs} \\ \hline
bisimulation & \multirow{2}{*}{$\NC^1$-compl.} & \multirow{2}{*}{$\L$-compl.} & in $\AC^1$, $\NL$-hard & \multicolumn{1}{c|}{in $\P$, $\L$-hard} & \\ \cline{4-5}
simulation & & & $\LogCFL$-compl. & \multicolumn{2}{c|}{$\P$-compl.} \\ \hline
\end{tabular}
\caption{Parallel complexity of the (bi)simulation problem on restricted classes of finite graphs. 
}
\label{table:complexity}
\end{table}
}

\section{Preliminaries} \label{sec-prel}

\subsection{Graphs and trees}

A {\em (directed) graph} $G=(V,E)$ consists of a set of nodes $V$ and a set of edges $E(G) = E \subseteq V \times V$.
If $(u,v) \in E$, we call $v$ a {\em successor} of $u$.
A {\em path} (of length $n \ge 0$ from $v_0$ to $v_n$) in $G$ is a node sequence $v_0, v_1, \dots, v_n$ such that $(v_i,v_{i+1}) \in E$
for all $0 \le i < n$.
A graph is {\em acyclic} if there is no path of length $\ge 1$ from a node to itself.
We say that two nodes $u,v \in V$ are {\em connected} if there exists a path from $u$ to $v$ in the underlying
undirected graph $(V, E \cup \{ (v,u) \mid (u,v) \in E\})$.
A set of nodes $U \subseteq V$ is {\em connected} if any two nodes in $U$ are connected.

A graph $T$ is a {\em (rooted) tree} if there exists a node $r \in V(T)$, called the {\em root} of $T$,
such that for all $v \in V(T)$ there exists exactly one path from $r$ to $v$.
The {\em size} $|u|$ of a node $u$ in a tree is the size of the subtree rooted in $u$.
The {\em depth} of a node $u$, denoted by $\depth(u)$, is the length of the unique path from the root to $u$.
A node $u$ is an {\em ancestor} (resp., {\em proper ancestor}) of $v$, briefly $u \preceq v$ (resp., $u \prec v$),
if there exists a path (resp., a path of length $\ge 1$) in $T$ from $u$ to $v$.
Given a tree $T$ we say that a node $v$ is {\em between} nodes $u$ and $w$
if either $v \in \{u,w\}$, or $u$ and $w$ are connected in $T \setminus v$
(which is the subgraph of $T$ obtained by removing $v$).

A {\em node-labelled graph} $(V,E,\beta)$ is a graph $(V,E)$ together with a labelling function
$\beta \colon V \to A$ into a finite set $A$.
An {\em edge-labelled graph} $(V,E)$ consists of a set of nodes $V$ and a labelled edge relation
$E \subseteq V \times A \times V$.
We also write $u \arrow{a} v$ instead of $(u,a,v) \in E$.
Unlabelled graphs are also regarded as labelled graphs over a singleton label set.
An edge-labelled tree is an edge-labelled graph $(V,E)$ where the sets
$E_a = \{(u,v) \mid (u,a,v) \in E \}$ are pairwise disjoint for $a \in A$ and $(V,\bigcup_{a \in A} E_a)$ is a rooted tree.

We denote by Graphs, Unlabelled-Graphs, Trees and Unlabelled-Trees the classes of (unlabelled) graphs and (unlabelled) trees, respectively.

\subsection{Circuits} \label{sec-circuits}

A {\em (Boolean) circuit} $C = (G,\beta)$ is a node-labelled graph, where $G$ is acyclic and 
$\beta\colon V(G) \to \{x_1, \dots, x_n,0,1,\neg,\wedge,\vee\}$ for some $n$.
Nodes of $C$ are usually called {\em gates}. Gates labelled by $0$, $1$ (constant gates) or by a variable $x_i$ (input gates)
have no successors. Gates labelled by $\wedge$ or $\vee$ have at least one successor, and gates labelled by $\neg$ have
exactly one successor.
A {\em variable-free} circuit is a circuit without input gates. 
In a variable-free circuit every gate can be evaluated to either $0$ or $1$.
The question of deciding whether a given gate of a given variable-free circuit evaluates to $1$ is known
as the {\em circuit value problem}. It is one of the classical $\P$-complete problems, see \cite{GrHoRu95} 
for more details.

\subsection{Computational complexity}

In this paper we will work with the following complexity classes:
\[
	\AC^0 \subseteq \TC^0 \subseteq \NC^1 \subseteq \L \subseteq \NL \subseteq \LogCFL \subseteq \AC^1 \subseteq \NC \subseteq \P.
\]
In the following we briefly define these classes. Of course, $\P$ denotes deterministic polynomial time. 
A function $f \colon \Sigma^* \to \Gamma^*$ is  {\em logspace-computable} if it can be computed on a deterministic Turing-machine with
a read-only input tape, a write-only output tape and a working tape whose length is bounded logarithmically in the input length; such a machine is also
called a logspace transducer.
We denote by $\L$ the class of languages which can be decided in logspace, i.e., for which the characteristic function is logspace-computable.
Throughout the paper we will use implicitly that compositions of logspace-computable functions are logspace-computable again.
The class $\NL$ is the set of languages that can be decided by nondeterministic Turing  machine in logarithmic space.
It is closed under complement by the famous theorem of Immerman-Szelepcs{\'{e}}nyi \cite{Immerman88,Szelepcsenyi88}.

The complexity class $\NC^i$ ($i \geq 0$) contains all languages $L \subseteq \{0,1\}^*$ such that there exists a circuit family $(C_n)_{n \in \N}$ where
\begin{enumerate}
\item $C_n$ has the input gates $\{x_1, \ldots, x_n, \neg x_1, \ldots, \neg x_n\}$, $\wedge$-gates and $\vee$-gates of fan-in two, and a distinguished output gate,
\item $C_n$ has size $n^{O(1)}$ and depth $O((\log n)^i)$,
\item and $C_n$ accepts $x \in \{0,1\}^n$ if and only if $x \in L$.
\end{enumerate}
If we allow in (1) $\wedge$-gates and $\vee$-gates of unbounded fan-in, we obtain the class $\AC^i$.
If we allow in (1) $\wedge$-gates, $\vee$-gates and majority-gates -- all of unbounded fan-in -- then we obtain the class $\TC^i$.
Finally, the class $\SAC^i$ is obtained by taking $\wedge$-gates of fan-in two and unbounded fan-in $\vee$-gates in (1).
One can easily show that $\NC^i \subseteq \SAC^i \subseteq \AC^i \subseteq \TC^i \subseteq \NC^{i+1}$ for all $i \geq 0$.
For the classes above (and inclusively) $\SAC^1$, we assume logspace uniformity, which means that there is a logspace
transducer that computes from the unary encoding of the number $n$ the $n$-th circuit $C_n$. For the classes
$\AC^0$, $\TC^0$ and $\NC^1$  logspace uniformity is too weak (the class $\NC^0$ will not be used). For these
classes one usually imposes the stronger  \DLOGTIME-uniformity condition on the circuit families.
For the  constant depth classes $\AC^0$ and $\TC^0$, $\DLOGTIME$-uniformity means that
for given binary coded gate numbers  $u, v$ of the $n$-th ciruit $C_n$, one can (i) compute the type of gate $u$ in time
$O(\log n)$ and (ii) check in  time $O(\log n)$ whether $u$ is an input gate for $v$.
Since $C_n$ has polynomially many gates, the gates of $C_n$ can be encoded by bit strings of length $O(\log n)$. Thus
the time bound $O(\log n)$ is linear in the input length $|u|+|v|$.
To define $\DLOGTIME$-uniformity for $\NC^1$-circuits one needs the so-called extended connection language. 
We do not have to go into details (which can be found in \cite{BIS90,Ru81,Vol99}), since we will not work with uniformity explicitly.
The union of all classes $\NC^i$ is denoted by $\NC$.

The definitions of the above circuit complexity classes 
can be easily extended to functions $I\colon \{0,1\}^* \to \{0,1\}^*$.
This can be done by encoding $I$ by the language $L_I = \{ 1^i 0 w \mid w \in \{0,1\}^*, \text{ the $i$-th bit of $f(w)$ is 1} \}$.
The main complexity class used in this paper is \DLOGTIME-uniform $\TC^0$. It can be seen as the extension of $\AC^0$ by the ability of counting.
Typical problems in $\TC^0$ are the computation of the sum, product and integer quotient of two binary encoded integers, and the sum and product of an arbitrary number of binary encoded integers \cite{HeAlBa02}.

The class $\LogCFL$ contains all languages $L \subseteq \{0,1\}^*$ which are logspace-reducible to a fixed context-free language. It coincides with logspace-uniform $\SAC^1$ \cite{Ven91}.
Another characterizations uses alternating logspace Turing machines with polynomial proof trees. Recall that an alternating Turing machine is a Turing machine 
where the set of states is partitioned into existential states and universal states. A configuration where the current state is existential (resp., universal) is called
a existential (resp., universal) configuration.
A existential (resp., universal) configuration is accepting
if and only if there exists an accepting successor configuration (resp., if all successor configurations are accepting).
In our definition of alternating machines
there is no explicit accepting state and to accept the machine has to branch to a universal configuration
without successor configurations.
It is well-known that $\P$ is the class
of languages that are accepted by alternating Turing machines in logarithmic space. To get the subclass $\LogCFL$ one has to require that there exists a polynomial $p(n)$
such that for every accepted input word $w$ of length $n$
there exists a proof tree of size at most $p(n)$. A proof tree of an alternating Turing machine for input $w$ is a finite tree where the nodes are labelled with
configurations such that the following conditions hold:
\begin{itemize}
\item The root is labelled with the initial configuration for input $w$.
\item If a node $v$ is labelled with an existential configuration $c$, then $v$ has exactly one child that is labelled with a successor configuration of $c$.
\item If a node $v$ is labelled with a universal configuration $c$, then $v$ has exactly one child $v'$ for each successor configuration $c'$ of $c$, and
$v'$ is labelled with $c'$.
\end{itemize}
Ruzzo \cite{Ru80} proved that $\LogCFL$ coincides with the class of languages that can be accepted by an alternating Turing machine in logarithmic space
and such that for every accepted input there exists a polynomial size proof tree.
It was shown by Borodin et al. that $\LogCFL$ is closed under complement \cite{BCDRT89}.

For more details on space and circuit complexity we refer to \cite{Vol99}.

\subsection{Tree representations}

The complexity of tree problems often depends on how the trees are represented.
Firstly, trees can be given as {\em pointer structures}
where the edge relation is given explicitly as a list of pairs consisting of two node names,
which is the standard encoding of graphs in general.
Secondly, trees can be given in {\em term representation} (or {\em bracket representation}):
The string $\auf\zu$ represents a tree of size 1.
If a tree $T$ has a root and direct subtrees $T_1, \dots, T_n$ which have term representations $r_1, \dots, r_n$,
then the string $\auf r_1\cdots r_n \zu$ is a term representation of $T$.
Notice that a tree can have multiple term representations because trees are unordered.
Thirdly, trees can be represented in {\em ancestor representation} where
we specify a list of all pairs $(u,v)$ where $u$ is an ancestor of $v$.
Elberfeld et al. showed that term and ancestor representations can be converted into each other in $\TC^0$ \cite{ElberfeldJT12}.
We will utilize this result
since operating on ancestor representations of trees is technically easier than on term representations.

To represent node-labelled trees (e.g., tree decompositions), in the pointer and ancestor representation
we append a list of pairs consisting of a node name and a node label.
In the term representation node labelled trees can be encoded by introducing for each label $a$
an opening bracket symbol $\auf_a$. For instance, $\auf_a \auf_b \zu \auf_a\zu \zu$ encodes a tree with an $a$-labelled root
and two children which are labelled by $b$ and $a$.
If the set of node labels is not fixed (this is the case for tree decompositions) we choose an arbitrary binary block code
for the opening brackets $\auf_a$.
To represent an edge-labelled tree we transform it into a node-labelled tree,
e.g. by assigning the label of an edge $(u,v)$ to its end point $v$ and labelling the root by a special symbol.
Let us finally mention that these coding details for labelled trees are only relevant for our $\NC^1$ lower bound 
in Section~\ref{bisi-lower-bound}, which refers to $\AC^0$-reductions. The reason is that different codings of labelled trees in term representation can
be transformed in $\TC^0$, but not necessarily in $\AC^0$, into each other.

\subsection{Tree-width and path-width}

A {\em tree decomposition} $(T,\beta)$ of a directed graph $G$ consists of a tree $T$ and a function $\beta\colon V(T) \to 2^{V(G)}$ which assigns to each node of $T$ a so called {\em bag} such that
\begin{itemize}
	\item for all $v \in V(G)$ the set $\{ t \in V(T) \mid v \in \beta(t) \}$ is non-empty and connected, and
	\item for all $(u,v) \in E(G)$ there exists $t \in V(T)$ such that $u,v \in \beta(t)$.
\end{itemize}
The {\em width} of $(T,\beta)$ is $\max_{t \in V(T)} |\beta(t)|-1$ and the {\em tree-width} of a graph $G$ is the
minimum width over all tree decompositions of $G$.
A tree decomposition $(T,\beta)$ is a {\em path decomposition} if $T$ is a path. The {\em path-width} of a
graph $G$ is the minimum width of all path decompositions of $G$.
Tree-width and path-width are also defined for node- and edge-labelled graphs via their underlying unlabelled graph.
More background on the notions of tree- and path-width can be found for instance in \cite{DowneyF13}.

\subsection{Bisimulation and simulation} \label{sec-bisi+sim}

\begin{defi}
A {\em bisimulation} on an edge-labelled graph $(V,E)$
is a binary relation $R \subseteq V \times V$ such that for all $(u,v) \in R$
the following conditions hold:
\begin{enumerate}
	\item \label{forth-condition} For all $u \arrow{a} u'$ there exists $v \arrow{a} v'$ such that $(u',v') \in R$.
	\item \label{back-condition} For all $v \arrow{a} v'$ there exists $u \arrow{a} u'$ such that $(u',v') \in R$.
\end{enumerate}
A relation $R$ that only satisfies condition \ref{forth-condition} for all $(u,v) \in R$ is called a {\em simulation}.
A (bi)simulation on two edge-labelled graphs is a (bi)simulation on their disjoint union.
Two nodes $u,v$ are called {\em bisimilar} if there exists a bisimulation $R$ such that $(u,v) \in R$.
We say that $u$ {\em is simulated by} $v$ if there exists a simulation $R$ such that $(u,v) \in R$.
\end{defi}

It is easy to see that the union of all bisimulations on a graph $G$ is a bisimulation again and an equivalence relation,
called the {\em bisimulation equivalence} $\sim_G$, or simply $\sim$ if $G$ is clear from the context.
Similarly,  the union of all simulations on a graph is a simulation again and a preorder,
called the {\em simulation preorder} $\sqsubseteq_{G}$, or simply $\sqsubseteq$ if $G$ is clear from the context. If $u$ belongs to a graph $G$ and $v$ belongs to 
a graph $H \neq G$, 
we write $u \sqsubseteq v$ (resp., $u \sim v$) if $u \sqsubseteq_{G \uplus H} v$ (resp., $u \sim_{G \uplus H} v$),
where $G \uplus H$ is the disjoint union of $G$ and $H$.
Note that in general $u \sim v$ is not equivalent to $u \sqsubseteq v \sqsubseteq u$.
For two graph classes $\C_1, \C_2$ we define the bisimulation problem and the simulation problem:

\problem{$\Bisim(\C_1,\C_2)$}{Two graphs $G_1 \in \C_1$, $G_2 \in \C_2$ and two nodes $v_1 \in V(G_1)$, $v_2 \in V(G_2)$}{Are $v_1$ and $v_2$ bisimilar?}

\problem{$\Sim(\C_1,\C_2)$}{Two graphs $G_1 \in \C_1$, $G_2 \in \C_2$ and two nodes $v_1 \in V(G_1)$, $v_2 \in V(G_2)$}{Is $v_1$ simulated by $v_2$?}

Note that $\Bisim(\C_1,\C_2)$ and $\Bisim(\C_2,\C_1)$ are equivalent, since bisimulation is symmetric, but this is not the case for simulation.
If $\C_1 = \C_2$ then we just write $\Bisim(\C_1)$ or $\Sim(\C_1)$.
It is known that on a finite edge-labelled graph $G = (V,E)$ bisimulation equivalence and simulation preorder can be computed as a fixed point as follows. First we set $u \sim_0 v$ for all $u, v \in V$. Inductively, we define $u \sim_{k+1} v$ if \begin{itemize}
\item for all $u \arrow{a} u'$ there exists $v \arrow{a} v'$ such that $u' \sim_k v'$, and
\item for all $v \arrow{a} v'$ there exists $u \arrow{a} u'$ such that $u' \sim_k v'$.
\end{itemize}
One easily sees that $\sim_i \, \supseteq \, \sim_{i+1}$ for all $i \in \N$. Hence there exists $n \in \N$ such that
$\sim_n \, = \, \sim_{n+i}$ for all $i \in \N$ and $\sim_n$ coincides with bisimulation equivalence $\sim$.
Using the approximants above the bisimulation equivalence can be decided on finite graphs in polynomial time by a partition refinement algorithm \cite{KanellakisS90}. 
The fastest known algorithm is due to Paige and Tarjan, which runs in time $O(m \log n)$ where $m$ is the number of edges and $n$ is the number of nodes \cite{PaigeT87}. 
The simulation problem on finite graphs is also decidable in polynomial time by a refinement algorithm \cite{CS01}.
Similar as above, one defines preorders $\sqsubseteq_i$ such that $\sqsubseteq_n \, = \, \sqsubseteq$ for some $n \in \N$.
It is known that both problems are $\P$-hard \cite{BalcazarGS92,SawaJ05,Srba01}, and therefore:

\begin{thmC}[\cite{BalcazarGS92,SawaJ05}] 
	$\Bisim${\rm (Graphs)} and $\Sim${\rm (Graphs)} are $\P$-complete.
	For both problems $\P$-hardness already holds for dags.
\end{thmC}

\section{Complexity of bisimulation} \label{sec-bisi}

\subsection{Bisimulation on trees} 

In this section we consider the bisimulation problem on edge-labelled trees.
Here we have two versions, depending on whether the trees are given as pointer structures or in term representation.
If the trees are given in term representation then the nodes can be specified by numbers which refer to an opening bracket
in the term. 
Clearly, we can restrict ourselves to the problem of deciding whether the roots of the two input trees are bisimilar,
since one can compute subtrees rooted in a given node in the pointer (term) representation in logspace ($\TC^0$).

\begin{thm} \label{thm-bisi-logspace}
	\Bisim{\rm (Trees)} is in $\L$ if the trees are given as pointer structures.
\end{thm}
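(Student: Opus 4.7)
The plan is to reduce $\Bisim(\text{Trees})$ to the evaluation of a bounded-width tree-shaped Boolean circuit, using the fact that bisimulation on trees admits a natural recursive characterization that is itself tree-structured.

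First, as already noted in the excerpt, we may restrict to deciding whether the roots $r_1, r_2$ of the input trees $T_1, T_2$ are bisimilar. The characterization I will exploit is the standard one: $u \sim v$ iff for every label $a$, every $a$-successor of $u$ is bisimilar to some $a$-successor of $v$ and vice versa. This is the finite-depth back-and-forth condition, and on trees the usual fixed-point iteration collapses into a single bottom-up pass.

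Next I would build, implicitly from the pointer representations of $T_1$ and $T_2$, a Boolean circuit whose gates are indexed by pairs $(u,v) \in V(T_1) \times V(T_2)$ lying at equal depth on matching label-paths from the roots. The gate $B_{u,v}$ is the root of a subcircuit
\[
	B_{u,v} \;=\; \bigwedge_{u \arrow{a} u'} \Bigl(\, \bigvee_{v \arrow{a} v'} B_{u',v'} \,\Bigr) \;\wedge\; \bigwedge_{v \arrow{a} v'} \Bigl(\, \bigvee_{u \arrow{a} u'} B_{u',v'} \,\Bigr),
\]
built so that $B_{u,v}$ evaluates to $1$ precisely when $u \sim v$. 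The key structural observation is that the overall circuit is tree-shaped: each non-root gate $B_{u',v'}$ appears in exactly one parent subcircuit, namely the one indexed by $(u'_p, v'_p)$ where $u'_p, v'_p$ are the parents of $u', v'$ in $T_1, T_2$ (and the parent edges are forced to carry the same label, since otherwise $B_{u',v'}$ does not occur in any reachable subcircuit at all). Moreover, inside the parent subcircuit the gate $B_{u',v'}$ lies on exactly two root-to-input paths — one through the forth-clause for $u'$, one through the back-clause for $v'$ — so the tree-shaped circuit has width at most $2$, independent of the input.

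Finally I would invoke the logspace evaluation algorithm for tree-shaped circuits of bounded width promised in the introduction (the space bound $O(\log N \cdot \log m)$ with constant $m$ yields $O(\log n)$), applied to $B_{r_1, r_2}$. All that remains to check is that the circuit can be navigated implicitly in logspace from the pointer input: enumerating children, locating parents, and reading edge labels are all trivial on a pointer structure, and the reachability constraints on $(u,v)$ (equal depth, matching label path) can be verified on the fly. The main obstacle of the proof is therefore not the reduction itself but the subsequent development of the tree-shaped circuit evaluator; once that is in hand, the logspace upper bound for $\Bisim(\text{Trees})$ follows by the constant-width instantiation sketched above.
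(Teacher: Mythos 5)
Your proposal is correct and follows essentially the same route as the paper: it constructs the same tree-shaped circuit given by the formula in \eqref{bisi-formula}, observes that uniqueness of parents in the two trees makes the circuit tree-shaped of width $2$, and then delegates the real work to the logspace evaluation algorithm for bounded-width tree-shaped circuits (Theorem~\ref{thm:eval-logspace}), exactly as the paper does. The observations about reachable gates requiring matching edge labels and about logspace navigability of the pointer representation match the paper's treatment.
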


\begin{thm}
	\label{thm:bisi-nc1}
	\Bisim{\rm (Trees)} is in $\NC^1$ if the trees are given in term representation.
\end{thm}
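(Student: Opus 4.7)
The plan is to reduce bisimulation on term-represented trees to the evaluation of a tree-shaped Boolean circuit of bounded width in term representation, and then appeal to the paper's $\NC^1$-evaluator for such circuits (which, per the introduction, works by partially unfolding the circuit in $\TC^0$ into a bounded tree-width circuit equipped with a term-style tree decomposition, and then applying Theorem~\ref{thm-elberfeld}).

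The circuit is built from a product-tree construction on the inputs. Let $T_1, T_2$ be the edge-labelled input trees with roots $r_1, r_2$. For each $(u, v) \in V(T_1) \times V(T_2)$ I introduce a gate $X_{u,v}$ witnessing $u \sim v$ via
\[
X_{u,v} \;=\; \bigwedge_{a \in A}\Bigl(\bigwedge_{u \arrow{a} u'}\bigvee_{v \arrow{a} v'} X_{u',v'}\Bigr) \wedge \bigwedge_{a \in A}\Bigl(\bigwedge_{v \arrow{a} v'}\bigvee_{u \arrow{a} u'} X_{u',v'}\Bigr).
\]
The resulting dependency structure is the \emph{product tree}, whose node $(u,v)$ has as children exactly the pairs $(u', v')$ with $u \arrow{a} u'$ and $v \arrow{a} v'$ for some $a$; this is genuinely a tree because $T_1$ and $T_2$ are. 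Placing the subcircuit above at each node $(u,v)$ yields a tree-shaped Boolean circuit $C$ of polynomial size whose top gate evaluates to $1$ iff $r_1 \sim r_2$. Inside every subcircuit each input $X_{u',v'}$ sits under exactly two root-to-input paths (one through each of the two bracketed conjunctions), so $C$ has width $2$ regardless of the out-degrees in $T_1$ and $T_2$.

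To conclude, I would produce the term representation of $C$ from the term representations of $T_1$ and $T_2$ in $\TC^0$: pair names $(u,v)$ are encoded by bit strings of logarithmic length; the children and per-label groupings within each subcircuit are extracted from the ancestor representations of $T_1, T_2$, which by Elberfeld, Jakoby and Tantau are obtainable from the term representations in $\TC^0$; and the bracket structure of each subcircuit is synthesised using the standard $\TC^0$ arithmetic/sorting toolbox. Feeding the resulting width-$2$ tree-shaped circuit in term representation into the $\NC^1$-evaluator then yields the claimed bound. The main obstacle lies not in the reduction itself but in the evaluator: one has to fix a precise notion of tree-shaped circuit of bounded width in term representation, and show that the partial unfolding into a bounded tree-width circuit together with a term-style tree decomposition can actually be carried out in $\TC^0$ so that Theorem~\ref{thm-elberfeld} applies.
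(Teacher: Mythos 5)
Your proposal is correct and follows essentially the same route as the paper: it builds the width-$2$ tree-shaped circuit $C_\sim(T_1,T_2)$ over the (reachable) product pairs, computes it together with the underlying tree $T_\S$ in $\TC^0$ from the ancestor representations of $T_1,T_2$, and hands it to the $\NC^1$-evaluator of Theorem~\ref{thm:eval-nc1}, which is exactly how the paper derives Theorem~\ref{thm:bisi-nc1}. The one point you correctly flag as the remaining obstacle --- carrying out the partial unfolding and producing a term-represented width-$2$ tree decomposition in $\TC^0$ --- is precisely the content of the paper's separately proved Theorem~\ref{thm:eval-nc1}, so nothing is missing beyond what that theorem supplies.
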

We remark that the same complexity bounds hold for the tree isomorphism problem \cite{Buss97,JennerKMT03,Lindell92}.
In fact, converting a tree given as a pointer structure into term representation is $\mathsf{FL}$-complete \cite{BM95},
where $\mathsf{FL}$ is the class of functions computable in deterministic logspace.
In that sense, with the proper abstraction of the input representation the actual complexity of \Bisim{\rm (Trees)} is $\NC^1$.

\subsubsection{Tree-shaped Boolean circuits} \label{subsec-pointer}


Bisimilarity between two edge-labelled trees $T_1$ and $T_2$ can be expressed as a Boolean circuit $C_\sim(T_1,T_2)$:
For all $u \in V(T_1), v \in V(T_2)$ such that $\depth(u) = \depth(v)$ the circuit contains a gate $x_{u,v}$,
which evaluates to true if and only if $u$ is bisimilar to $v$.
Since we want to test whether the roots of $T_1$ and $T_2$ are bisimilar,
it suffices to consider the reachable gates $x_{u,v}$ where $\depth(u) = \depth(v)$.
We define
\begin{equation} \label{bisi-formula}
x_{u,v} = \bigwedge_{u \arrow{a} u'} \bigvee_{v \arrow{a} v'} x_{u',v'} \wedge \bigwedge_{v \arrow{a} v'} \bigvee_{u \arrow{a} u'} x_{u',v'}.
\end{equation}
As usual we regard an empty conjunction (resp., disjunction) as $1$ (resp., 0).
In particular, if both $u$ and $v$ are leaves, then $x_{u,v} = 1$, and if exactly one of $u$ and $v$ is a leaf, then $x_{u,v}=0$.
An example circuit is shown in Figure~\ref{fig:tree-bisimulation-circuit}, where $T_1$ and $T_2$ are unlabelled.
Note that the circuit is composed in a tree-shaped form from smaller circuits. These smaller circuits correspond to the definition in \eqref{bisi-formula} and have 
the crucial property that there exist
exactly two paths from the root (of the smaller circuit) to an arbitrary leaf, which are highlighted in one subcircuit in Figure~\ref{fig:tree-bisimulation-circuit}.
This is the case because each gate variable $x_{u',v'}$ occurs once in the first conjunction and once in the second conjunction in \eqref{bisi-formula}.
In fact, we will show that circuits with such a path property can be evaluated in logspace,
and even in $\NC^1$ under a suitable representation.

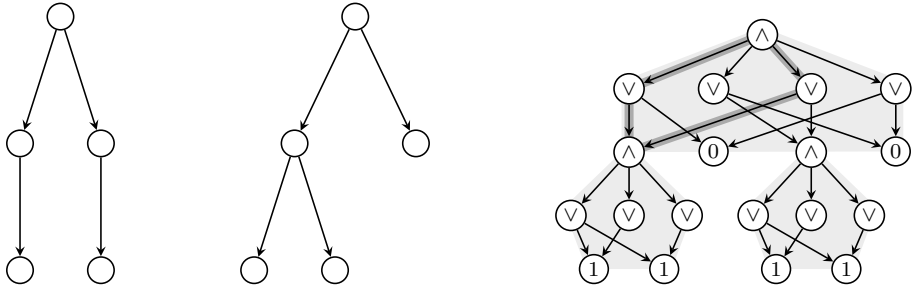
\begin{figure}[t]
	
	\centering
	
	\hspace*{\fill}
	\tikzset{level distance=48pt}
	\begin{tikzpicture}
	
	\Tree
	[. \node {};
	[. \node {};
	[. \node {}; ]
	]
	[. \node {};
	[. \node {}; ]
	]
	]
	
	\end{tikzpicture}
	\hspace*{\fill}
	\begin{tikzpicture}
	
	\Tree
	[. \node {};
	[. \node {};
	[. \node {}; ]
	[. \node {}; ]
	]
	[. \node {}; ]
	]
	
	\end{tikzpicture}
	\hspace*{\fill}
	\begin{tikzpicture}
	
	\tikzset{gate/.append style={fill=white}}
	\tikzstyle{xuv} = []
	\tikzstyle{bpath} = []
	
	\node[gate, xuv] (05) {\scriptsize $\wedge$};
	
	\node[gate, below left = 12pt and 10pt of 05] (35) {\scriptsize $\vee$};
	\node[gate, left = 20pt of 35] (15) {\scriptsize $\vee$};
	\node[gate, below right = 12pt and 10pt of 05] (06) {\scriptsize $\vee$};
	\node[gate, right = 20pt of 06] (09) {\scriptsize $\vee$};
	
	\draw[arrow] (05) -- (35);
	\draw[arrow, bpath] (05) -- (15);
	\draw[arrow, bpath] (05) -- (06);
	\draw[arrow] (05) -- (09);
	
	\node[gate, xuv, below = 12pt of 15] (16) {\scriptsize $\wedge$};
	\node[gate, xuv, below = 12pt of 35] (19) {\scriptsize $0$};
	\node[gate, xuv, below = 12pt of 06] (36) {\scriptsize $\wedge$};
	\node[gate, xuv, below = 12pt of 09] (39) {\scriptsize $0$};
	
	\draw[arrow, bpath] (15) -- (16);
	\draw[arrow] (15) -- (19);
	\draw[arrow] (35) -- (36);
	\draw[arrow] (35) -- (39);
	\draw[arrow, bpath] (06) -- (16);
	\draw[arrow] (06) -- (36);
	\draw[arrow] (09) -- (19);
	\draw[arrow] (09) -- (39);
	
	\node[gate, below = 12pt of 16] (17) {\scriptsize $\vee$};
	\node[gate, left = 10pt of 17] (26) {\scriptsize $\vee$};
	\node[gate, right = 10pt of 17] (18) {\scriptsize $\vee$};
	
	\draw[arrow] (16) -- (17);
	\draw[arrow] (16) -- (26);
	\draw[arrow] (16) -- (18);
	
	\node[gate, xuv, below left = 12pt and 5pt of 17] (27) {\scriptsize $1$};
	\node[gate, xuv, below right = 12pt and 5pt of 17] (28) {\scriptsize $1$};
	
	\draw[arrow] (17) -- (27);
	\draw[arrow] (18) -- (28);
	\draw[arrow] (26) -- (28);
	\draw[arrow] (26) -- (27);
	
	\node[gate, below = 12pt of 36] (37) {\scriptsize $\vee$};
	\node[gate, left = 10pt of 37] (46) {\scriptsize $\vee$};
	\node[gate, right = 10pt of 37] (38) {\scriptsize $\vee$};
	
	\draw[arrow] (36) -- (37);
	\draw[arrow] (36) -- (46);
	\draw[arrow] (36) -- (38);
	
	\node[gate, xuv, below left = 12pt and 5pt of 37] (47) {\scriptsize $1$};
	\node[gate, xuv, below right = 12pt and 5pt of 37] (48) {\scriptsize $1$};
	
	\draw[arrow] (37) -- (47);
	\draw[arrow] (38) -- (48);
	\draw[arrow] (46) -- (48);
	\draw[arrow] (46) -- (47);
	
	\begin{scope}[on background layer]
	\coordinate (A1) at (05);
	\coordinate (A2) at (15);
	\coordinate (A3) at (16);
	\coordinate (A4) at (39);
	\coordinate (A5) at (09);
	\coordinate (B1) at (16);
	\coordinate (B2) at (26);
	\coordinate (B3) at (27);
	\coordinate (B4) at (28);
	\coordinate (B5) at (18);
	\coordinate (C1) at (36);
	\coordinate (C2) at (46);
	\coordinate (C3) at (47);
	\coordinate (C4) at (48);
	\coordinate (C5) at (38);
	\fill[fill=gray,opacity=0.15]
	($(A1) + (0,.1)$) -- ($(A2) + (-.1,.1)$) -- ($(A3) + (-.1,0)$) -- ($(A4) + (.1,0)$) -- ($(A5) + (.1,.1)$);
	\fill[fill=gray,opacity=0.15]
	($(B1) + (0,.1)$) -- ($(B2) + (-.1,.1)$) -- ($(B3) + (-.1,0)$) -- ($(B4) + (.1,0)$) -- ($(B5) + (.1,.1)$);
	\fill[fill=gray,opacity=0.15]
	($(C1) + (0,.1)$) -- ($(C2) + (-.1,.1)$) -- ($(C3) + (-.1,0)$) -- ($(C4) + (.1,0)$) -- ($(C5) + (.1,.1)$);
	
	\coordinate (A6) at (06);
	\draw[gray, draw opacity=.6, line width=1.2mm] (A1) -- (A2) -- (A3);
	\draw[gray, draw opacity=.6, line width=1.2mm] (A1) -- (A6) -- (A3);
	\end{scope}
	
	\end{tikzpicture}
	\hspace*{\fill}
	
	\caption{Two trees $T_1, T_2$ and the tree-shaped circuit $C_\sim(T_1,T_2)$ for bisimulation equivalence.}
	\label{fig:tree-bisimulation-circuit}
	
\end{figure}

We use a more syntactic definition of such circuits:
A {\em tree-shaped circuit} is a sequence of Boolean equations $\S = (x_i = \varphi_i)_{1 \le i \le n}$
where $\varphi_1, \dots, \varphi_n$ are Boolean formulas over the variables $x_1, \dots, x_n$
such that the graph 
\begin{equation} \label{tree-T_S}
T_\S = (\{x_1, \dots, x_n\}, \{ (x_i,x_j) \mid x_j \text{ occurs in } \varphi_i \})
\end{equation}
is a tree with root $x_1$.
This implies that there are no cyclic definitions in $\S$ and that no variable $x_k$ appears in two distinct
formulas $\varphi_i$ and $\varphi_j$ ($i \neq j$). 
The {\em size} of $\S$ is defined as the sum of the sizes of all formulas $\varphi_i$.
The {\em width} of $\S$ is defined as the maximal number of occurrences of a variable $x_j$ in a formula $\varphi_i$.
An example of a tree-shaped circuit of width two is given by the formulas
\eqref{bisi-formula} for bisimulation.
We can view $\S$ as an ordinary Boolean circuit by taking the disjoint union of the formula trees of the $\varphi_i$
and then merging all $x_i$-labelled leaves with the root of the formula tree of $\varphi_i$.
For example, the tree-shaped circuit $C_\sim(T_1,T_2)$ for bisimulation equivalence can be regarded as a tree-shaped circuit of width 2,
which can be computed in logspace from $T_1$ and $T_2$. The main goal of this section is to show
that the circuit value problem restricted to tree-shaped circuits of bounded width belongs to logspace.

\begin{figure}
\begin{tikzpicture}[semithick, scale=0.8, every node/.style={scale=0.8, circle, draw, inner sep = 0, minimum size=.5cm}]

\node (x1) {\scriptsize $x_1$};
\node[below = 1cm of x1] (x3) {\scriptsize $x_3$};
\node[left = 2cm of x3] (x2) {\scriptsize $x_2$};
\node[right = 2cm of x3] (x4) {\scriptsize $x_4$};
\node[below left = 1cm and .3cm of x2] (x5) {\scriptsize $x_5$};
\node[below right = 1cm and .3cm of x2] (x6) {\scriptsize $x_6$};
\node[below left = 1cm and .3cm of x3] (x7) {\scriptsize $x_7$};
\node[below right = 1cm and .3cm of x3] (x8) {\scriptsize $x_8$};
\node[below left = 1cm and .3cm of x4] (x9) {\scriptsize $x_9$};
\node[below right = 1cm and .3cm of x4] (x10) {\scalebox{.8}{\scriptsize $x_{10}$}};

\draw (x2) -- ++(-1cm,0) -- (x1);
\draw (x2) -- (x3);
\draw (x3) -- (x4);
\draw (x4) -- ++(1cm,0) -- (x1);
\draw (x5) -- ++(-.5cm,0) -- (x2);
\draw (x6) -- ++(.5cm,0) -- (x2);
\draw (x5) -- (x6);
\draw (x7) -- ++(-.5cm,0) -- (x3);
\draw (x8) -- ++(.5cm,0) -- (x3);
\draw (x7) -- (x8);
\draw (x9) -- ++(-.5cm,0) -- (x4);
\draw (x10) -- ++(.5cm,0) -- (x4);
\draw (x9) -- (x10);
\draw (x5) -- ++(-.4cm,-.5cm);
\draw (x5) -- ++(.4cm,-.5cm);
\draw (x6) -- ++(-.4cm,-.5cm);
\draw (x6) -- ++(.4cm,-.5cm);
\draw (x7) -- ++(-.4cm,-.5cm);
\draw (x7) -- ++(.4cm,-.5cm);
\draw (x8) -- ++(-.4cm,-.5cm);
\draw (x8) -- ++(.4cm,-.5cm);
\draw (x9) -- ++(-.4cm,-.5cm);
\draw (x9) -- ++(.4cm,-.5cm);
\draw (x10) -- ++(-.4cm,-.5cm);
\draw (x10) -- ++(.4cm,-.5cm);

\draw[u] (x1) -- (x2);
\draw[u] (x1) -- (x3);
\draw[u] (x1) -- (x4);
\draw[u] (x2) -- (x5);
\draw[u] (x2) -- (x6);
\draw[u] (x3) -- (x7);
\draw[u] (x3) -- (x8);
\draw[u] (x4) -- (x9);
\draw[u] (x4) -- (x10);

\draw[darkred] (x3) edge [bend left = 25] (x7);
\draw[darkred] (x3) edge [bend left = 5] (x7);
\draw[darkred] (x3) edge [bend right = 10] (x7);
\end{tikzpicture}

\caption{A tree-shaped circuit $\S$ where the underlying tree $T_\S$ is drawn in green.}
\label{fig:tree-shaped-circuit}
	
\end{figure}
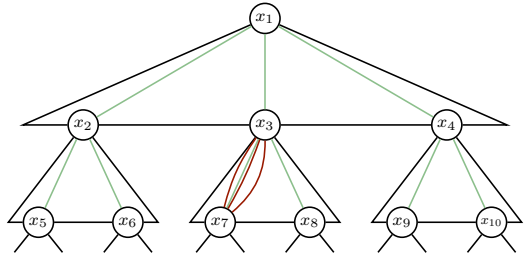

Note that a variable-free Boolean circuit can be represented by a relational structure with a binary
edge relation and unary relations for the labels $0$, $1$, $\neg$, $\wedge$, $\vee$.
Moreover, there is a formula of monadic-second order logic (MSO) expressing that a 
variable-free Boolean circuit evaluates to $1$. As a consequence, 
by Theorem~\ref{thm-elberfeld}, we can evaluate variable-free Boolean 
circuits of tree-width at most $k$ in logspace for every fixed $k$. 
Moreover, the complexity can be improved to $\NC^1$ if we also provide
for the input circuit a bounded width tree decomposition in term representation. 
For this, one stores a tree decomposition $(T,\beta)$ by an expression for the node-labelled
tree $T$, where every node $t \in V(T)$ is labelled by the bag $\beta(t)$.
Note that $\beta(t)$ is a set of gates of the circuit,
and these gates are stored by their addresses.
To sum up, we have:

\begin{thmC}[\cite{ElberfeldJT10,ElberfeldJT12}] \label{theo-coro-elberfeld}
	For every fixed $k \in \N$, the circuit value problem restricted to circuits of tree-width at most $k$ 
	can be solved in logspace. If in addition to the input circuit $C$ a width-$k$ tree decomposition of $C$ in term representation is given, then
	the circuit value problem can be solved in $\NC^1$.
\end{thmC}

\subsubsection{(Un)related restricted circuit classes}
We cannot directly apply Theorem~\ref{theo-coro-elberfeld} to tree-shaped circuits
since neither their tree-width nor clique-width is bounded by the following lemma. Clique-width
is another graph measure defined by so called $k$-expressions, see e.g. \cite{KaminskiLM09} for a survey. 
It is easy to see that Theorem~\ref{theo-coro-elberfeld}
can be generalized to circuits of clique-width $k$, provided a $k$-expression for the circuit is part of the input.

\begin{lem}
	For $n \ge 1$, let $T_n$ be the tree of size $n+1$ whose root has exactly $n$ children.
	The set of circuits $\{ C_\sim(T_n,T_n) \mid n \ge 1 \}$ has unbounded tree-width and unbounded clique-width.
\end{lem}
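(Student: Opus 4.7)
The plan is to unpack $C_\sim(T_n,T_n)$ from \eqref{bisi-formula} and exhibit, inside its underlying undirected graph, a large $K_{n,n}$-minor (for tree-width) together with a $1$-subdivision of $K_{n,n}$ as an induced subgraph (for clique-width).

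First I would describe the circuit explicitly. Let $r$ denote the root of $T_n$ and $\ell_1,\dots,\ell_n$ its children. Instantiating \eqref{bisi-formula} at the pair $(r,r)$ gives, up to associativity of $\wedge$ and $\vee$, one top AND-gate, $2n$ OR-gates at depth one, namely $R_i=\bigvee_{j=1}^n x_{\ell_i,\ell_j}$ for $i\in\{1,\dots,n\}$ and $C_j=\bigvee_{i=1}^n x_{\ell_i,\ell_j}$ for $j\in\{1,\dots,n\}$, and $n^2$ constant $1$-gates $x_{\ell_i,\ell_j}$ at depth two, each of which is the input of exactly the two OR-gates $R_i$ and $C_j$.

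The key structural observation is then that deleting the top AND-gate yields an induced subgraph $H_n$ of $C_\sim(T_n,T_n)$ isomorphic to the $1$-subdivision $S(K_{n,n})$ of the complete bipartite graph, with bipartition $\{R_1,\dots,R_n\}\uplus\{C_1,\dots,C_n\}$ and subdivision vertex $x_{\ell_i,\ell_j}$ placed on the edge $R_iC_j$. Contracting every subdivision vertex turns $H_n$ back into $K_{n,n}$, so $K_{n,n}$ is a minor of $C_\sim(T_n,T_n)$; since tree-width is minor-monotone and $\mathrm{tw}(K_{n,n})=n$, we obtain $\mathrm{tw}(C_\sim(T_n,T_n))\ge n$, which settles the tree-width half.

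For clique-width I would use that $\mathrm{cw}$ does not increase under taking induced subgraphs (any $k$-expression for a graph $G$ restricts to a $k$-expression for the induced subgraph spanned by any subset of $V(G)$), so it suffices to show that $\mathrm{cw}(S(K_{n,n}))$ is unbounded in $n$. This follows from the known link between the clique-width of a $1$-subdivision and the tree-width of the underlying graph (Courcelle--Olariu, Gurski--Wanke): if the class $\{S(K_{n,n})\}_{n\ge 1}$ had bounded clique-width, then so would the tree-width of the class $\{K_{n,n}\}_{n\ge 1}$, contradicting $\mathrm{tw}(K_{n,n})=n$.

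The main obstacle is the clique-width half. The tree-width bound is immediate once the $S(K_{n,n})$-structure is identified and minor-monotonicity is invoked, but the clique-width lower bound rests on a non-elementary fact about subdivisions. If a direct in-paper argument is preferred to a citation, one would instead analyse a hypothetical $k$-labelled clique-width expression for $S(K_{n,n})$ and track, at each disjoint-union step, how many distinct labels are forced by the need to eventually connect, for every $R_i$, its $n$ private subdivision neighbours to $n$ distinct targets $C_1,\dots,C_n$, concluding that $k\to\infty$ as $n\to\infty$.
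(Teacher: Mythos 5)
Your proposal is correct and follows essentially the same route as the paper: both obtain the tree-width bound from the $K_{n,n}$ topological minor (your contraction of the subdivision vertices is exactly the paper's ``dissolving'' of the degree-2 constant gates), and both derive unbounded clique-width from the Gurski--Wanke theorem relating clique-width to tree-width in the absence of large complete bipartite subgraphs. The only organizational difference is that you pass to the induced subgraph $S(K_{n,n})$ and invoke hereditariness of clique-width under induced subgraphs, whereas the paper applies the same theorem directly to $C_\sim(T_n,T_n)$ after verifying that it contains no $K_{3,3}$ subgraph; both variants are sound.
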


\begin{proof}	
	The underlying undirected graph of $C_\sim(T_n,T_n)$ contains the complete bipartite graph $K_{n,n}$
	as a (topological) minor. This can be seen by removing the output gate and dissolving all constant gates
	({\em dissolving} a node of degree 2 means deleting it and connecting its two neighbors).	
	Figure \ref{fig:unbounded-width} shows an example for $n=3$. Since the tree-width of $K_{n,n}$ is $n$ 
	and the tree-width of every minor of a graph $G$ is bounded by the tree-width of $G$, it follows that 
	the tree-width of $C_\sim(T_n,T_n)$ is at least $n$.
	
	It is known that a set of graphs which has bounded clique-width and for which there exist only finitely many $n$ such that the
	bipartite graph $K_{n,n}$ is contained as a subgraph (not only minor) also has bounded tree-width \cite{GurskiW00}.
	We claim that for each $n \ge 1$ the undirected graph of $C_\sim(T_n,T_n)$ does not contain $K_{3,3}$ as a subgraph,
	which implies that $\{ C_\sim(T_n,T_n) \mid n \ge 1 \}$ also has unbounded clique-width (note that
	$C_\sim(T_n,T_n)$ contains a $K_{2,2}$, i.e., a cycle on four nodes). Note that in $C_\sim(T_n,T_n)$ for every simple path of four nodes,
	one of the four nodes has degree $2$ (these are the gates in the middle layer of Figure \ref{fig:unbounded-width}). 
	But this is not possible in a $K_{3,3}$.
\end{proof}

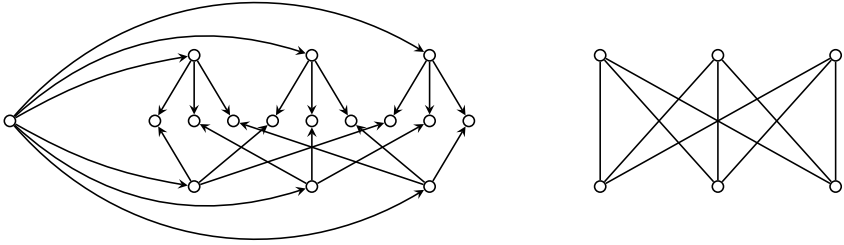
\begin{figure}[t]
	
	\centering
	
	\begin{tikzpicture}
	
	\node[gate] (11) {};
	\node[gate, right = 10pt of 11] (12) {};
	\node[gate, right = 10pt of 12] (13) {};
	\node[gate, right = 10pt of 13] (21) {};
	\node[gate, right = 10pt of 21] (22) {};
	\node[gate, right = 10pt of 22] (23) {};
	\node[gate, right = 10pt of 23] (31) {};
	\node[gate, right = 10pt of 31] (32) {};
	\node[gate, right = 10pt of 32] (33) {};
	
	\node[gate, above = 20pt of 12] (1-) {};
	\node[gate, above = 20pt of 22] (2-) {};
	\node[gate, above = 20pt of 32] (3-) {};
	
	\node[gate, below = 20pt of 12] (-1) {};
	\node[gate, below = 20pt of 22] (-2) {};
	\node[gate, below = 20pt of 32] (-3) {};
	
	\node[gate, left = 50pt of 11] (o) {};

	\draw[arrow] (1-) -- (11);
	\draw[arrow] (1-) -- (12);
	\draw[arrow] (1-) -- (13);
	\draw[arrow] (2-) -- (21);
	\draw[arrow] (2-) -- (22);
	\draw[arrow] (2-) -- (23);
	\draw[arrow] (3-) -- (31);
	\draw[arrow] (3-) -- (32);
	\draw[arrow] (3-) -- (33);
	
	\draw[arrow] (-1) -- (11);
	\draw[arrow] (-2) -- (12);
	\draw[arrow] (-3) -- (13);
	\draw[arrow] (-1) -- (21);
	\draw[arrow] (-2) -- (22);
	\draw[arrow] (-3) -- (23);
	\draw[arrow] (-1) -- (31);
	\draw[arrow] (-2) -- (32);
	\draw[arrow] (-3) -- (33);
	
	\draw[arrow] (o) edge [bend right = 10] (-1);
	\draw[arrow] (o) edge [bend right = 30] (-2);
	\draw[arrow] (o) edge [bend right = 40] (-3);
	
	\draw[arrow] (o) edge [bend left = 10] (1-);
	\draw[arrow] (o) edge [bend left = 30] (2-);
	\draw[arrow] (o) edge [bend left = 40] (3-);

	\node[gate, right = 30pt of 33, draw=none] (11') {};
	\node[gate, right = 10pt of 11', draw=none] (12') {};
	\node[gate, right = 10pt of 12', draw=none] (13') {};
	\node[gate, right = 10pt of 13', draw=none] (21') {};
	\node[gate, right = 10pt of 21', draw=none] (22') {};
	\node[gate, right = 10pt of 22', draw=none] (23') {};
	\node[gate, right = 10pt of 23', draw=none] (31') {};
	\node[gate, right = 10pt of 31', draw=none] (32') {};
	\node[gate, right = 10pt of 32', draw=none] (33') {};
	
	\node[gate, above = 20pt of 12'] (1-') {};
	\node[gate, above = 20pt of 22'] (2-') {};
	\node[gate, above = 20pt of 32'] (3-') {};
	
	\node[gate, below = 20pt of 12'] (-1') {};
	\node[gate, below = 20pt of 22'] (-2') {};
	\node[gate, below = 20pt of 32'] (-3') {};
	
	\draw[arrow,-] (1-') -- (-1');
	\draw[arrow,-] (1-') -- (-2');
	\draw[arrow,-] (1-') -- (-3');
	\draw[arrow,-] (2-') -- (-1');
	\draw[arrow,-] (2-') -- (-2');
	\draw[arrow,-] (2-') -- (-3');
	\draw[arrow,-] (3-') -- (-1');
	\draw[arrow,-] (3-') -- (-2');
	\draw[arrow,-] (3-') -- (-3');
	\end{tikzpicture}
	\caption{The circuit $C_\sim(T_3,T_3)$ for bisimulation equivalence.
		Removing the output gate on the left and dissolving the constant gates in the middle layer shows that $K_{3,3}$ is
		a minor of $C_\sim(T_3,T_3)$.}
	\label{fig:unbounded-width}
	
\end{figure}

Let us also compare tree-shaped circuits of constant width with a class of circuits that is presented in
\cite[page 5]{GrGRHeLa12}. For the latter, the authors require that for every path in the circuit
the product of the fan-outs (i.e., indegrees, since we direct circuits towards the input gates) of the gates on the path is bounded polynomially by the circuit size.
It is shown that circuits with this property can be evaluated in logspace. Note that 
tree-shaped circuits of constant width do not have this path property from 
\cite{GrGRHeLa12}. On the other hand, the circuits from  \cite{GrGRHeLa12} 
can have nodes with large fan-out, which is not possible for  tree-shaped circuits of constant width.
Hence, the two circuit classes are incomparable.

\subsubsection{Evaluating tree-shaped circuits given in pointer representation}

Before we show how to evaluate tree-shaped circuits of bounded width in logspace,
let us briefly recall the standard logspace algorithm which performs a depth-first traversal of a tree $T$.
It stores (i) a pointer to the current node of $T$ using $O(\log |T|)$ bits and 
(ii) a single bit, indicating the current direction of the traversal (up or down).
We will call this procedure {\em logspace depth-first traversal}.
If $T$ is a formula then one can evaluate the formula using the logspace algorithm above
by additionally storing the value of the current subformula.

Next we introduce some notions. Recall that the size of the subtree rooted in a node $u$ is denoted by $|u|$.
We say that a node $u$ is {\em heavy} if for all siblings $v$ of $u$ we either have
(i) $|u| > |v|$ or (ii) $|u|=|v|$ and $u < v$ (where $<$ denotes some fixed order on the nodes of the tree).
Otherwise a node is called {\em light}.
Notice that the root is heavy and that every inner node has exactly one heavy child.
Using the logspace depth-first traversal algorithm we can compute in logspace
for each node its size and determine whether it is heavy.
Note that every path in a tree contains at most $O(\log n)$ light nodes.

By the following result, tree-shaped circuits of bounded width can be evaluated in logspace (take $m = O(1)$).
Also note that a tree-shaped circuit of width $m=1$ is a tree and therefore can be evaluated in logspace.

\begin{thm}
	\label{thm:eval-logspace}
	A given tree-shaped circuit $\S = (x_i = \varphi_i)_{1 \le i \le n}$ of width $m \geq 2$ 
	can be evaluated in space $O(\log s + \log n \cdot \log m)$, where
	$s = \max \{|\varphi_i| \mid 1 \leq i \leq n \}$ is the maximal size of one of the formulas $\varphi_i$. In particular, 
	for every fixed $m \in \N$, there is a logspace algorithm which evaluates a given tree-shaped circuit of width at most $m$.
\end{thm}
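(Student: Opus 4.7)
I combine a heavy-path decomposition of the underlying tree $T_\S$ with Buss-style $O(\log s)$-space formula evaluation, sharing the formula-evaluation workspace across recursion levels while keeping only a compact resume token per level on the stack.

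First, compute the heavy/light classification of each node in $T_\S$ in logspace, by a depth-first traversal of the pointer-represented tree computing subtree sizes as described earlier in the paper. Decomposing $T_\S$ into maximal heavy chains, any root-to-leaf path in $T_\S$ crosses at most $O(\log n)$ light edges.

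Next, define a procedure $\mathrm{EvalChain}(x_i)$ that returns $v(x_i)$ by iterating bottom-up along the heavy chain $x_i = u_0, u_1, \ldots, u_k$ starting at $x_i$ (so $u_k$ is a leaf of $T_\S$). First descend to $u_k$ via a cursor in $T_\S$, evaluate the constant formula $\varphi_{u_k}$ in $O(\log s)$ scratch by Buss-style formula evaluation, and obtain an initial value $v$. Then, for $j = k-1, k-2, \ldots, 0$, evaluate $\varphi_{u_j}$ by Buss-style evaluation, substituting $v$ at occurrences of the heavy child $u_{j+1}$ and, for each encountered light-child variable $c$, recursively calling $\mathrm{EvalChain}(c)$; update $v$ with the resulting value of $\varphi_{u_j}$. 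Return $v$.

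Since recursive calls occur only on light children, the recursion depth is $O(\log n)$. For space, the point is that at any moment only one Buss walk is \emph{active}, namely at the deepest recursion level; it uses $O(\log s)$ scratch that is \emph{shared} across levels. Before recursing into a light child, the current level discards its Buss state and pushes only a tiny resume token onto the stack; the recursive call then reuses the scratch. Upon return, the resuming level re-initializes Buss on $\varphi_{u_j}$ and walks forward to the saved occurrence, re-invoking further recursive calls to re-evaluate any preceding variable values as it goes, and finally plugs in the value returned by the completed recursive call. With care, the resume token can be made to fit in $O(\log m)$ bits: it records the occurrence index of the paused light-child variable within its parent formula, and the identity of the variable itself is recovered via the $T_\S$-cursor (which pops back to $u_j$ on return). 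Combining the shared $O(\log s)$ scratch with $O(\log n)$ stack frames of $O(\log m)$ each yields the claimed total space $O(\log s + \log n \log m)$.

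The main technical obstacle is twofold: (i) verifying that per-level bookkeeping genuinely fits in $O(\log m)$ bits, which requires exploiting the $T_\S$-cursor to avoid explicitly storing the identity of the light-child being evaluated; and (ii) verifying that the re-walk upon resumption — which may itself spawn further recursive re-evaluations of earlier variable positions — does not cause the recursion depth to exceed $O(\log n)$. Both follow from the heavy-light invariant: each recursive descent still consumes a light edge in $T_\S$, so even with nested re-evaluations the stack cannot exceed $O(\log n)$ levels. The algorithm trades time for space (the repeated re-walks are potentially expensive), but the space bound is preserved.
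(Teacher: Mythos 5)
Your overall strategy --- heavy/light decomposition of $T_\S$, recursing only on light children so that the depth is $O(\log n)$, sharing a single $O(\log s)$ formula-evaluation workspace, and keeping only an $O(\log m)$-bit occurrence index per suspended level --- is exactly the paper's algorithm. The gap is in your resumption mechanism. The paper resumes the interrupted evaluation of $\varphi_{u_j}$ \emph{directly} at the located occurrence: the logspace depth-first (short-circuit) traversal of a formula tree has a state consisting only of the current position, a direction bit, and the value of the subformula just completed, so setting the state to (the $c$-th occurrence of $x_i$, ``up'', returned value) is a complete and correct restart. No values of previously visited leaves are needed, because an earlier sibling of an $\wedge$- (resp.\ $\vee$-) gate that evaluated to $0$ (resp.\ $1$) would have short-circuited the walk before it ever reached this occurrence.

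Your proposal instead re-initializes the walk at the root of $\varphi_{u_j}$ and ``re-invokes further recursive calls to re-evaluate any preceding variable values''. This step does not work within your own bookkeeping. Suppose the re-walk toward the saved occurrence $o$ (the $c$-th occurrence of $x_i$) meets an earlier light occurrence $o'$ of some variable $x_{i'}$ and recursively re-evaluates it. That call reuses the shared scratch, so the only persistent record of this level is its stack frame, and upon return the $T_\S$-cursor identifies $x_{i'}$, not $x_i$. Hence either (a) you forget the target $o$, in which case after finishing $o'$ you re-walk toward $o'$, continue, reach $o$ again without knowing you already hold its value, recurse on it once more, and loop forever; or (b) you keep the target on the stack, but then the frame must record the \emph{identity} of $x_i$ and not just the index $c$, which in general costs $\Theta(\log n)$ (or at best $\Theta(\log s)$) bits and inflates the stack to $\Theta(\log n \cdot \log n)$ --- already for constant $m$ this breaks the claimed logspace bound. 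Your point (ii) only bounds the recursion \emph{depth}; the problem is the width of a frame that is simultaneously the resumer of one call and the caller of another. The fix is simply to drop the re-walk: locate the $c$-th occurrence of $x_i$ by a purely syntactic scan of $\varphi_{u_j}$ and resume the short-circuit traversal there with the returned value, as the paper does. (Also, ``Buss-style'' evaluation is the $\NC^1$ algorithm; what is needed and used here is the elementary depth-first traversal with short-circuiting.) With that change your argument goes through.
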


\begin{proof}
	Let $\S = (x_i = \varphi_i)_{1 \le i \le n}$ be a tree-shaped circuit of width $m$ and let
	$s = \max \{|\varphi_i| \mid 1 \leq i \leq n \}$.
	Recall the definition of the tree $T_\S$ with node set $\{x_1, \ldots, x_n\}$ from \eqref{tree-T_S}.
	
	First of all, we label every node $x_i$ of $T_\S$ by (i) 
	the size $|x_i|$ of the subtree rooted in $x_i$ and (ii) a single bit indicating whether $x_i$ is a heavy node of $T_\S$. 
	This information can be computed in space $O(\log n)$ using a logspace depth-first traversal. 
	
	The circuit $\S$ is evaluated in a recursive way using a pointer to one of the nodes $x_1, \ldots, x_n$ (which needs space $O(\log n)$)
	and a stack of height $O(\log n \cdot \log m)$ as follows. Initially the pointer is set to the root $x_1$.
	Assume that $x_k$ is the heavy child of $x_1$ in $T_\S$.  Then, the pointer is moved to $x_k$
	without writing anything on the stack.
	Next, the subcircuit rooted at $x_k$ is evaluated recursively. 
	By induction, space $O(\log s + \log n \cdot \log m)$ is used for this.
	Once the algorithm returns from the recursion
	the pointer is back on $x_k$ and one can release the space. The value 
	of $x_k$ is stored on the stack, which now contains a single bit.
	Note that the algorithm knows (using the labeling computed before) that $x_k$
	is the heavy child of its parent node $x_1$. This information now triggers the evaluation
	of the Boolean formula $\varphi_1$. This is done by a logspace depth-first traversal
	of the Boolean formula tree of $\varphi_1$.
	Each time, this depth-first traversal of $\varphi_1$ arrives at a leaf node, the following is done:
	\begin{itemize}
		\item If the leaf is labelled with the variable $x_k$ (the heavy child of $x_1$ in $T_\S$), then the value of $x_k$ is retrieved from the stack and 
		the algorithm continues the evaluation of $\varphi_1$.
		\item If the leaf is a light child labelled with the variable $x_i \neq x_k$, then let $1 \leq c \leq m$ such that 
		the leaf corresponds to the $c$-th occurrence of $x_i$ in $\varphi_1$. The algorithm stores $c$ on the stack (which 
		needs space $O(\log m)$) and continues recursively with the evaluation of the subcircuit
		rooted at $x_i$. Once it comes back from the recursion, the pointer is back on $x_i$. The algorithm sees that $x_i$ is a light child
		of $x_1$. Using this information and the number $c$ stored on the stack, it continues the evaluation of $\varphi_1$ at
		the right position in $\varphi_1$.
	\end{itemize}
	Note that in the second case, we have $|x_i| \leq |x_1|/2$. This implies that the number of bits stored on the stack is bounded
	by $O(\log n \cdot \log m)$. The total space consumption is therefore $O(\log n + \log s + \log n \cdot \log m) = O(\log s + \log n \cdot \log m)$
	(since we assume $m \geq 2$).
\end{proof} 

Note that we do not assume $m$ to be a constant in Theorem~\ref{thm:eval-logspace}. In particular, since $k$, $n$, and $m$ are all bounded
by the size of the tree-shaped circuit (which is the sum of the sizes of the formulas $\varphi_i$), it follows that a tree-shaped 
circuit of size $N$ can be evaluated in space $O(\log^2 N)$. 
For the special case $m = O(1)$ (which is used for the bisimulation problem) we give 
an alternative proof below. This proof prepares our handling
of trees in term representation in the next section (proof of Theorem \ref{thm:eval-nc1}).

\label{alternative}

\begin{proof}[Alternative proof of Theorem~\ref{thm:eval-logspace} for constant width.]
	Let $N$ be the size of $\S$ and $m \in O(1)$ its width.
	We will construct from $\S$ in logspace an equivalent polynomially sized circuit with constant tree-width,
	which can be seen as a partial tree unfolding of the circuit corresponding to $\S$.
	By Theorem~\ref{theo-coro-elberfeld} the resulting circuit can be evaluated in logspace.
	
	As before, we view the Boolean formulas $\varphi_i$ as node-labelled trees.
	For simplicity we assume that all $\varphi_i$ have at least size two, which can be
	ensured by replacing $\varphi_i$ by $\varphi_i \wedge 1$, so that $\S$ contains no ``chain rules''. Moreover, we assume that
	the trees $\varphi_i$ have disjoint node sets.
	Let $x_i$ be an inner node of $T_\S$ whose heavy child is $x_k$.
	Inductively we define a circuit $C_i$ as follows: We take the formula $\varphi_i$, viewed as a tree,
	and merge all $x_k$-labelled leaves into a single node. Note that $x_j$-labelled leaves for $j \neq k$ are not merged.
	Then we insert into each leaf labelled by some variable $x_j$ a copy of the circuit $C_j$.
	Finally let $C$ be $C_1$, which clearly evaluates to the same truth value as $\S$.
	Figure~\ref{fig:partial-unfolding} shows the circuit resulting from the circuit $C_\sim(T_1, T_2)$ on the right in Figure~\ref{fig:tree-bisimulation-circuit}.
	
	Note that the number of copies of $C_i$ in $C$ is bounded by $m^{\ell_i}$ where $\ell_i$ is the number
	of light nodes on the path from $x_1$ to $x_i$ in $T_\S$. Since $\ell_i \leq \log n$,
	$C$ has size at most $m^{\log n} \cdot N$, which is bounded by $n^{O(1)} \cdot N$ since 
	$m$ is a constant.
	
	Furthermore $C$ can be computed in logspace from $\S$.
	To make this explicit, we introduce a naming scheme for the gates in $C$.
	The set \label{addr} $\mathrm{Addr}(x_i)$ (addresses for the copies of $x_i$) contains finite words over the alphabet $\{1, \dots, m\}$ defined inductively:
	We set $\mathrm{Addr}(x_1) = \{ \varepsilon \}$ for the root $x_1$.
	If $x_j$ is the heavy child of $x_i$ in $T_\S$, we set $\mathrm{Addr}(x_j) = \mathrm{Addr}(x_i)$.
	If $x_j$ is a light child of $x_i$ in $T_\S$, we set $\mathrm{Addr}(x_j) = \mathrm{Addr}(x_i) \cdot \{ 1, \dots, k \}$
	where $k \le m$ is the number of occurrences of $x_j$ in $\varphi_i$.
	The sets $\mathrm{Addr}(x_i)$ contain words of length $O(\log n)$ over the constant sized alphabet $\{1, \dots, m\}$. Moreover, given a word
	of length $O(\log n)$ over $\{1, \dots, m\}$ we can easily check in logspace whether it belongs to $\mathrm{Addr}(x_i)$ by traversing the path
	from $x_i$ to the root of $T_\S$.
	Now we can define the circuit $C$ over the gate set
	\[
	V(C) =  \bigcup_{i=1}^n \{ (u,a) \mid a \in \mathrm{Addr}(x_i), ~ u \text{ is a non-input node of the tree $\varphi_i$}\} .
	\]
	For every edge $(u,u')$ of $\varphi_i$ ($1 \le i \le n$) and
	$a \in \mathrm{Addr}(x_i)$, we add the following edges to $E(C)$:
	\begin{enumerate} \label{edges}
		\item If $u'$ is a non-input gate, add the edge $((u,a),(u',a))$.
		\item If $u'$ is labelled by $x_j$, let $v$ be the root of $\varphi_j$ and add the edge
		\begin{enumerate}
			\item $((u,a),(v,a \cdot d))$ if $x_j$ is a light child of $x_i$ and $u'$ is its $d$-th occurrence of $x_j$ in $\varphi_i$,
			\item $((u,a),(v,a))$ if $x_j$ is the heavy child of $x_i$. \label{heavy-edge}
		\end{enumerate}
	\end{enumerate}
	The labels of the gates in $C$ are inherited from the formula trees $\varphi_i$.
	Note that a pair $(u,a)$ from $V(C)$ can be stored in logspace. Moreover, whether a pair belongs to $V(C)$
	and whether a pair of nodes from $V(C)$ belongs to $E(C)$ can be checked in logspace. Hence, the circuit
	$C$ can be constructed in logspace.	
	
	Finally we show that the tree-width of $C$ is at most $2$.
	Let us call an edge in $C$ {\em crossing}
	if it is of type \ref{heavy-edge} in the above definition of $C$
	and $u'$ is the $d$-th occurrence of $x_j$ in $\varphi_i$ for some $d > 1$.
	In Figure~\ref{fig:partial-unfolding} the crossing edges are drawn as dotted lines.
	Notice that for each address $a$ there is at most one crossing edge $((u,a),(v,a)) \in E(C)$.
	Consider the subgraph $T$ of $C$ obtained by removing all crossing edges.
	The resulting subgraph $T$ is indeed a tree, on which we define a tree decomposition $(T,\beta)$
	\label{tree-width-2-deco} of $C$.
	First let $(T,\beta')$ be the ``canonical'' tree decomposition of $T$ of width 1,
	i.e. $\beta'(u,a)$ contains $(u,a)$ and its parent node in $T$ (if existent).
	To obtain $\beta$ from $\beta'$ we add for each address $a$ the crossing edge $((u,a),(v,a))$ (if existent)
	to all bags $\beta(x,a)$.
	\label{bags}
	One can verify that $(T,\beta)$ is indeed a valid tree decomposition,
	which has width at most 2.
\end{proof} 

Theorem~\ref{thm:eval-logspace} applied to the tree-shaped circuit defined by \eqref{bisi-formula}
immediately yields Theorem~\ref{thm-bisi-logspace} for the problem $\Bisim(\mathrm{Trees})$ when
trees are given in pointer representation.

\begin{figure}[t]
	\centering
	
	\begin{tikzpicture}
	
	\tikzset{level distance=30pt, sibling distance=10pt}
	\tikzset{every tree node/.style={gate}}
	\Tree
	[. \node {$\wedge$};
	[. \node {$\vee$};
	[. \node (b) {$\wedge$};
	[. \node {$\vee$};
	[. \node (d) {$1$}; ]
	[. \node {$1$}; ]
	]
	[. \node (c) {$\vee$}; ]
	[. \node {$\vee$};
	[. \node {$1$}; ]
	]
	]
	[. \node {0}; ]
	]
	[. \node {$\vee$};
	[. \node {$\wedge$};
	[. \node {$\vee$};
	[. \node (f) {$1$}; ]
	[. \node {$1$}; ]
	]
	[. \node (e) {$\vee$}; ]
	[. \node {$\vee$};
	[. \node {$1$}; ]
	]
	]
	[. \node {0}; ]
	]
	[. \node (a) {$\vee$};
	[. \node {$\wedge$};
	[. \node {$\vee$};
	[. \node (h) {$1$}; ]
	[. \node {$1$}; ]
	]
	[. \node (g) {$\vee$}; ]
	[. \node {$\vee$};
	[. \node {$1$}; ]
	]
	]
	]
	[. \node {$\vee$};
	[. \node {0}; ]
	[. \node {0}; ]
	]
	]
	
	\draw[arrow,densely dotted] (a) .. controls +(south west:1) and +(north east:1) .. (b.20);
	\draw[arrow,densely dotted] (c) .. controls +(south west:.6) and +(north east:.8) .. (d);
	\draw[arrow,densely dotted] (e) .. controls +(south west:.6) and +(north east:.8) .. (f);
	\draw[arrow,densely dotted] (g) .. controls +(south west:.6) and +(north east:.8) .. (h);
	
	\end{tikzpicture}
	
	\caption{The partial unfolding of the tree-shaped circuit $C_\sim(T_1,T_2)$ from Figure~\ref{fig:tree-bisimulation-circuit}.}
	\label{fig:partial-unfolding}
\end{figure}
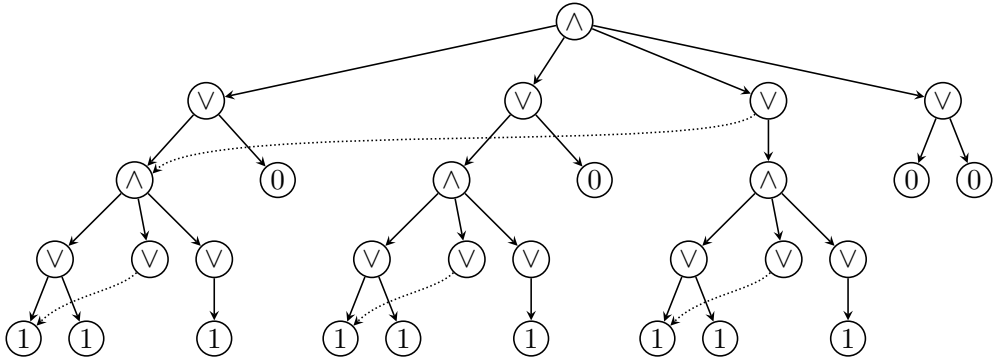

\subsubsection{Evaluating tree-shaped circuits given in term representation}
 \label{terms}

In this section we will prove Theorem~\ref{thm:bisi-nc1} for the problem $\Bisim(\mathrm{Trees})$ when
trees are given in term representation.
For that we prove an $\NC^1$-version of Theorem~\ref{thm:eval-logspace} for the case that $m$ is constant,
which uses the $\NC^1$-part of Theorem~\ref{theo-coro-elberfeld}, where a tree decomposition
in term representation is part of the input.
Here we require that the tree-shaped circuit $\S= (x_i = \varphi_i)_{1 \leq i \leq n}$ must be given together with the underlying tree $T_\S$ in term representation.
We also assume that the Boolean formulas $\varphi_i$ are given in term representation.
Recall the ancestor representation of a tree from Section~\ref{sec-prel} and that it can be transformed into the term
representation of a tree in  $\TC^0$ and vice versa.

\begin{thm}
	\label{thm:eval-nc1}
	For every fixed $m \in \N$, one can evaluate in $\NC^1$
	a given tree-shaped circuit $\S= (x_i = \varphi_i)_{1 \leq i \leq n}$ (with all 
	$\varphi_i$ given in term representation) of width at most $m$ that 
	is given together with the term representation of the tree $T_\S$. 
\end{thm}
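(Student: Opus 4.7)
The plan is to mimic the alternative proof of Theorem~\ref{thm:eval-logspace}, carrying out each construction step in $\DLOGTIME$-uniform $\TC^0$ and, crucially, producing a width-$2$ tree decomposition of the partially unfolded circuit in term representation. I then invoke the $\NC^1$ part of Theorem~\ref{theo-coro-elberfeld}; since $\TC^0 \subseteq \NC^1$, the overall computation stays in $\NC^1$.

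First I would compute in $\TC^0$ all auxiliary information about $T_\S$: for every node $x_i$, its depth, its subtree size $|x_i|$, and the heavy/light bit. In the term representation, subtree sizes correspond to matching bracket positions, which are $\TC^0$-computable as shown in \cite{ElberfeldJT12}; ancestor, child, and sibling predicates are likewise $\TC^0$-computable there. In the same phase I would, for every variable occurrence inside a formula $\varphi_i$, compute the index of its position among the occurrences of the same variable, and label every child in $T_\S$ as heavy or light. Because the formulas $\varphi_i$ are themselves given in term representation, the same toolkit applies to them.

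Next I would construct the partial unfolding $C$ from the alternative proof. Gates of $C$ are pairs $(u,a)$ where $u$ is a non-input node of some $\varphi_i$ and $a \in \mathrm{Addr}(x_i)$. Since $m$ is fixed, addresses are words of length at most $\log n$ over the constant alphabet $\{1,\dots,m\}$ and admit a short binary encoding. The predicate $a \in \mathrm{Addr}(x_i)$ reduces to comparing the successive symbols of $a$ with the sequence of light-child indices along the root-to-$x_i$ path in $T_\S$, which is a $\TC^0$ test once ancestor information is available; the edge relation of $C$ is a case distinction on the three edge types in the alternative proof and can be decided in $\TC^0$.

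In parallel I would emit a width-$2$ tree decomposition $(T,\beta)$ of $C$ in term representation. The underlying tree $T$ is obtained from $C$ by deleting the crossing edges, so that every non-root gate has a unique parent in $T$. Its term representation is a nested expansion of the term representations of the $\varphi_i$ governed by the addresses, and the $i$-th output bracket can be produced in $\TC^0$ by translating the position $i$ back to a pair (formula index, address) using iterated addition on binary-encoded prefix sums of the sizes $|\varphi_i|$ and on sibling positions in $T_\S$. Each bag $\beta(u,a)$ consists of $(u,a)$, its parent in $T$ (if any), and, if $a$ carries an incident crossing edge, the other endpoint of that edge; this is a constant-sized set and $\TC^0$-computable per node. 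The main obstacle is precisely this bookkeeping: one must show that the position-to-bag translation for the polynomially long output term is carried out correctly in $\DLOGTIME$-uniform $\TC^0$, which reduces to standard $\TC^0$ arithmetic on binary integers together with the address computations described above. Once $C$ and $(T,\beta)$ are both available in term representation, applying the $\NC^1$ half of Theorem~\ref{theo-coro-elberfeld} yields an $\NC^1$ evaluation algorithm for $\S$.
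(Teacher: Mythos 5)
Your proposal follows the paper's proof almost exactly: partially unfold $\S$ into the bounded-tree-width circuit $C$ from the alternative proof of Theorem~\ref{thm:eval-logspace}, show that $C$ together with a width-$2$ tree decomposition is $\TC^0$-computable, and finish with the $\NC^1$ half of Theorem~\ref{theo-coro-elberfeld}. The one place where you diverge is also the one place where you take on more work than necessary: you insist on emitting the tree decomposition directly in term representation, which forces you into position arithmetic for a polynomially long output string (prefix sums of subtree sizes of the unfolding, translation of output positions back to pairs $(u,a)$). This can be pushed through in $\TC^0$ --- subtree sizes in $C$ are counts of a $\TC^0$-definable descendant relation, and the required prefix sums are iterated additions --- but the paper avoids it entirely by computing the \emph{ancestor} representation of the decomposition tree (a purely local test: $(u,a)$ is an ancestor of $(u',a')$ iff $x_i \preceq x_j$, $a$ is a prefix of $a'$, and an occurrence condition holds) and then appealing to the $\TC^0$ interconvertibility of ancestor and term representations from \cite{ElberfeldJT12}, which is already set up in the preliminaries. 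I would recommend that route, since it removes exactly the step you flag as ``the main obstacle.'' One further caution on the bags: as literally stated, putting the crossing-edge endpoint only into the bags of gates \emph{incident} to that edge breaks the connectivity condition of a tree decomposition, because the two endpoints of a crossing edge are in general far apart in $T$; you must, as in the alternative proof, add \emph{both} endpoints of the crossing edge for address $a$ to \emph{every} bag $\beta(x,a)$ with that address (these form a connected heavy-path region of $T$), and that is still a constant-size, $\TC^0$-computable bag.
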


\begin{proof}
	Let $\S = (x_i = \varphi_i)_{1 \leq i \leq n}$ be a tree-shaped circuit of width at most $m$, where
	every $\varphi_i$ is given in term representation. Moreover, we assume to have the term (or ancestor)
	representation of the tree $T_\S$.
	We will show how to compute in $\TC^0$ the partial unfolding $C$ of $\S$ and the 
	tree decomposition of $C$ in ancestor representation from the alternative proof of Theorem \ref{thm:eval-logspace} on page \pageref{alternative}.
	
	Let $N$ be the size of $\S$.
	For each node $x_i$ of $T_\S$ we can compute in $\TC^0$ its depth (by counting ancestors) and
	the size of the subtree below $x_i$ (by counting descendants).
	Hence, we can also compute the heavy child of every inner node $x_i$ in $T_\S$.
	Additionally, if $x_i$ has a parent node $x_j$, we can compute the number $d(x_i) \in \{1, \dots, m\}$ of occurrences
	of $x_i$ in $\varphi_j$.
	We transform all formulas $\varphi_i$ of $\S$ into ancestor representation in $\TC^0$ where we can assume
	that the encodings of all nodes have length $O(\log N)$.
	Also we can ensure that all formulas have at least size two.
	
	Recall that every node $u$ of a formula $\varphi_i$ has multiple copies $(u,a)$ in $C$ where
	$a \in \mathrm{Addr}(x_i)$ is an address of length $O(\log n)$ defined on page \pageref{addr}.
	Given a string $a \in \{1, \dots, m\}^*$ of length $O(\log n)$ and a node $x_i$ we can verify in $\TC^0$
	whether $a \in \mathrm{Addr}(x_i)$:
	From the ancestor representation of $T_\S$ we can compute the sequence of all light nodes
	$x_{i_1} \prec x_{i_2} \prec \dots \prec x_{i_k}$ in $T_\S$ on the path from the root to $x_i$.
	This can be done by sorting all light ancestors of $x_i$ by their depth in ascending order.
	It is known that sorting $n$ numbers with $n$ bits each is in $\TC^0$ \cite{Vol99}.
	Then we have $a_1 \cdots a_k \in \mathrm{Addr}(x_i)$ if and only if $a_j \le d(x_{i_j})$ for all $j \in \{1, \dots, k\}$.
	Hence, we can also encode all gates of the partial unfolding $C$ by strings of length $O(\log N)$ and can compute $V(C)$ in $\TC^0$.
	With the previous preparation the edge relation $E(C)$ can be computed in $\AC^0$ using the definition on page \pageref{edges}.
	
	It remains to show that we can compute in $\TC^0$ the ancestor representation of the 
	width-2 tree decomposition $(T,\beta)$ of $C$ from page~\pageref{tree-width-2-deco}.
	We set $V(T) = V(C)$. Let $(u,a), (u',a') \in V(T)$ be nodes where $u$ (resp., $u'$) belongs to $\varphi_i$ (resp., $\varphi_j$).
	Then $(u,a)$ is an ancestor of $(u',a')$ in $T$ if and only if $x_i \preceq x_j$ in $T_\S$, $a$ is a prefix of $a'$ and the following holds:
	\begin{enumerate}
		\item If $x_i = x_j$, then $a = a'$ and $u \preceq u'$ in $\varphi_i$.
		\item If $x_i \prec x_j$ in $T_\S$, let $x_k$ be the unique child of $x_i$ which is an ancestor of $x_j$.
		\begin{enumerate}
			\item If $x_k$ is a light node, let $d \in \{1, \dots, m\}$ be the number in $a'$ at position $|a|+1$.
			Then the $d$-th occurrence of $x_k$ in $\varphi_i$ is a descendant of $u$.
			\item If $x_k$ is a heavy node, then the first occurrence of $x_k$ in $\varphi_i$ is a descendant of $u$.
		\end{enumerate}
	\end{enumerate}
	The last condition forbids those edges that were deleted when constructing the tree $T$ from the partial unfolding $C$
	(the dotted edges in Figure~\ref{fig:partial-unfolding}).
	The bag-function $\beta$ can be computed straightforwardly in $\TC^0$ using its definition on page~\pageref{bags}.
	By Theorem~\ref{theo-coro-elberfeld} we can evaluate $C$ in $\NC^1$, which concludes the proof.
\end{proof}

Finally we can apply Theorem~\ref{thm:eval-nc1} to prove Theorem~\ref{thm:bisi-nc1}:

\begin{proof}[Proof of Theorem~\ref{thm:bisi-nc1}]
	First we convert the term representations of $T_1$ and $T_2$ into ancestor representations in $\TC^0$.
	We can compute the tree-shaped circuit $\S$ corresponding to the circuit $C_\sim(T_1,T_2)$
	and an ancestor representation of $T_\S$ in $\TC^0$.
	The set of variables
	\[
    	\{ x_{u,v} \mid u \in V(T_1), \, v \in V(T_2), \, \depth(u) = \depth(v) \}
	\]
	can be clearly computed in $\TC^0$ since for a given node of a tree in ancestor representation
	one can compute its depth by counting ancestors.
	The term representations of the formulas $\varphi_{u,v}$ in \eqref{bisi-formula} can then be computed in $\AC^0$.
	The ancestor representation of $T_\S$ is also $\AC^0$-computable, since
	$x_{u,v}$ is an ancestor of $x_{u',v'}$ if and only if
	$u$ is an ancestor of $u'$ and $v$ is an ancestor of $v'$.
	By Theorem~\ref{thm:eval-nc1} we can evaluate $\S$ in $\NC^1$.
\end{proof}

\subsubsection{Lower Bounds} \label{bisi-lower-bound}

In this section we prove matching lower bounds
for the upper bounds from Theorem~\ref{thm-bisi-logspace} and ~\ref{thm:bisi-nc1}. 

\begin{thm} \label{thm-lower-bound-L-NC1}
	{\rm $\Bisim$(Unlabelled-Trees)} is \L-hard if the trees are given as pointer structures and
	$\NC^1$-hard if they are given in term representation (both with respect to many-one $\AC^0$-reductions).
\end{thm}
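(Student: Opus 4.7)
My plan is to prove both lower bounds by $\AC^0$-reductions from standard complete problems, exploiting the structural properties of bisimulation on unlabelled trees. The key observation that carries most of the weight for the logspace case is that two rooted unlabelled paths are bisimilar if and only if they have the same length.

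For the $\L$-hardness in pointer representation, I would reduce from the $\L$-complete problem of deciding whether two linked lists (presented as pointer structures with marked heads) have equal length. This problem is $\L$-hard under $\AC^0$-reductions because computing linked-list length is $\mathsf{FL}$-complete (and standard reductions from problems like $\mathsf{ORD}$ or deterministic reachability in functional graphs certify the hardness). The reduction to $\Bisim$(Unlabelled-Trees) is then immediate: each linked list is a rooted unlabelled path, and two such paths are bisimilar if and only if they have the same length, so the pointer encoding can be used verbatim.

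For the $\NC^1$-hardness in term representation, I would reduce from an $\NC^1$-complete tree problem such as isomorphism on rooted unordered unlabelled trees in term representation (which, as remarked after Theorem~\ref{thm:bisi-nc1}, is $\NC^1$-complete). Since bisimulation on unlabelled trees treats the children of a node as a set of bisimulation classes and thus forgets their multiplicities, bisimulation is strictly weaker than isomorphism, and the reduction must enrich the input. My planned gadget is a bottom-up \emph{fingerprinting} transformation $T \mapsto \widehat T$ which attaches to each child of each node a distinguishing tag subtree (concretely, a path whose length encodes positional or multiplicity information among siblings), designed so that $T_1 \cong T_2$ if and only if $\widehat{T_1} \sim \widehat{T_2}$. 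Correctness then follows by a routine induction on tree depth.

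The main obstacle is staying within the $\AC^0$-reduction budget. Inserting a path of computed length at a computed bracket offset is $\AC^0$-implementable, but choosing the right tag for each child seemingly demands a canonical sibling ordering in the style of Buss's tree canonicalization, which is inherently $\NC^1$ rather than $\AC^0$. I would address this either by opening up the canonicalization and interleaving its bracket-level sibling-comparison primitives directly with the tag-insertion step into a single $\AC^0$ circuit, or by avoiding canonicalization altogether via a reduction from the Boolean Formula Value Problem — encoding Boolean values as small non-bisimilar unlabelled gadgets, realizing conjunctive AND directly from the two-sided bisimulation condition, and realizing OR by exploiting asymmetry in the branching factors of the two input trees (since a single child on one side forces all children on the other side to be bisimilar to it, an effect that can be massaged into disjunction). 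Either route matches the upper bound of Theorem~\ref{thm:bisi-nc1}.
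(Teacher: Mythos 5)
Your logspace reduction takes a genuinely different (and in principle cleaner) route than the paper: the paper reduces from reachability on directed paths to bisimilarity of two \emph{edge-labelled} trees and then invokes a separate $\AC^0$ label-elimination lemma, whereas you go directly to unlabelled paths via the observation that two rooted unlabelled paths are bisimilar iff they have equal length. The weak point is your justification that length-equality of two linked lists is $\L$-hard: $\mathsf{FL}$-completeness of \emph{computing} a list's length does not by itself imply hardness of \emph{deciding equality} of two lengths. The claim is nevertheless true and easy to repair --- e.g.\ from path reachability: given a path $P$ and nodes $s,t$, delete the outgoing edge of $t$ and compare the length of the list starting at $s$ in the modified graph with that in the original; they agree iff $s$ does not reach $t$, and $\L$ is closed under complement. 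With that step made explicit, your $\L$-hardness argument is fine.

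The $\NC^1$ part has a genuine gap. Your first route (reducing from tree isomorphism by ``fingerprinting'' children) is circular: to tag the $i$-th copy of each isomorphism class among siblings you must first decide which siblings are isomorphic, which is the problem you are reducing from; no amount of interleaving with Buss-style canonicalization primitives keeps this in $\AC^0$. Your second route (from Boolean formula evaluation) is the right idea in spirit --- it is what the paper does --- but the mechanism you sketch does not compose. If you encode a subformula's truth value as a \emph{single} unlabelled tree being bisimilar to a fixed gadget $A$ (true) or $B$ (false), then a root with children $X,Y$ lands in one of \emph{three} bisimilarity classes $\{A,A\},\{A,B\},\{B,B\}$, so neither AND nor OR preserves a two-class encoding and the construction cannot be iterated up the formula. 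The key missing ingredient is the Jenner--K\"obler--McKenzie--Tor\'an gadget the paper uses: each subformula value is represented by a \emph{pair} of trees and the question whether they are bisimilar, with four-argument gadgets $T_\wedge(G_1,G_2,H_1,H_2)$ and $T_\vee(G_1,G_2,H_1,H_2)$ satisfying $G_1\sim H_1 \wedge G_2\sim H_2 \iff T_\wedge(G_1,G_2,H_1,H_2)\sim T_\wedge(H_1,H_2,G_1,G_2)$ and the analogous equivalence for $\vee$ with a permuted argument order. Two further points you do not address are also essential: the AND- and OR-gadgets must have the \emph{same} underlying tree shape (differing only in edge labels) so that the term representation of the output is $\AC^0$-computable from the balanced formula, and the passage from the labelled gadgets to unlabelled trees needs its own $\AC^0$ construction (subdividing edges and attaching distinguishing path gadgets, as in the paper's Lemma on label elimination). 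Without these, the $\NC^1$ lower bound is not established.
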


Before we prove Theorem~\ref{thm-lower-bound-L-NC1} let us first show the following lemma:

\begin{lem} \label{lemma-unlabeling-in-AC0}
	{\rm $\Bisim$(Trees)} is many-one $\AC^0$-reducible to {\rm $\Bisim$(Unlabelled-Trees)} in both term and pointer representation.
\end{lem}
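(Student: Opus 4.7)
The plan is to reduce $\Bisim$(Trees) to $\Bisim$(Unlabelled-Trees) by replacing each labelled edge by a bounded-size unlabelled gadget that encodes its label in the bisimulation class of the gadget, in a way that is purely local on the input representation. Let $A = \{a_1, \dots, a_k\}$ be the fixed edge-label alphabet. For $i \in \{1, \dots, k\}$ let $G_i$ denote the unlabelled path on $i$ nodes, and let $M$ denote the unlabelled path on $k+1$ nodes. Given a labelled tree $T$, I would build an unlabelled tree $T'$ as follows: introduce a fresh copy $v'$ of every $v \in V(T)$ and hang a fresh copy of $M$ under $v'$; for every edge $u \arrow{a_i} v$ of $T$ introduce a fresh intermediate node $w_{u,v}$ with edges $u' \to w_{u,v} \to v'$ and hang a fresh copy of $G_i$ under $w_{u,v}$. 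A distinguished node $v$ is mapped to $v'$, and the two input trees are transformed independently.

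For correctness, the key point is that $[v']$ is never bisimilar to a gadget root. Paths of different lengths lie in pairwise distinct bisimulation classes, so $[G_1], \ldots, [G_k], [M]$ are $k+1$ pairwise distinct (path) classes. I would prove by induction on subtree depth that $[v']$ is not a path class at all: if $v$ is a leaf, then the only child of $v'$ is the $M$-root (of class $[P_{k+1}]$), so $[v'] = [P_{k+2}]$; otherwise $v'$ has at least one intermediate child $w_{v,\cdot}$ whose child set $\{[v''], [G_j]\}$ already contains two non-bisimilar elements by induction, hence $[w_{v,\cdot}]$ is not a path class, and together with $[M] \neq [w_{v,\cdot}]$ this forces $[v']$'s child set to contain at least two distinct classes. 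Consequently the child set $\{[v'], [G_i]\}$ of any intermediate node uniquely determines both $i$ and $[v']$, so a routine induction yields $u \sim \tilde u$ in the labelled disjoint union iff $u' \sim \tilde u'$ in the unlabelled one.

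In pointer representation the reduction is immediate: every input node contributes a constant-size marker block and every input edge contributes a constant-size intermediate-plus-gadget block, and the $O(\log n)$-bit identifiers of new nodes can be formed as short tuples of input identifiers by a projection, which lies in $\AC^0$. The term-representation case is where the main effort lies and drives the design of the gadgets. Here I would exploit that the root of the node-labelled encoding of $T$ carries the distinguished root symbol at the leftmost opening-bracket position, and that its matching closing bracket is the rightmost position. The rewriting is then a purely local substitution on the string: the leading $\auf$ of the root becomes $\auf$, the trailing $\zu$ becomes $t_M \zu$, every other $\auf_{a_i}$ becomes $\auf\, t_{G_i}\, \auf$, and every other $\zu$ becomes $t_M\, \zu\, \zu$, where $t_M$ and $t_{G_i}$ are the fixed bracket strings encoding $M$ and $G_i$. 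Since each substitution depends only on the local bracket (and its label bits) and on whether its position equals $1$ or $n$, no bracket matching is required, and the transformation lies in $\AC^0$. The hard part of the whole argument is precisely this last step: avoiding any form of bracket matching (which is not in $\AC^0$) by turning the transformation into a constant-width local projection, and this is exactly what motivates both the distinguished root symbol and attaching the universal $M$-marker to every image node.
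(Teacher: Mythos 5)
Your overall strategy --- subdivide each labelled edge, attach a label-identifying gadget to the subdivision node and a universal marker gadget to each original node, and realize the whole transformation as a local rewriting of the bracket string --- is the same as the paper's, which adapts Srba's construction. Your bisimulation-correctness argument is essentially sound, apart from one slip in the stated invariant: ``$[v']$ is not a path class'' is false for leaves, where your own computation gives $[v'] = [P_{k+2}]$; the invariant you actually need (and implicitly use) is the weaker statement that $[v']$ is never bisimilar to any of the gadget roots $[G_1],\dots,[G_k],[M]$, which does hold and suffices for the decoding.

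The genuine gap is in the term-representation case, which you yourself single out as the crux. Your substitution maps $\auf_{a_i}$ to a string of length $2i+2$ and a non-final $\zu$ to a string of length $2k+4$; these lengths differ across input symbols, so the substitution is not an isometric homomorphism. Knowing \emph{what} to emit for each input position is indeed local, but knowing \emph{where} in the output string to place it requires the cumulative length of the images of all preceding symbols, i.e., a weighted count of how many symbols of each type occur in each prefix. That is iterated addition, which is $\TC^0$-complete and provably not in $\AC^0$ --- this is precisely the point of the footnote in the paper citing the Lange--McKenzie result that applying a non-isometric homomorphism is $\TC^0$-complete. So the claim that your rewriting is ``a constant-width local projection'' fails, and with it the $\AC^0$ bound. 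The paper avoids this by designing its gadgets so that both homomorphisms used are isometric (every symbol maps to an image of one fixed length). Your construction could be repaired by padding all gadgets to a common encoded size (e.g., attaching dummy leaves so that every $\auf\, t_{G_i}\, \auf$ block and the $t_M\,\zu\,\zu$ block have equal length), after re-checking that the padded gadgets remain pairwise non-bisimilar and distinguishable from the classes $[v']$; but as written the key $\AC^0$ claim for the term representation does not hold.
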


\begin{proof}
	We only show the lemma for the term representation; the same construction also works for the pointer representation.
	In \cite{Srba01} Srba presents a reduction from the bisimulation problem for 
	edge-labelled graphs to the bisimulation problem for unlabelled graphs.
	In fact, this construction transforms trees into trees.
	We slightly modify the reduction to ensure $\AC^0$-computability
	and assume that there are only two labels, say $a$ and $b$
	(which is the case for the trees constructed in the proof of Theorem \ref{thm-lower-bound-L-NC1}).
	
	Consider a tree $T$ with edge labels $a$ and $b$. 
	First every labelled edge of $T$ is subdivided into two edges. 
	In Figure~\ref{fig-srba} (middle tree), the new node added for an $x$-labelled edge ($x \in \{a,b\}$)
	is labelled with $x$. 
	To distinguish the original nodes from the new nodes,
	we attach to each original node two paths of length $3$. To each new node we attach one of two small trees
	depending on the label of the original edge that is represented by the new node, see the right tree in Figure~\ref{fig-srba}.
	Let us denote the resulting unlabelled tree with $\mathsf{ul}(T)$.
	It is not hard to prove that two labelled trees $T_1$ and $T_2$ are bisimilar if and only if 
	$\mathsf{ul}(T_1)$ and $\mathsf{ul}(T_2)$ are bisimilar.
	The proof is  basically given in \cite{Srba01}.
	
	\begin{figure}[t]
		\centering
		\tikzset{every node/.style={circle,draw,inner sep = 1.5pt}}
		\begin{tikzpicture}	
		\node (a) {};
		\node[below left = 4cm and 1cm of a] (b) {};
		\node[below right = 4cm and 1cm of a] (c) {};
		
		\draw[arrow] (a) edge node [above left, draw=none] {$a$} (b);
		\draw[arrow] (a) edge node [above right, draw=none] {$b$} (c);
		\end{tikzpicture}
		\hspace{.5cm}
		\begin{tikzpicture}	
		\node (a) {};
		\node[below left = 2cm and .5cm of a,label=left:{$a$}] (b) {};
		\node[below right = 2cm and .5cm of a,label=right:{$b$}] (c) {};
		\node[below left = 2cm and .5cm of b] (d) {};
		\node[below right = 2cm and .5cm of c] (e) {};
		
		\draw[arrow] (a) -- (b);
		\draw[arrow] (a) -- (c);
		\draw[arrow] (b) -- (d);
		\draw[arrow] (c) -- (e);
		\end{tikzpicture}
		\hspace{.5cm}
		\begin{tikzpicture}	
		\node (a) {};
		\node[below left = 2cm and 1cm of a] (b) {};
		\node[below left = 2cm and 1cm of b] (c) {};
		\node[below right = 2cm and 1cm of a] (d) {};
		\node[below right = 2cm and 1cm of d] (e) {};
		
		\node[right = .3cm of a] (a1) {};
		\node[right = .3cm of a1] (a2) {};
		\node[right = .3cm of a2] (a3) {};
		\node[left = .3cm of a] (a4) {};
		\node[left = .3cm of a4] (a5) {};
		\node[left = .3cm of a5] (a6) {};
		
		\node[above left = .2cm and .5cm of b] (b1) {};
		\node[left = .5cm of b] (b2) {};
		\node[below left = .2cm and .5cm of b] (b3) {};
		\node[above right = .2cm and .5cm of b] (b4) {};
		\node[right = .5cm of b] (b5) {};
		\node[below right = .2cm and .5cm of b] (b6) {};
		
		\node[right = .3cm of c] (c1) {};
		\node[right = .3cm of c1] (c2) {};
		\node[right = .3cm of c2] (c3) {};
		\node[left = .3cm of c] (c4) {};
		\node[left = .3cm of c4] (c5) {};
		\node[left = .3cm of c5] (c6) {};
		
		\node[above left = .2cm and .8cm of d] (d1) {};
		\node[left = .4cm of d] (d2) {};
		\node[below left = .2cm and .8cm of d] (d3) {};
		\node[above right = .2cm and .5cm of d] (d4) {};
		\node[right = .5cm of d] (d5) {};
		\node[below right = .2cm and .5cm of d] (d6) {};
		
		\node[right = .3cm of e] (e1) {};
		\node[right = .3cm of e1] (e2) {};
		\node[right = .3cm of e2] (e3) {};
		\node[left = .3cm of e] (e4) {};
		\node[left = .3cm of e4] (e5) {};
		\node[left = .3cm of e5] (e6) {};
		
		\draw[arrow] (a) -- (b);
		\draw[arrow] (b) -- (c);
		\draw[arrow] (a) -- (d);
		\draw[arrow] (d) -- (e);
		
		\draw[arrow] (a) -- (a1);
		\draw[arrow] (a1) -- (a2);
		\draw[arrow] (a2) -- (a3);
		\draw[arrow] (a) -- (a4);
		\draw[arrow] (a4) -- (a5);
		\draw[arrow] (a5) -- (a6);
		
		\draw[arrow] (b) -- (b1);
		\draw[arrow] (b) -- (b2);
		\draw[arrow] (b) -- (b3);
		\draw[arrow] (b) -- (b4);
		\draw[arrow] (b) -- (b5);
		\draw[arrow] (b) -- (b6);
		
		\draw[arrow] (c) -- (c1);
		\draw[arrow] (c1) -- (c2);
		\draw[arrow] (c2) -- (c3);
		\draw[arrow] (c) -- (c4);
		\draw[arrow] (c4) -- (c5);
		\draw[arrow] (c5) -- (c6);
		
		\draw[arrow] (d2) -- (d1);
		\draw[arrow] (d) -- (d2);
		\draw[arrow] (d2) -- (d3);
		\draw[arrow] (d) -- (d4);
		\draw[arrow] (d) -- (d5);
		\draw[arrow] (d) -- (d6);
		
		\draw[arrow] (e) -- (e1);
		\draw[arrow] (e1) -- (e2);
		\draw[arrow] (e2) -- (e3);
		\draw[arrow] (e) -- (e4);
		\draw[arrow] (e4) -- (e5);
		\draw[arrow] (e5) -- (e6);
		\end{tikzpicture}
		\caption{From labelled to unlabelled trees}
		\label{fig-srba}
	\end{figure}
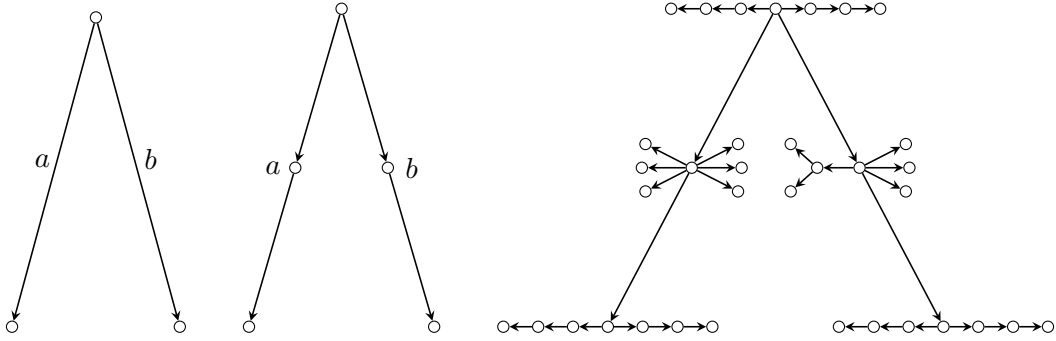

	It remains to prove that the term representation of $\mathsf{ul}(T)$ can be computed in $\AC^0$
	from the term representation of $T$.	Consider the term representation $t$ of $T$.
	Recall that we identify the opening brackets in $t$ with the nodes of $T$.
	We assume that the term representation $t$ contains the following three opening bracket types:
	$\auf_a$ and $\auf_b$, which represent nodes with an incoming $a$-labelled (resp., $b$-labelled) edge,
	and $\auf$ for the root.
	For example, $t = \auf \auf_a \zu \auf_b \zu \zu$ is a term representation for the left tree in Figure \ref{fig-srba}.
	The transformation $T \mapsto \mathsf{ul}(T)$ can be described by two isometric homomorphisms.
	A homomorphism $h\colon \Sigma^* \to \Gamma^*$ is {\em isometric} if there is an $\ell \geq 1$ such that
	$|h(c)| = \ell$ for all $c \in \Sigma$. In  \cite{LangeM98} it is shown that 
	for a given isometric homomorphism $h\colon \Sigma^* \to \Gamma^*$ and a word $w \in \Sigma^*$
	one can compute $h(w)$ in $\AC^0$.\footnote{If the homomorphism is fixed, this is even possible in $\NC^0$. Moreover, if 
		the homomorphism is not isometric then the problem is $\TC^0$-complete \cite{LangeM98}.}
	
	We will proceed in two steps. 
	Define the isometric homomorphism $h_1\colon \{ \auf_a, \auf_b, \zu \}^* \to \{ \auf_a, \auf_b, \auf, \zu, \Zu \}^*$ by:
	$$
	\auf_a \mapsto \auf_a \auf  \qquad\qquad \auf_b \mapsto \auf_b \auf \qquad\qquad \zu \mapsto ~ \zu \Zu
	$$
	Let $u \in \{ \auf_a, \auf_b, \zu \}^*$ be the word such that $t = \auf u \zu$ and consider the string
	$\auf h_1(u) \zu$. 
	Formally,  it is not a term representation (since we have two types of closing brackets). Nevertheless, it describes the tree obtained from $T$
	by subdividing every edge and labelling each new node with the former edge label.
	For example $t = \auf \auf_a \zu \auf_b \zu \zu$ is transformed into
	$\auf h_1(u) \zu = \auf \auf_a \auf \zu \Zu \auf_b \auf \zu \Zu \zu$,
	which describes the node-labelled tree in Figure \ref{fig-srba}.
	The second isometric homomorphism
	$h_2\colon \{ \auf_a, \auf_b, \auf, \zu, \Zu \}^* \to \{ \auf, \zu \}^*$ is defined by:
	\begin{itemize}
		\item $\auf \mapsto \auf \auf\auf\auf\zu\zu\zu$ (an opening bracket followed by a path of length $3$)
		\item $\zu \mapsto \auf\auf\auf\zu\zu\zu \zu$    (a path of length $3$ followed by a closing bracket)
		\item $\Zu \mapsto \auf\zu\auf\zu\auf\zu \zu$ (3 leaves followed by a closing bracket)
		\item $\auf_a \mapsto \auf \auf\zu\auf\zu\auf\zu$ (an opening bracket followed by 3 leaves)
		\item $\auf_b \mapsto \auf \auf\auf\zu\auf\zu\zu$ (an opening bracket followed by the tree  \!\!\!\!\!\!
		\mbox{
			\tikzset{every node/.style={circle,draw,inner sep = 1pt,minimum size = 1pt}}
			\begin{tikzpicture}	
			\node (a) {};
			\node[below left = 0.18cm and 0.09cm of a] (b) {};
			\node[below right = 0.18cm and 0.09cm of a] (c) {};
			\draw[->,>=stealth] (a) -- (b);
			\draw[->,>=stealth] (a) -- (c);
			\end{tikzpicture}})
	\end{itemize}
	Then,  
	the string $h_2( \auf h_1(u) \zu)$ is indeed a term representation for the desired unlabelled tree~$\mathsf{ul}(T)$.
\end{proof}

\begin{proof}[Proof of Theorem~\ref{thm-lower-bound-L-NC1}]
	By Lemma~\ref{lemma-unlabeling-in-AC0} it suffices to show the lower bounds for edge-labelled trees.
	We reuse the proofs from  \cite{JennerKMT03}, where it is shown that 
	the tree isomorphism problem is \L-hard ($\NC^1$-hard, respectively) with respect to $\AC^0$-reductions
	if the the trees are given as pointer structures (in term representation, respectively).
	Let us start with the bisimulation problem for trees given in pointer representation.
	Here, Jenner et al. reduce from the \L-complete reachability problem on paths, i.e.,
	the question whether for a given directed path graph $G$ and two nodes $v_i, v_j \in V(G)$, there is a path from $v_i$ to $v_j$ \cite{Ete97}.	
	Without loss of generality, $v_i$ and $v_j$ are distinct and have successors $v_{i+1}$ and $v_{j+1}$, respectively.
	
	Consider the tree with a root node which has two copies of $G$ as direct subtrees.
	We refer to nodes of the two copies by $v_1, \dots, v_n$ and $v_1', \dots, v_n'$, respectively.
	Additionally we replace the edge $(v_i,v_{i+1})$ by the new edge $(v_i',v_{i+1})$.
	Now let $T_1$ (resp., $T_2$) be the tree where the edge $(v_j,v_{j+1})$ (resp., $(v_j',v_{j+1}')$)
	is labelled by a symbol $a$ (all unlabelled edges are assumed to be labelled with a symbol $b \neq a$). 
	Clearly, $T_1$ and $T_2$ can be computed in $\AC^0$ from $G$.
	There is a path from $v_i$ to $v_j$ in $G$ if and only
	if $T_1$ and $T_2$ are bisimilar.
	See Figure~\ref{fig:jenner-logspace} for an illustration of the reduction.
	
	\begin{figure}[t]
		\tikzstyle{invisible} = [draw=none,inner sep=0,minimum size=0]
		\tikzset{every node/.style={circle,draw,inner sep = 1.5pt}}
		\begin{minipage}{.45\linewidth}
			\centering
			\begin{tikzpicture}	
			\node (r) {};
			\node[above right = 1em of r] (a0) {};
			\node[below right = 1em of r] (a1) {};
			\node[right = 3em of a0,label=above:{\footnotesize $v_i$}] (u0) {};
			\node[right = 3em of a1,label=below:{\footnotesize $v_i'$}] (u1) {};
			\node[right = 1em of u0] (uu0) {};
			\node[right = 1em of u1] (uu1) {};
			\node[right = 4em of uu0,label=above:{\footnotesize $v_j$}] (v0) {};
			\node[right = 1em of v0] (vv0) {};
			\node[right = 3.5em of vv0,invisible] (b0) {};
			\node[right = 4em of uu1,label=below:{\footnotesize $v_j'$}] (v1) {};
			\node[right = 1em of v1] (vv1) {};	
			\node[right = 3.5em of vv1,invisible] (b1) {};
			
			\draw[arrow] (r) -- (a0);
			\draw[arrow] (r) -- (a1);
			\draw[arrow] (a0) -- (u0);
			\draw[arrow] (a1) -- (u1);
			\draw[arrow] (u1) -- (uu1);
			\draw[arrow] (u1) -- (uu0);
			\draw[arrow] (uu0) -- (v0);
			\draw[arrow] (v0) edge node[draw=none,below] {\footnotesize $a$} (vv0);
			\draw[arrow] (vv0) -- (b0);
			\draw[arrow] (uu1) -- (v1);
			\draw[arrow] (v1) -- (vv1);
			\draw[arrow] (vv1) -- (b1);
			
			\node[below = 6em of r]  (r') {};
			\node[above right = 1em of r'] (a0') {};
			\node[below right = 1em of r'] (a1') {};
			\node[right = 3em of a0',label=above:{\footnotesize $v_i$}] (u0') {};
			\node[right = 3em of a1',label=below:{\footnotesize $v_i'$}] (u1') {};
			\node[right = 1em of u0'] (uu0') {};
			\node[right = 1em of u1'] (uu1') {};
			\node[right = 4em of uu0',label=above:{\footnotesize $v_j$}] (v0') {};
			\node[right = 1em of v0'] (vv0') {};
			\node[right = 3.5em of vv0',invisible] (b0') {};
			\node[right = 4em of uu1',label=below:{\footnotesize $v_j'$}] (v1') {};
			\node[right = 1em of v1'] (vv1') {};	
			\node[right = 3.5em of vv1',invisible] (b1') {};
			
			\draw[arrow] (r') -- (a0');
			\draw[arrow] (r') -- (a1');
			\draw[arrow] (a0') -- (u0');
			\draw[arrow] (a1') -- (u1');
			\draw[arrow] (u1') -- (uu1');
			\draw[arrow] (u1') -- (uu0');
			\draw[arrow] (uu0') -- (v0');
			\draw[arrow] (v0') -- (vv0');
			\draw[arrow] (vv0') -- (b0');
			\draw[arrow] (uu1') -- (v1');
			\draw[arrow] (v1') edge node[draw=none,above] {\footnotesize $a$} (vv1');
			\draw[arrow] (vv1') -- (b1');
			\end{tikzpicture}
			\subcaption{$T_1$ and $T_2$ are bisimilar.}
		\end{minipage}
		\hfill
		\begin{minipage}{.45\textwidth}
			\centering
			\begin{tikzpicture}
			\node (r) {};
			\node[above right = 1em of r] (a0) {};
			\node[below right = 1em of r] (a1) {};
			\node[right = 3em of a0,label=above:{\footnotesize $v_j$}] (u0) {};
			\node[right = 3em of a1,label=below:{\footnotesize $v_j'$}] (u1) {};
			\node[right = 1em of u0] (uu0) {};
			\node[right = 1em of u1] (uu1) {};
			\node[right = 4em of uu0,label=above:{\footnotesize $v_i$}] (v0) {};
			\node[right = 1em of v0] (vv0) {};
			\node[right = 3.5em of vv0,invisible] (b0) {};
			\node[right = 4em of uu1,label=below:{\footnotesize $v_i'$}] (v1) {};
			\node[right = 1em of v1] (vv1) {};	
			\node[right = 3.5em of vv1,invisible] (b1) {};
			
			\draw[arrow] (r) -- (a0);
			\draw[arrow] (r) -- (a1);
			\draw[arrow] (a0) -- (u0);
			\draw[arrow] (a1) -- (u1);
			\draw[arrow] (u1) -- (uu1);
			\draw[arrow] (uu0) -- (v0);
			\draw[arrow] (u0) edge node[draw=none,below] {\footnotesize $a$} (uu0);
			\draw[arrow] (v1) -- (vv0);
			\draw[arrow] (vv0) -- (b0);
			\draw[arrow] (uu1) -- (v1);
			\draw[arrow] (v1) -- (vv1);
			\draw[arrow] (vv1) -- (b1);
			
			\node[below = 6em of r] (r') {};
			\node[above right = 1em of r'] (a0') {};
			\node[below right = 1em of r'] (a1') {};
			\node[right = 3em of a0',label=above:{\footnotesize $v_j$}] (u0') {};
			\node[right = 3em of a1',label=below:{\footnotesize $v_j'$}] (u1') {};
			\node[right = 1em of u0'] (uu0') {};
			\node[right = 1em of u1'] (uu1') {};
			\node[right = 4em of uu0',label=above:{\footnotesize $v_i$}] (v0') {};
			\node[right = 1em of v0'] (vv0') {};
			\node[right = 3.5em of vv0',invisible] (b0') {};
			\node[right = 4em of uu1',label=below:{\footnotesize $v_i'$}] (v1') {};
			\node[right = 1em of v1'] (vv1') {};	
			\node[right = 3.5em of vv1',invisible] (b1') {};
			
			\draw[arrow] (r') -- (a0');
			\draw[arrow] (r') -- (a1');
			\draw[arrow] (a0') -- (u0');
			\draw[arrow] (a1') -- (u1');
			\draw[arrow] (u1') edge node[draw=none,above] {\footnotesize $a$}  (uu1');
			\draw[arrow] (uu0') -- (v0');
			\draw[arrow] (u0') -- (uu0');
			\draw[arrow] (v1') -- (vv0');
			\draw[arrow] (vv0') -- (b0');
			\draw[arrow] (uu1') -- (v1');
			\draw[arrow] (v1') -- (vv1');
			\draw[arrow] (vv1') -- (b1');
			\end{tikzpicture}
			\subcaption{$T_1$ and $T_2$ are not bisimilar.}
		\end{minipage}
		
		\caption{The two possible forms of $T_1$, $T_2$ depending on whether $v_j$ is reachable from $v_i$ or not}
		\label{fig:jenner-logspace}
	\end{figure}
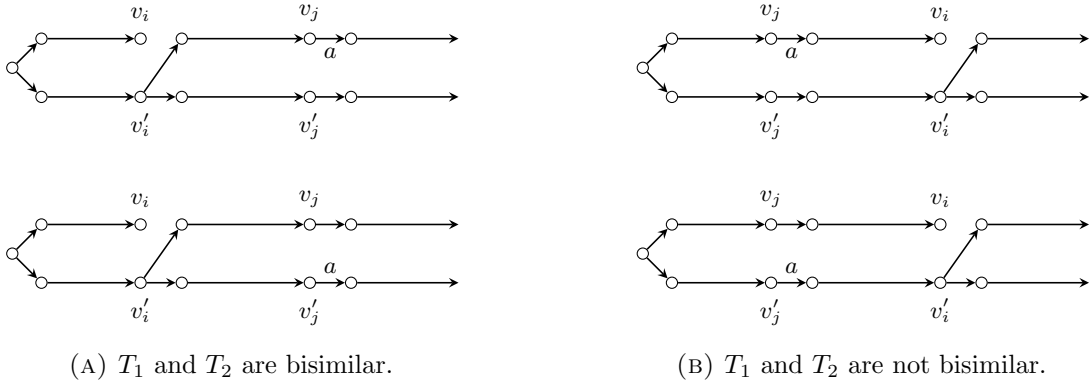
	
	Secondly, Jenner et al. present in \cite{JennerKMT03} an $\AC^0$-reduction from the $\NC^1$-complete evaluation problem of balanced
	Boolean expressions to the isomorphism problem for trees in term representation.
	They use the AND-gadget $T_\wedge(G_1,G_2,H_1,H_2)$ and the OR-gadget $T_\vee(G_1,G_2,H_1,H_2)$
	which are depicted in Figure~\ref{fig-lower-NC1}.
	Notice that for all trees $G_1,G_2,H_1,H_2$ the following holds, where $\sim$ can both mean bisimilarity and isomorphism:	
	\begin{figure}[t]
		\tikzset{level distance=25pt, sibling distance=5pt}
		\tikzset{every tree node/.style={draw, circle, inner sep = 1.5pt}}
		\begin{minipage}{.45\textwidth}
			\centering
			\begin{tikzpicture}
			\Tree
			[. \node {};
			\edge node[auto=right]{\footnotesize $a$};
			[. \node {};
			\edge node[auto=right]{\footnotesize $a$};
			[. \node[draw=none] {\footnotesize $G_1$}; ]
			[. \node[draw=none] {\footnotesize $G_2$}; ]
			]
			[. \node {};
			\edge node[auto=right]{\footnotesize $a$};
			[. \node[draw=none] {\footnotesize $H_1$}; ]
			[. \node[draw=none] {\footnotesize $H_2$}; ]
			]
			]
			\end{tikzpicture}
			\begin{tikzpicture}
			\Tree
			[. \node {};
			\edge node[auto=right]{\footnotesize $a$};
			[. \node {};
			\edge node[auto=right]{\footnotesize $a$};
			[. \node[draw=none] {\footnotesize $H_1$}; ]
			[. \node[draw=none] {\footnotesize $H_2$}; ]
			]
			[. \node {};
			\edge node[auto=right]{\footnotesize $a$};
			[. \node[draw=none] {\footnotesize $G_1$}; ]
			[. \node[draw=none] {\footnotesize $G_2$}; ]
			]
			]
			\end{tikzpicture}
			\subcaption{AND-gadget $T_\wedge$}
		\end{minipage}
		\hfill
		\begin{minipage}{.45\textwidth}
			\centering
			\begin{tikzpicture}
			\Tree
			[. \node {};
			[. \node {};
			\edge node[auto=right]{\footnotesize $a$};
			[. \node[draw=none] {\footnotesize $G_1$}; ]
			[. \node[draw=none] {\footnotesize $G_2$}; ]
			]
			[. \node {};
			\edge node[auto=right]{\footnotesize $a$};
			[. \node[draw=none] {\footnotesize $H_1$}; ]
			[. \node[draw=none] {\footnotesize $H_2$}; ]
			]
			]
			\end{tikzpicture}
			\begin{tikzpicture}
			\Tree
			[. \node {};
			[. \node {};
			\edge node[auto=right]{\footnotesize $a$};
			[. \node[draw=none] {\footnotesize $G_1$}; ]
			[. \node[draw=none] {\footnotesize $H_2$}; ]
			]
			[. \node {};
			\edge node[auto=right]{\footnotesize $a$};
			[. \node[draw=none] {\footnotesize $H_1$}; ]
			[. \node[draw=none] {\footnotesize $G_2$}; ]
			]
			]
			\end{tikzpicture}
			\subcaption{OR-gadget $T_\vee$}
		\end{minipage}
		\caption{The trees for the $\NC^1$ lower bound}
		\label{fig-lower-NC1}
	\end{figure}
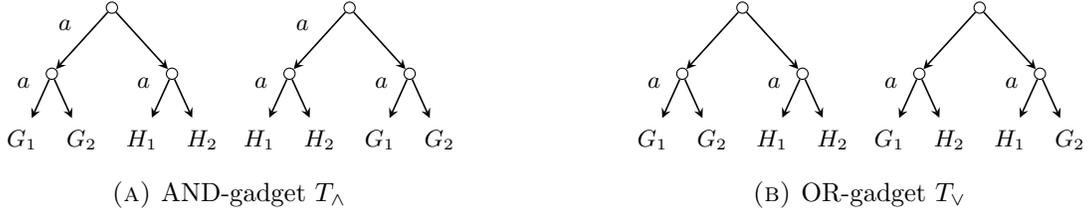
	\begin{itemize}
		\item $G_1 \sim H_1$ and $G_2 \sim H_2 \iff T_\wedge(G_1,G_2,H_1,H_2) \sim T_\wedge(H_1,H_2,G_1,G_2)$, and
		\item $G_1 \sim H_1$ or $G_2 \sim H_2 \iff T_\vee(G_1,G_2,H_1,H_2) \sim T_\vee(G_1,H_2,H_1,G_2)$.
	\end{itemize}
	\medskip
	Using the fact that the AND-gadget and the OR-gadget have the same tree structure (if we ignore labels), 
	one can show that the term representations for the resulting trees can be computed in $\AC^0$ from the balanced Boolean expression;
	see the arguments in  \cite{JennerKMT03}. This yields an $\AC^0$-reduction from 
	the evaluation problem of balanced Boolean expressions to the bisimulation problem for trees in term representation.
\end{proof}

\subsection{Bisimulation between trees and general graphs}

As remarked in Section~\ref{sec-bisi+sim}, the bisimulation problem is \P-complete 
for the class of all finite graphs. For trees we have shown that the complexity reduces to 
logspace, respectively $\NC^1$, depending on the representation of the input trees. This
leaves the question for the complexity of the bisimulation problem, when one of the input
graphs is a tree, and the other graph is not restricted. In this section we show that this problem
belongs to $\AC^1$ and that it is $\NL$-hard, but we are not able to prove matching bounds.
Let us start with the lower bound.

\begin{thm}
	$\Bisim${\rm (Unlabelled-Trees, Unlabelled-Graphs)} is $\NL$-hard.
\end{thm}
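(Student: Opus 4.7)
The plan is to reduce from the $\NL$-complete problem of $s$-$t$ reachability in a layered DAG to $\Bisim${\rm (Unlabelled-Trees, Unlabelled-Graphs)}.

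Given a layered DAG $D$ with layers $L_0, L_1, \ldots, L_n$ (all edges going from $L_i$ to $L_{i+1}$), $L_0 = \{s\}$ and $L_n = \{t\}$, the reduction constructs an unlabelled tree $T$ and an unlabelled graph $G$. The tree $T$ is the directed path $r_0 \to r_1 \to \cdots \to r_n$ of length $n$ rooted at $r_0$, and $G$ is obtained from $D$ by adding a self-loop at every dead-end node of $D$ that lies in $L_0 \cup L_1 \cup \cdots \cup L_{n-1}$; in $G$ the unique node with no outgoing edges is $t$. The bisimulation is queried on the pair $(r_0, s)$. Both $T$ and $G$ are unlabelled, and the construction is clearly $\AC^0$-computable from $D$.

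The correctness argument is a downward inductive characterization of the bisimulation classes. One has $u \sim r_n$ iff $u$ has no successors in $G$ (which in our $G$ identifies only $t$), and for $i < n$, $u \sim r_i$ iff $u$ has at least one successor and every successor is $\sim r_{i+1}$. A self-looping node of $G$ cannot be bisimilar to any $r_i$: the loop would propagate the requirement down through $r_{i+1}, r_{i+2}, \ldots, r_n$, forcing the node to have no successors and contradicting the loop. Thus $s \sim r_0$ holds if and only if no intermediate dead-end of $D$ is reachable from $s$ in $D$.

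The main obstacle is that this bisimulation condition is a universal-reachability statement (``every node reachable from $s$ admits a length-$n$ extension to $t$''), whereas ordinary $s$-$t$ reachability is existential. This gap is bridged by a short additional $\AC^0$-preprocessing of the input DAG together with a one-layer extension of $T$: one routes every non-$t$ dead-end in the intermediate layers through a fresh ``failure chain'' to a new sink at layer $n$ distinct from $t$, and attaches a small distinguishing gadget below $t$ (with the corresponding extra node $r_{n+1}$ on the tree) so that the true target is bisimulation-distinguishable from the failure sinks. After this refinement the test $s \sim r_0$ captures $s$-$t$ reachability exactly. Alternatively, since $\NL$ is closed under complement by Immerman-Szelepcs\'enyi, one may reduce from $s$-$t$ non-reachability, whose universal flavour matches the bisimulation condition directly. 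Either route establishes the claimed $\NL$-hardness.
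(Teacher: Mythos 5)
Your basic construction and its analysis are sound up to the point you yourself flag: with the path $r_0\to\cdots\to r_n$ as the tree, you correctly derive that $s\sim r_0$ holds iff no intermediate dead-end of $D$ is reachable from $s$, i.e.\ iff \emph{every} maximal path from $s$ reaches $t$. The genuine gap is that neither of your two repairs turns this into $s$-$t$ reachability. Bisimilarity with the root of a deterministic path is intrinsically universal: $s\sim r_0$ can only ever assert that every maximal path from $s$ has the prescribed length. Your first fix therefore cannot work as claimed: after routing dead-ends into failure sinks at layer $n$ and giving $t$ a pendant child (with $r_{n+1}$ added to the tree), the condition $s\sim r_0$ becomes ``no failure sink is reachable from $s$'', which is still the universal statement ``all maximal paths reach $t$'', not ``some path reaches $t$''. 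A graph in which one successor of $s$ leads to $t$ and another leads to a dead-end refutes your claim that the refined test ``captures $s$-$t$ reachability exactly''. Your second fix names the right tool (Immerman--Szelepcs\'enyi), but it is incompatible with the construction you actually wrote down: the complement of $s$-$t$ reachability is ``\emph{no} path from $s$ reaches $t$'', whereas your test expresses ``\emph{all} paths reach $t$''.

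To close the gap you must invert which event perturbs the path length. Eliminate early dead-ends (e.g.\ add a self-loop to every node of the input graph before layering, which turns exact-length reachability into plain reachability and keeps the source problem $\NL$-complete), so that every maximal path from the layer-$1$ copy of $s$ that avoids the last-layer copy of $t$ has length exactly $k$; then hang a single pendant node below that copy of $t$ only. Now $s\sim r_0$ iff $t$ is \emph{not} reachable from $s$, and $\NL$-hardness follows from closure of $\NL$ under complement. This is exactly the paper's reduction (stated there from exact-length reachability via the extra node $*$ below $(t,k+1)$), so your overall plan is in the right family; the decisive gadget --- a pendant node below $t$ rather than self-loops or failure chains at the dead-ends --- is the step your proposal is missing.
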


\begin{proof}
	We reduce from the following $\NL$-complete problem: Given a graph $G = (V,E)$, two vertices $s,t \in V$ and a number $k \in \N$,
	does there exist a path of length exactly $k$ from $s$ to $t$?
	
	Let $H$ be the dag defined on $V(H) = V \times \{1, \ldots, k+1\} \cup \{ * \}$ with the edge set
	\[
		E(H) = \{ ((u,i),(v,i+1)) \mid (u,v) \in E, \, 1 \le i \le k \} \cup \{ ((t,k+1),*) \}.
	\]
	Notice that $t$ is reachable from $s$ in exactly $k$ steps if and only if the node $(s,1)$
	is not bisimilar to a path of length $k$.
	Clearly, $H$ and the path graph can be constructed in logspace from $G$.  The result follows, since $\NL$
	is closed under complement.
\end{proof}
In the rest of the section, we show that $\Bisim${\rm (Trees, Graphs)} belongs to $\AC^1$.
Using the reduction in \cite{Srba01} (see also the proof of Lemma~\ref{lemma-unlabeling-in-AC0})
we can assume that the input tree and graph are unlabelled.
For a number $d \in \N$ we write $u \arrow{d} u'$ if there exists a path of length $d$ from $u$ to $u'$.
Let $G = (V,E)$ be a finite unlabelled directed graph, on which we define approximants $\approx_k$ of the bisimulation equivalence $\sim_G$.
First we set $u \approx_0 v$ for all $u, v \in V$. Inductively, we define $u \approx_{k+1} v$ if for all $d \in \N$ we have
\begin{itemize}
\item for all $u \arrow{d} u'$ there exists $v \arrow{d} v'$ such that $u' \approx_k v'$, and
\item for all $v \arrow{d} v'$ there exists $u \arrow{d} u'$ such that $u' \approx_k v'$.
\end{itemize}
Notice that $\approx_i \, \supseteq \, \approx_{i+1}$ for all $i \in \N$. Hence there exists $n \in \N$ such that
$\approx_n \, = \, \approx_{n+i}$ for all $i \in \N$ and $\approx_n$ coincides with the bisimulation equivalence $\sim$.
Before considering the bisimulation problem between trees and graphs, we show that the approximants $\approx_k$
converge after $O(\log n)$ rounds on trees of size $n$. For nodes $u,v \in V$ we define
\[
	r(u,v) = \sup \{ k \in \N \mid u \approx_k v \} \in \N \cup \{ \infty \},
\]
and define $r(G) = \sup \{ r(u,v) \mid u \not \sim v \text{ for } u,v \in V \} \in \N \cup \{-\infty\}$.
Notice that $r(G) = -\infty$ if and only if all nodes in $G$ are bisimilar.
The following proposition shows that $r(T) \le 2 \log_2(n)+2$ for every unlabelled tree $T$ of size $n$.

\begin{prop}
	\label{prop:fast-game}
	Let $T = (V,E)$ be an unlabelled tree of size $n$. If $u_0,v_0 \in V$ are not bisimilar, then
	$r(u_0,v_0) \le 2 \log_2 (|u_0|+|v_0|)$.
\end{prop}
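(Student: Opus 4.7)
The plan is to proceed by strong induction on $n = |u_0| + |v_0|$, using the game-theoretic reading of $\approx_k$: Duplicator wins the $k$-round game for $u_0 \approx_k v_0$ if she can survive $k$ rounds in which Spoiler, each round, picks a side and a distance $d \ge 1$ and moves by $d$ steps, forcing Duplicator to reply at the same distance on the other side. Recall that $r(u_0, v_0) \le k$ means Spoiler has a winning strategy in the $(k+1)$-round game.

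The easy cases handle the base and the ``imbalance'' situations. If $|u_0| = |v_0| = 1$, both nodes are leaves and hence bisimilar, so the hypothesis is vacuous. If exactly one is a leaf, Spoiler descends one step on the non-leaf side and Duplicator has no reply; so $r(u_0, v_0) = 0$. If the longest paths from $u_0$ and $v_0$ to a leaf have different lengths, Spoiler moves to a leaf at the greater depth and Duplicator has no reply; again $r(u_0, v_0) = 0$. Each bound is well within $2\log_2 n$.

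The main case is when both nodes are non-leaves with equal maximum depth; without loss of generality assume $|u_0| \ge |v_0|$. I would exhibit a two-round Spoiler strategy leading to a new non-bisimilar pair $(u_1, v_1)$ with $|u_1| + |v_1| \le n/2$. Combined with the induction hypothesis applied to $(u_1, v_1)$, this yields $r(u_0, v_0) \le 2 + 2\log_2(n/2) = 2\log_2 n$. Round~1 is played on the $u_0$-side: Spoiler chooses $d_1 \ge 1$ and a descendant $u^*$ with $u_0 \arrow{d_1} u^*$ satisfying (a) $|u^*| \le |u_0|/2$ and (b) $u^* \not\sim v^*$ for every $v^*$ with $v_0 \arrow{d_1} v^*$. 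Whatever reply Duplicator plays, the game moves to $(u^*, v^*)$ with $u^* \not\sim v^*$, $|u^*| \le |u_0|/2$, and $|v^*| \le |v_0|$. Round~2 applies the symmetric halving step on the $v^*$-side, yielding $(u^{**}, v^{**})$ with $|v^{**}| \le |v^*|/2 \le |v_0|/2$ and total size at most $|u_0|/2 + |v_0|/2 = n/2$. If the preconditions for round~2 already fail (a leaf appears, or the maximum depths of $u^*$ and $v^*$ differ), Spoiler wins round~2 outright and the inductive bound holds a fortiori.

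The main obstacle is the key halving lemma: for any non-bisimilar pair of non-leaf nodes $u, v$ of equal maximum depth in a tree, Spoiler can indeed find such a pair $(d_1, u^*)$ with $|u^*| \le |u|/2$. I would prove it via the heavy-path decomposition of $T_u$. Any descendant of $u$ whose subtree has size exceeding $|u|/2$ must lie on $u$'s heavy path, since the subtree size drops to at most $|u|/2$ as soon as one crosses a light edge. Suppose for contradiction that every ``crushing'' descendant of $u$—one whose bisimulation class does not appear among the descendants of $v$ at the same distance—had size greater than $|u|/2$. Then all crushing descendants lie on $u$'s heavy path, and a symmetric conclusion applies to $v$. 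By comparing the two heavy paths of $u$ and $v$ together with the bisimulation classes of the light subtrees hanging off each of them, one can then match classes at every distance $d \ge 1$, from which $u \sim v$ follows, contradicting the hypothesis. Hence some crushing descendant of $u$ has size at most $|u|/2$, completing the lemma and the proof.
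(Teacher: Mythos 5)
There is a genuine gap, and it sits exactly where your sketch goes vague: the ``halving lemma''. Your Round~1 needs a \emph{crushing} descendant on the $u_0$-side, i.e.\ some $d_1\ge 1$ and $u_0 \arrow{d_1} u^*$ with $u^*\not\sim v^*$ for \emph{every} reply $v_0\arrow{d_1} v^*$. No such descendant need exist on the side you prescribe. Example: let $A$ be a path of length $5$ and $B$ a single leaf; let $u_0$ have two children that are roots of copies of $A$, and let $v_0$ have one child a copy of $A$ and one child a copy of $B$. Then $u_0\not\sim v_0$ (the $B$-child of $v_0$ is unmatched at distance $1$), the heights agree, $|u_0|\ge|v_0|$, and yet for every $d$ the set of bisimulation classes of distance-$d$ descendants of $u_0$ is contained in that of $v_0$, so $u_0$ has \emph{no} crushing descendant at any distance: Spoiler must play on the $v_0$-side. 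Choosing the side adaptively does not repair this, because after Round~1 the new pair $(u^*,v^*)$ may again admit crushing moves only on the side already halved; Duplicator's reply on the other side is only guaranteed to shrink by $1$, so two rounds may yield $f(n)\le 2+f(n-1)$ rather than $f(n)\le 2+f(n/2)$, i.e.\ a linear rather than logarithmic bound. The heavy-path contradiction argument you sketch (``one can then match classes at every distance, from which $u\sim v$ follows'') is precisely the step that would have to exclude this, it is left entirely unproven, and the example above shows that the statement it is meant to establish is false. (Your reduction to equal heights, the game-theoretic reading of $r$, and the $2+2\log_2(n/2)=2\log_2 n$ arithmetic are all fine.)

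The paper sidesteps the ``which side can Spoiler crush'' problem with a different device. It inducts on $k=r(u_0,v_0)$ and, from a two-step Spoiler win, extracts \emph{two distinct} nodes $u_2\neq u_3$ at the same distance $d$ from $u_0$ and two distinct nodes $v_2\neq v_3$ at the same distance from $v_0$, with $r(u_2,v_2)=r(u_3,v_3)=k-2$ (the second pair is obtained by letting Duplicator answer the composite distance-$d$ move on each side). Distinct nodes at equal distance in a tree root disjoint subtrees, so $|u_0|\ge|u_2|+|u_3|$ and $|v_0|\ge|v_2|+|v_3|$, whence one of the two pairs has total size at most $\frac12(|u_0|+|v_0|)$ --- regardless of which side the individual crushing moves live on. If you want to keep your size-based outer induction, you need some such mechanism that produces two disjoint candidate positions of equal game value and keeps the smaller one; halving each side separately cannot be forced.
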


\begin{proof}
        We prove the lemma by induction on $k = r(u_0,v_0)$. The cases $k = 0$ and $k=1$ are clear since $2 \log_2 (|u_0|+|v_0|) \geq 2$.
        Now assume that $k \geq 2$. Since $u_0 \approx_k v_0$,
	there exist nodes $u_1,u_2,v_1,v_2 \in V$ and numbers $d_1, d_2 \in \N$ such that
	\begin{itemize}
		\item $u_0 \arrow{d_1} u_1 \arrow{d_2} u_2$ and $v_0 \arrow{d_1} v_1 \arrow{d_2} v_2$,
		\item $r(u_1,v_1) = k-1$ and $r(u_2,v_2) = k-2$.
	\end{itemize}
	Let $d = d_1 + d_2$. Since $u_0 \arrow{d} u_2$ there exists $v_0 \arrow{d} v_3$ such that $u_2 \approx_{k-1} v_3$, and
	since $v_0 \arrow{d} v_2$ there exists $u_0 \arrow{d} u_3$ such that $u_3 \approx_{k-1} v_2$. Since $u_2 \not\approx_{k-1} v_2$, we get
	 $u_2 \neq u_3$ and $v_2 \neq v_3$.
	From $r(u_2,v_2) = k-2$ we get $u_3 \approx_{k-1} v_2 \approx_{k-2} u_2 \approx_{k-1} v_3$, which implies $u_3 \approx_{k-2} v_3$.
	On the other hand, $u_3 \approx_{k-1} v_3$ would imply $u_2 \approx_{k-1} v_2$, which contradicts $r(u_2,v_2) = k-2$.
	We therefore have  $r(u_3,v_3) = k-2$.
	Since $u_0 \arrow{d} u_2$, $u_0 \arrow{d} u_3$, and $u_2 \neq u_3$, the nodes $u_2$ and $u_3$ are roots of disjoint subtrees, which implies
	$|u_0| \geq |u_2|+|u_3|$. Analogously, we get $|v_0| \geq |v_2|+|v_3|$, and hence
	\[
		|u_0| + |v_0| \ge |u_2| + |v_2| + |u_3| + |v_3|.
	\]
	Observe that either
	\[
		\frac12(|u_0| + |v_0|) \ge |u_2| + |v_2|,
	\]
	or
	\[
		\frac12(|u_0| + |v_0|) \ge |u_3| + |v_3|
	\]
	holds.
	Since the two cases are symmetric we will assume the former.
	The induction hypothesis implies
	\[
		k-2 \le 2 \log_2(|u_2| + |v_2|) \le 2 \log_2(|u_0| + |v_0|) - 2,
	\]
	and hence $k \le 2 \log_2(|u_0|+|v_0|)$.
\end{proof}

\begin{lem} \label{lemma-bound-r}
	Let $G$ be an unlabelled dag with root $u_0$ and $T$ be an unlabelled tree with root $v_0$ of size $n$.
	If $u_0 \sim v_0$, then $r(G \uplus T) \le 2\log_2(n)+2$.
\end{lem}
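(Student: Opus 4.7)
The plan is to reduce the mixed dag/tree case to the pure tree case handled by Proposition~\ref{prop:fast-game} via a substitution principle: every node of $G$ has a bisimilar counterpart in $T$, and replacing nodes by bisimilar ones does not change the approximant level $r$.

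First I would establish a \emph{substitution lemma}: if $a \sim a'$ in $G \uplus T$, then $a \approx_k b \iff a' \approx_k b$ for every $b$ and every $k$, and consequently $r(a,b) = r(a',b')$ whenever $a \sim a'$ and $b \sim b'$. The proof is by induction on $k$. In the step, to match a forth move $a' \arrow{d} c'$ witnessing the forth condition for $a' \approx_{k+1} b$, I would use $a \sim a'$ to lift it to some $a \arrow{d} c$ with $c \sim c'$; this relies on the standard fact that bisimilarity lifts along paths of arbitrary length, proved by a routine induction on $d$ from the single-edge bisimulation property. Then $a \approx_{k+1} b$ produces $b \arrow{d} e$ with $c \approx_k e$, and the induction hypothesis applied to $c \sim c'$ yields $c' \approx_k e$. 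The back condition and the converse direction are symmetric.

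Second, I would use $u_0 \sim v_0$ together with the fact that every node of the rooted dag $G$ is reachable from $u_0$: for $a \in V(G)$ pick $d$ with $u_0 \arrow{d} a$, and the forth condition on $u_0 \sim v_0$ (lifted to paths of length $d$) produces $v_0 \arrow{d} a'$ with $a \sim a'$ and $a' \in V(T)$. Thus every $G$-node admits a bisimilar \emph{tree counterpart}.

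Third, for $a, b \in V(T)$ the approximants $\approx_k$ in $G \uplus T$ agree with those computed in $T$ alone, since in the disjoint union no edges lead from $T$ to $G$, so every $T$-node is $\arrow{d}$-closed inside $T$. Hence Proposition~\ref{prop:fast-game} applies and gives, for $a \not\sim b$, the bound $r(a,b) \le 2 \log_2(|a|+|b|) \le 2\log_2(2n) = 2\log_2 n + 2$. To conclude, for an arbitrary non-bisimilar pair $a, b$ in $G \uplus T$ I would replace any $G$-components by their tree counterparts from Step~2, preserving $r$ by the substitution lemma, and then invoke Step~3.

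The main obstacle is the substitution lemma itself, since its statement quantifies over all path lengths $d$ and must handle the back as well as the forth condition. However, once bisimilarity is shown to propagate along paths of arbitrary length (a short induction on $d$), the induction on $k$ becomes routine.
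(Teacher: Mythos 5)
Your proposal is correct and follows essentially the same route as the paper: the paper's proof consists precisely of your three ingredients, namely that every node of $G$ is bisimilar to some node of $T$ (since $u_0 \sim v_0$ and $u_0$ is a root), that $r$ is invariant under bisimulation (your substitution lemma, which the paper states without proof), and the reduction to Proposition~\ref{prop:fast-game} via $r(G \uplus T) = r(T) \le 2\log_2(2n)$. The only remark worth making is that your substitution lemma admits a shorter proof than the induction you sketch: since $\sim \, \subseteq \, \approx_k$ for every $k$ and each $\approx_k$ is transitive, $a \sim a'$ immediately gives $a \approx_k b \iff a' \approx_k b$.
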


\begin{proof}
        Assume that $u_0 \sim v_0$.
	Since $u_0$ and $v_0$ are the roots of the dag $G$ and the tree $T$, respectively, 
	every node in $G$ is bisimilar to some node in $T$.
	Furthermore $r$ is invariant under bisimulation, i.e., if $u \sim v$ and $u' \sim v'$, then
	$r(u,u') = r(v,v')$.
	Therefore $r(G \uplus T) = r(T) \le 2\log_2(n)+2$.
\end{proof}
Now we can show the $\AC^1$ upper bound for $\Bisim${\rm (Trees, Graphs)}:

\begin{thm}
	$\Bisim${\rm (Trees, Graphs)} is in $\AC^1$.
\end{thm}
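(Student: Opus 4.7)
The strategy is to iterate the approximant $\approx_k$ on $G \uplus T$ for $k = O(\log n)$ rounds and output whether $u_0 \approx_k v_0$. By Lemma~\ref{lemma-unlabeling-in-AC0} we may assume $G$ and $T$ are unlabelled. First, I would handle cycles in $G$: if the subgraph of $G$ reachable from $u_0$ contains a cycle, then $u_0$ admits paths of arbitrarily large length whereas every path in $T$ has length strictly less than $n$, forcing $u_0 \not\approx_1 v_0$ and hence $u_0 \not\sim v_0$. Cycle-reachability is in $\NL \subseteq \AC^1$, so we may henceforth assume the reachable part of $G$ is a dag.

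Next, I need a convergence bound in both directions. Lemma~\ref{lemma-bound-r} handles the case $u_0 \sim v_0$ and yields $u_0 \approx_k v_0$ for $k = 2\log_2 n + 2$. For the converse direction I would extend Proposition~\ref{prop:fast-game} to cross-pairs: if $u \in V(G)$, $v \in V(T)$ and $u \not\sim v$, then $r(u, v) \le 2\log_2 |v|$. The proof copies that of Proposition~\ref{prop:fast-game} almost verbatim, with the only modification being to replace $|u_0|+|v_0|$ by $|v_0|$ throughout. This works because the halving step only exploits that the distinct nodes $v_2, v_3$ are roots of disjoint subtrees of $T$, reached by distinct $d$-paths from $v_0 \in V(T)$, and therefore $|v_0| \ge |v_2| + |v_3|$; no size estimate on the graph side is ever used. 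Together, Lemma~\ref{lemma-bound-r} and this extension show that $u_0 \sim v_0$ holds iff $u_0 \approx_k v_0$ for $k = 2\log_2 n + 3$.

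For the implementation, I would precompute in $\AC^1$ (via iterated squaring of the Boolean adjacency matrix) the reachability relation $\arrow{d}$ on the dag reachable from $u_0$ in $G$ for all $d \in \{0, 1, \ldots, n\}$. On the tree side, $v \arrow{d} v'$ just means that $v'$ is a descendant of $v$ at depth $\depth(v)+d$, which is $\AC^0$-computable from an ancestor representation of $T$. With these tables available, one round $\approx_k \mapsto \approx_{k+1}$ is a constant-depth, polynomial-size formula with quantifiers over nodes and over $d \in \{0, \ldots, n\}$, hence is computable in $\AC^0$. Iterating $O(\log n)$ rounds composes into a single $\AC^1$ circuit.

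The main obstacle is the convergence argument in the non-bisimilar case: Lemma~\ref{lemma-bound-r} by itself only bounds $r(G \uplus T)$ under the assumption $u_0 \sim v_0$, which is insufficient to conclude that the algorithm correctly outputs NO after only $O(\log n)$ rounds when $u_0 \not\sim v_0$. The extension of Proposition~\ref{prop:fast-game} to cross-pairs is precisely what closes this gap, and its proof goes through essentially unchanged because the halving argument only requires tree structure on the $T$-side.
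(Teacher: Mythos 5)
Your proposal is correct, and the overall algorithmic skeleton (unlabel, restrict to the reachable subgraph, reject if cyclic, then iterate the long-step approximants $\approx_k$ for $O(\log |T|)$ rounds in $\AC^1$) matches the paper's. Where you genuinely diverge is in how you certify that $O(\log n)$ rounds suffice when the answer is negative. The paper never proves a convergence bound for non-bisimilar cross pairs; instead it runs the iteration up to $k = \lceil 2\log_2|T|+2\rceil$ and then \emph{tests whether the fixpoint has been reached}, i.e., whether $\approx_k \, = \, \approx_{k+1}$. If not, the contrapositive of Lemma~\ref{lemma-bound-r} (which assumes $u_0 \sim v_0$ and concludes $r(G \uplus T) \le 2\log_2 n + 2$ via bisimulation-invariance of $r$ and Proposition~\ref{prop:fast-game} applied inside $T$) already forces $u_0 \not\sim v_0$; if so, $\approx_k$ equals $\sim$ and one reads off the answer. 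You instead strengthen Proposition~\ref{prop:fast-game} to cross pairs $(u,v) \in V(G) \times V(T)$ with the bound $r(u,v) \le 2\log_2|v|$, and your justification is sound: the pairs $(u_2,v_2)$, $(u_3,v_3)$ produced in the induction step remain cross pairs, the inequality $v_2 \neq v_3$ is derived purely from the $\approx$-relations (not from tree structure on the $G$-side), and the halving $|v_0| \ge |v_2|+|v_3|$ uses only that $v_2,v_3$ are distinct nodes of $T$ at the same depth below $v_0$. This buys you a cleaner algorithm (no fixpoint test, just evaluate $u_0 \approx_k v_0$ at $k = O(\log n)$) at the cost of redoing the proposition's proof rather than reusing it as a black box. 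Two small points to tidy up: (i) the base case $k=1$ of your cross-pair extension needs $|v_0| \ge 2$, which does not follow from $|u_0|+|v_0|\ge 2$ as in the original; you need the one-line observation that a leaf $v_0$ of $T$ satisfies $u_0 \approx_1 v_0$ only if $u_0$ has no outgoing edges, in which case $u_0 \sim v_0$; (ii) plain iterated squaring gives reachability, not exact path lengths -- for $u \arrow{d} u'$ you should either form $A^d$ from the binary expansion of $d$ for each $d \le n$ in parallel, or simply use an $\NL$ oracle as the paper does; both stay within $\AC^1$.
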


\begin{proof}
        Let $T$ be the input tree and $G$ be the input graph. As explained before, we can assume that $T$ and $G$ are unlabelled.
	Assume we want to test whether node $u_0 \in V(T)$ is bisimilar to node $v_0 \in V(G)$.
	First we restrict $G$ to the subgraph reachable from $v_0$ using a reachability oracle in $\NL \subseteq \AC^1$.
	If $G$ is not acyclic, which can be also tested in $\NL$, then $u_0$ and $v_0$ are not bisimilar.
	If $G$ is a dag, we construct a logspace-uniform $\AC^1$-circuit family which computes the approximants $\approx_k$
	on the disjoint union $G \uplus T$ for all $0 \le k \le 2 \log_2 |T| + 2$.
	A similar construction can be found in \cite{GroheV06}.
	To do so, we first compute for all nodes $u,u'$ the set of path lengths from $u$ to $u'$.
	Note that one can test in $\NL \subseteq \AC^1$ whether there is (no) path of a given length between two given nodes.
	Using this information we can compute $\approx_{k+1}$ by an $\AC^0$-circuit from $\approx_k$.
	
	Finally, we test whether $\approx_k \; = \; \approx_{k+1}$ for $k = \lceil 2 \log_2 |T| + 2 \rceil$.
	If this is not the case, then Lemma~\ref{lemma-bound-r} implies $u_0 \not\sim v_0$ and the circuit outputs false.
	Otherwise, $\approx_k \; = \; \sim$. Hence the circuit outputs true if $u_0 \approx_k v_0$, and false otherwise.
\end{proof}

\section{Further applications}

Before we consider simulation problems, let us present two further application of the techniques from Section~\ref{sec-bisi}.

\subsection{Equality of hereditarily finite sets}

The bisimulation problem on trees in term representation arises in a very natural way.
A {\em hereditarily finite set} is either the empty set $\{\}$ or a set $\{a_1, \dots, a_n\}$
containing finitely many hereditarily finite sets $a_1, \dots, a_n$.
Hereditarily finite sets have a natural string representation over the bracket symbols $\{$ and $\}$.
By counting brackets, one can check in $\TC^0$ whether a string over $\{$ and $\}$ is well-bracketed \cite{BaCo89}.
As before, such a well-bracketed string corresponds to a tree. By induction over the height of trees, one can easily show
that two well-bracketed strings over $\{$ and $\}$ represent the same set if and only if the corresponding trees are bisimilar.
Hence, the tree bisimulation problem for (unlabelled) trees in term representation is equivalent
to the {\em set equality problem}, which asks
whether two such string representations represent the same set.
For example $\{ \{\} \{\} \}$ and $\{\{\}\}$ represent the same set.
From Theorem~\ref{thm:bisi-nc1} we obtain the $\NC^1$ upper bound in the following result. 
The $\NC^1$ lower bound follows from the $\NC^1$-hardness of the bisimulation problem 
for unlabelled trees in term representation (Theorem~\ref{thm-lower-bound-L-NC1}).

\begin{cor}
	The set equality problem is  $\NC^1$-complete with respect to $\AC^0$-reductions.
\end{cor}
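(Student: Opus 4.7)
The plan is to reduce the set equality problem to $\Bisim$(Unlabelled-Trees) in term representation in both directions and apply Theorem~\ref{thm:bisi-nc1} (upper bound) and Theorem~\ref{thm-lower-bound-L-NC1} (lower bound).

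For the upper bound, given two strings $s_1, s_2$ over $\{$ and $\}$, I would first check in $\TC^0$ that they are well-bracketed, using the fact from \cite{BaCo89} that well-bracketedness is $\TC^0$-computable. Reading each $\{$ as \auf{} and each $\}$ as \zu, each $s_i$ literally becomes the term representation of an unordered unlabelled tree $T_i$. By the inductive argument sketched in the excerpt (two hereditarily finite sets are equal iff the children multisets are equal iff for every child subtree on one side there is a bisimilar child subtree on the other, and vice versa), $s_1$ and $s_2$ represent the same set exactly when the roots of $T_1$ and $T_2$ are bisimilar. Applying Theorem~\ref{thm:bisi-nc1} to $(T_1,T_2)$ with the roots as distinguished nodes solves the problem in $\NC^1$; the whole reduction is in $\TC^0 \subseteq \NC^1$, and composing with an $\NC^1$ algorithm stays in $\NC^1$.

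For the lower bound, by Theorem~\ref{thm-lower-bound-L-NC1} the problem $\Bisim$(Unlabelled-Trees) on trees in term representation is $\NC^1$-hard under $\AC^0$-reductions. An input to that problem consists of two term representations $r_1, r_2$ over $\{\auf,\zu\}$ together with two distinguished gate positions $v_1, v_2$ pointing at opening brackets. In $\AC^0$ I can extract from $r_i$ the substring $r_i'$ corresponding to the subtree rooted at $v_i$, by locating the matching closing bracket (which, on a well-bracketed input, amounts to bracket-depth counting and is in $\TC^0$, even in $\AC^0$ for this particular task of extracting a contiguous factor once positions are known). Then renaming \auf{} to $\{$ and \zu{} to $\}$ produces two string representations of hereditarily finite sets, and the roots are bisimilar iff these two sets are equal. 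Together with the upper bound, this gives $\NC^1$-completeness under $\AC^0$-reductions.

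The only delicate point is to make sure the matching-bracket extraction used in the lower-bound reduction truly runs in $\AC^0$ and not merely in $\TC^0$, since Theorem~\ref{thm-lower-bound-L-NC1} is stated for $\AC^0$-reductions. This is fine because we only need to output, for each input position $j$, the symbol $r_i[j]$ if $j$ lies in the factor $[v_i, m(v_i)]$ (where $m(v_i)$ is the matching bracket of $v_i$) and blank otherwise. The condition $v_i \le j \le m(v_i)$ can be checked in $\AC^0$: one needs to verify that the bracket depth of $r_i$ at every position in $[v_i,j]$ is strictly positive relative to $v_i$, i.e., that no prefix of $r_i[v_i..j]$ closes the opening bracket $v_i$. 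Since the input lengths are polynomial and we only need to detect the existence of a single bad prefix, this is expressible by an unbounded-fan-in disjunction over positions of a local condition, hence lies in $\AC^0$, and the reduction composes correctly with the $\AC^0$-hardness from Theorem~\ref{thm-lower-bound-L-NC1}.
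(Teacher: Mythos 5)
Your route is the same as the paper's: set equality and root-bisimilarity of unlabelled trees in term representation coincide up to renaming brackets, so the upper bound is Theorem~\ref{thm:bisi-nc1} preceded by a $\TC^0$ well-bracketedness check, and the lower bound comes from Theorem~\ref{thm-lower-bound-L-NC1}. The upper-bound half of your argument is correct.

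The lower-bound half has a genuine flaw. You reduce from $\Bisim$(Unlabelled-Trees) with arbitrary distinguished positions $v_1,v_2$ and claim that extracting the factor of $r_i$ between $v_i$ and its matching closing bracket is an $\AC^0$ computation, on the grounds that ``no prefix of $r_i[v_i..j]$ closes the bracket $v_i$'' is an unbounded OR over positions of a \emph{local} condition. It is not local: ``position $p$ closes the bracket opened at $v_i$'' means that the numbers of opening and closing brackets in the factor $r_i[v_i..p]$ are equal, and such exact-count comparisons (equivalently, bracket matching) are not computable in $\AC^0$ --- this is precisely why the paper only claims $\TC^0$ for computing subtrees in the term representation. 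A $\TC^0$ reduction would only give $\NC^1$-hardness under $\TC^0$ reductions, which is too weak for the stated corollary. The fix is to not extract anything: the $\NC^1$-hardness in Theorem~\ref{thm-lower-bound-L-NC1} is established by a reduction whose output asks whether the \emph{roots} of the two constructed trees are bisimilar, so the hard instances already have $v_1,v_2$ at the roots, and the reduction to set equality is just the letter-to-letter substitution $\auf \mapsto \{$, $\zu \mapsto \}$, which is trivially in $\AC^0$. With that correction your argument coincides with the paper's.
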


\subsection{Model checking on tree-shaped graphs}

Recall that the starting point of our paper was the algorithmic meta theorem of Elberfeld et al.~(Theorem~\ref{thm-elberfeld}) which states that a fixed MSO-property can be model-checked
in logarithmic space on a graph class $\mathcal{C}$ of bounded tree-width. On the other hand, if 
$\mathcal{C}$ has unbounded tree-width, then Theorem~\ref{thm-elberfeld} cannot be applied directly.
In this section, we argue that we can extend Theorem~\ref{thm-elberfeld} to some graph classes
of unbounded tree-width if we restrict to bisimulation-invariant MSO (which coincides with the 
modal $\mu$-calculus).

Consider an MSO-formula $\psi = \psi(x)$ with a single free element variable $x$. We would like to know,
whether for a given graph $G \in \mathcal{C}$ and a node $v \in V(G)$, $\psi$ is satisfied in $v$; $(G,v) \models \psi$ for short.
We say that $\psi$ is {\em bisimulation-invariant} if for all graphs $G$ and all nodes $u,v \in V(G)$
we have: if $u$ and $v$ are bisimilar and $(G,u) \models \psi$ holds, then $(G,v) \models \psi$.
It is known that an MSO-formula $\psi$ is bisimulation-invariant if and only if $\psi$ is equivalent to a formula
in the {\em modal $\mu$-calculus} \cite{JaninW96}.
Assume now that there exist a constant $k$ and a logspace transducer, that maps a graph
$G \in \mathcal{C}$ to a pair $(H,R)$, where $H$ is a graph of tree-width at most $k$ and $R \subseteq V(G) \times V(H)$
is a bisimulation $R$ such that for all $v \in V(G)$ there exists $(v,w) \in R$.
Then we can check in logspace whether $(G,v) \models \psi$ holds, by computing
a bisimilar node $w \in V(H)$ with $(v,w) \in R$ and checking $(H,w) \models \psi$ in logspace
using Theorem~\ref{thm-elberfeld}.

Our second proof of Theorem~\ref{thm:eval-logspace} uses the above strategy. Whether 
a certain gate of a Boolean circuit (viewed as a node-labelled graph) evaluates to true is a bisimulation invariant property.
The partial tree unfolding constructed in the second proof of Theorem~\ref{thm:eval-logspace} has the properties
of the graph $H$ from the previous paragraph. 

The following definition presents a generalization of tree decompositions
that allows to compute a pair $(H,R)$ with $H$ of bounded tree-width and $R \subseteq V(G) \times V(H)$ a bisimulation.


\begin{defi}
	Let $G = (V,E)$ be an edge-labelled graph. A {\em tree-shaped decomposition} for $G$ is a node-labelled tree $(T,\beta)$
	where $\beta\colon V(T) \to 2^V$ satisfies the following conditions:
	\begin{itemize}
		\item For all $v \in V$ there exists $t \in V(T)$ such that $v \in \beta(t)$.
		\item For all $u \arrow{a} v$ there exists $t \in V(T)$ such that $u,v \in \beta(t)$.
		\item Let $u \arrow{a} v$ and $t_1 \in V(T)$ such that $u \in \beta(t_1)$.
		Then there exists $t_3 \in V(T)$ such that $u,v \in \beta(t_3)$ and for all nodes $t_2 \in V(T)$
		between $t_1$ and $t_3$, we have $u \in \beta(t_2)$.
	\end{itemize}
	The {\em width} of $(T,\beta)$ is $\max_{t \in V(T)} |\beta(t)|$.
\end{defi}
Note that every tree decomposition is also a tree-shaped decomposition but not vice versa
since we do not demand that $\{ t \in V(T) \mid v \in \beta(t) \}$ is connected.
The third property of tree-shaped decompositions ensures that directed edges in $G$
can be ``traced'' by paths in $T$.

\begin{prop}
	\label{prop:general-partial-unfolding}
	For every constant $k \in \N$, there exists a logspace transducer which, 
	given a graph $G$ and a tree-shaped decomposition $(T,\beta)$ for $G$ of width at most $k$,
	outputs a graph $H$ of tree-width at most $k-1$ and a bisimulation $R \subseteq V(G) \times V(H)$ such that
	for all $v \in V(G)$ there exists $(v,w) \in R$.
\end{prop}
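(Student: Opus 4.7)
The plan is to define $H$ as a partial unfolding of $G$ along $(T, \beta)$, in the spirit of the alternative proof of Theorem~\ref{thm:eval-logspace}. For each vertex $u \in V(G)$, let $T_u$ denote the subgraph of $T$ induced by the nodes $t$ with $u \in \beta(t)$; note that $T_u$ may be disconnected since $(T, \beta)$ is only a tree-shaped decomposition and not a tree decomposition. I will set $V(H) = \{(u, C) \mid u \in V(G), \, C \text{ a connected component of } T_u\}$ and, for every edge $u \arrow{a} v$ of $G$ and every $t \in V(T)$ with $u, v \in \beta(t)$, add an edge $(u, C_u(t)) \arrow{a} (v, C_v(t))$ to $H$, where $C_x(t)$ denotes the component of $T_x$ containing $t$. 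The bisimulation is $R = \{(u, (u, C)) \mid (u, C) \in V(H)\}$, and the first axiom of tree-shaped decompositions guarantees that every $v \in V(G)$ appears in some pair of $R$.

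Verifying that $R$ is a bisimulation is where the third axiom of tree-shaped decompositions plays the central role. The back condition is immediate from the construction, since every edge of $H$ was introduced from a witnessing edge of $G$ on matching copies. For the forth condition, given $u \arrow{a} v$ in $G$ and $(u, (u, C)) \in R$, I will pick any $t_1 \in C$ and apply property~(3) to obtain a node $t_3$ with $u, v \in \beta(t_3)$ such that $u \in \beta(t_2)$ for all $t_2$ between $t_1$ and $t_3$; the latter clause forces $t_3 \in C$, so the edge $(u, C) \arrow{a} (v, C_v(t_3))$ lies in $H$ and supplies the required partner.

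For the tree-width bound, I will use $T$ itself as the skeleton of a tree decomposition of $H$ with bags $B(t) = \{(u, C_u(t)) \mid u \in \beta(t)\}$. These bags have size at most $k$, the edge-coverage condition holds by the very definition of our edges, and the bag-connectivity condition follows because $\{t \in V(T) \mid (u, C) \in B(t)\}$ equals $C$, which is connected by definition as a component of $T_u$. Hence $H$ has tree-width at most $k-1$.

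To obtain the logspace transducer for fixed $k$, I will represent each component $C$ of $T_u$ by its topmost node, which can be found by walking upward from any $t \in C$ and halting as soon as $u$ leaves the bag. Under this representation, enumerating $V(H)$, computing $E(H)$ from the constant-size bags, and emitting the decomposition $(T, B)$ are all routine tree traversals carried out within $O(\log |G| + \log |T|)$ space. The hard part of the argument will be the interplay between property~(3) and bag-connectivity in the previous paragraph: without merging the copies of $u$ within a single component of $T_u$ we would break bag-connectivity of the natural decomposition, whereas without property~(3) that merging would break the forth direction of the bisimulation.
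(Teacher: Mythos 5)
Your proposal is correct and follows essentially the same approach as the paper's proof: your pairs $(u,C)$ with $C$ a connected component of $T_u$ are exactly the paper's equivalence classes $[u,t]$ of its ``related'' relation, and the edge set, the tree decomposition with bags $B(t)$, the bisimulation $R$, and the use of the third axiom for the forth condition all coincide. The only (harmless) addition is your explicit topmost-node representative for the logspace implementation, a detail the paper leaves implicit.
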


\begin{proof}
	Consider the set of pairs
	\[
		P = \{ (v,t) \in V(G) \times V(T) \mid v \in \beta(t) \} .
	\]
	We say that two pairs $(v,t_1), (w,t_3) \in P$ are {\em related} if $v = w$ and for all nodes $t_2 \in V(T)$ on the path
	between $t_1$ and $t_3$ we have $v \in \beta(t_2)$, which can be tested in logspace.
	This relation is an equivalence relation on $P$ and we denote with $[v,t]$ the equivalence class that contains $(v,t)$.
	We take the set of equivalence classes $\{ [v,t] \mid (v,t) \in P \}$
	as the node set of the graph $H$. For all $t \in V(T)$ and $v,w \in \beta(t)$ with $v \arrow{a} w$
	we add the edge $[v,t] \arrow{a} [w,t]$ to $H$. Clearly, $H$ is computable in logspace from $G$ and $(T,\beta)$.
	
	First, we claim that $(T,\gamma)$ is a tree decomposition for $H$ of width at most $k-1$ where
	$\gamma(t) = \{ [v,t] \mid v \in \beta(t) \}$ for all $t \in V(T)$.
	Clearly, every node and every edge of $H$ is contained in some bag $\gamma(t)$. Moreover, every bag $\gamma(t)$ has size at most $k$.
	We need to show that for each node $[v,t] \in V(H)$ the set
	\[
		\{ t' \in V(T) \mid [v,t] \in \gamma(t') \}
	\]
	is connected.
	Let $t_1, t_3 \in V(T)$ such that $[v,t] \in \gamma(t_1) \cap \gamma(t_3)$.
	By definition of $\gamma$ the pairs $(v,t)$, $(v,t_1)$ and $(v,t_3)$ must be related.
	Hence for all $t_2 \in V(T)$ on the unique path between $t_1$ and $t_3$ we have $v \in \beta(t_2)$,
	which implies $[v,t_2] \in \gamma(t_2)$.
	Since $(v,t_2)$ and $(v,t)$ are also related, we conclude $[v,t] \in \gamma(t_2)$.
	
	Now we show that $R = \{ (v,[v,t]) \mid (v,t) \in P \}$ is a bisimulation between $G$ and $H$.
	Let $(v,t) \in P$, i.e., $v \in \beta(t)$.
	If $v \arrow{a} w$ in $G$, then there exists $t' \in V(T)$ such that $v, w \in \beta(t')$
	and the pairs $(v,t)$ and $(v,t')$ are related. Hence we find the matching edge
	$[v,t] = [v,t'] \arrow{a} [w,t']$ in $H$.
	Conversely, if $[v,t] \arrow{a} [w,t]$ is an edge in $H$, then we know by definition that $v \arrow{a} w$
	is an edge in $G$.
\end{proof}
Proposition~\ref{prop:general-partial-unfolding} and Theorem~\ref{thm-elberfeld} together yield:

\begin{thm} \label{thm-mu}
	Let $k \in \N$ be a constant and $\varphi$ be a $\mu$-calculus formula.
	There exists a logspace algorithm which, 
	given a graph $G$, a node $v \in V(G)$ and a tree-shaped decomposition $(T,\beta)$ for $G$ of width at most $k$,
	tests whether $(G,v) \models \varphi$.
\end{thm}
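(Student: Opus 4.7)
The plan is to combine the three ingredients that are already lined up in the paragraph preceding the theorem. First, I would invoke the Janin--Walukiewicz characterization to translate the fixed $\mu$-calculus formula $\varphi$ into an equivalent (and also fixed) bisimulation-invariant MSO-formula $\psi(x)$; this happens entirely at ``compile time'' and does not depend on the input, so it costs nothing in terms of space.

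Next, on input $(G,v,(T,\beta))$, I would apply the logspace transducer from Proposition~\ref{prop:general-partial-unfolding} to produce a graph $H$ of tree-width at most $k-1$ together with (the code of) the bisimulation $R \subseteq V(G)\times V(H)$. Since $R$ has a $v$-total projection, I can select in logspace some $w \in V(H)$ with $(v,w)\in R$, for instance the lexicographically smallest $w$ with $v\in \beta(t)$ and $w=[v,t]$ for some $t$; this is just a search over the polynomially many node names output by the transducer.

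Then I would feed $(H,w)$ and the fixed sentence $\psi(x)$ (appropriately relativised to mark the node $w$) to the logspace model-checker of Theorem~\ref{thm-elberfeld}; this requires knowing that $H$ has bounded tree-width, which is exactly what the proposition delivers. Finally, since $\varphi$ is bisimulation-invariant and $v$ is bisimilar to $w$ in $G\uplus H$ via $R$, we have $(G,v)\models\varphi$ iff $(H,w)\models\varphi$ iff $(H,w)\models\psi$, so the output of the model-checker is the correct answer.

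The whole algorithm is a composition of a constant number of logspace stages, which is again logspace; the only subtle point is ensuring that the bisimulation-invariance argument is applied to the right pair of nodes, but this is automatic once $w$ is chosen from the image of $R$. I do not anticipate a genuine obstacle here, since all the hard work (the partial unfolding and the bounded tree-width model-checking) has been packaged into Proposition~\ref{prop:general-partial-unfolding} and Theorem~\ref{thm-elberfeld} respectively; the proof is essentially a one-line combination of the two.
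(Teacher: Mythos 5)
Your proposal is correct and is essentially the paper's own argument: the paper derives Theorem~\ref{thm-mu} by exactly this composition of the Janin--Walukiewicz translation, the logspace transducer of Proposition~\ref{prop:general-partial-unfolding}, and the logspace model-checker of Theorem~\ref{thm-elberfeld}, with the bisimulation-invariance step spelled out in the paragraph preceding the proposition. The details you add (picking $w$ with $(v,w)\in R$ in logspace, marking $w$ for the MSO model-checker) are the routine points the paper leaves implicit.
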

Note that in Theorem~\ref{thm-mu} we need the tree-shaped decomposition $(T,\beta)$ as part of the input. We do not
know whether for every constant $k \in \N$ there exists a logspace transducer that computes from a given graph $G$
a tree-shaped decomposition $(T,\beta)$ for $G$ of width at most $k$ and polynomial size, or outputs $0$ if such a decomposition does not exist.
For tree decompositions, such a logspace transducer indeed exists \cite{ElberfeldJT10}.

\section{Complexity of simulation} \label{simulation}

In this section, we consider simulation problems on restricted classes of finite graphs.
As for bisimulation, it is known that simulation on finite graphs is $\P$-complete.

The upper bounds from Theorem~\ref{thm-bisi-logspace} and~\ref{thm:bisi-nc1} for $\Bisim$(Trees)
carry over to $\Sim$(Trees). The proofs are in fact much easier, since 
the simulation problem for trees reduces to the evaluation of the Boolean circuit obtained from \eqref{bisi-formula}
by removing the second conjunction over all edges $v \arrow{a} v'$; in fact, this circuit is a tree. 
Also the lower bound proofs for Theorem~\ref{thm-lower-bound-L-NC1} hold for the simulation problem: Note that in Figure~\ref{fig:jenner-logspace}(A) the trees
$T_1$ and $T_2$ are isomorphic and hence $T_1$ is simulated by $T_2$. On the other hand, in Figure~\ref{fig:jenner-logspace}(B),
$T_1$ is not simulated by $T_2$. Similarly, for the AND-gadget $T_\wedge(G_1,G_2,H_1,H_2)$ and OR-gadget $T_\vee(G_1,G_2,H_1,H_2)$
(see Figure~\ref{fig-lower-NC1}) from the proof of Theorem~\ref{thm-lower-bound-L-NC1} we have
\begin{itemize}
		\item $G_1 \sqsubseteq H_1$ and $G_2 \sqsubseteq H_2 \iff T_\wedge(G_1,G_2,H_1,H_2) \sqsubseteq T_\wedge(H_1,H_2,G_1,G_2)$, and
		\item $G_1 \sqsubseteq H_1$ or $G_2 \sqsubseteq H_2 \iff T_\vee(G_1,G_2,H_1,H_2) \sqsubseteq T_\vee(G_1,H_2,H_1,G_2)$.
\end{itemize}
Hence we get:

\begin{thm}
The problem {\rm $\Sim$(Trees)} is $\L$-complete (resp., $\NC^1$-complete) if trees are given in pointer 
representation (resp., term representation).
\end{thm}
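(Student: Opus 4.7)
The plan is to follow the sketch spelled out in the paragraph preceding the theorem, fleshing out the two ideas that the authors compress there.

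For the upper bounds, the strategy is to mimic the encoding \eqref{bisi-formula} but drop the second big conjunction, yielding the tree-shaped system
\[
x_{u,v} \;=\; \bigwedge_{u \arrow{a} u'} \bigvee_{v \arrow{a} v'} x_{u',v'}
\]
for pairs $(u,v)$ with $\depth(u) = \depth(v)$, where $u$ (resp.\ $v$) ranges over the descendants of $v_1$ in $T_1$ (resp.\ of $v_2$ in $T_2$). Since $T_1$ is a tree, for every $u'$ there is a unique parent $u$, hence each variable $x_{u',v'}$ occurs exactly once on the right-hand side of the unique equation in which it appears. The width of this tree-shaped circuit is therefore $1$, and the underlying object is literally a Boolean formula tree. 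For pointer representation, evaluating a Boolean formula given as a tree is in $\L$ by standard depth-first traversal (equivalently, the trivial $m=1$ case of Theorem~\ref{thm:eval-logspace}). For term representation, I would produce the term representation of the above formula from the term representations of $T_1$ and $T_2$ in $\TC^0$, verbatim as in the proof of Theorem~\ref{thm:bisi-nc1}, and then invoke Buss's $\NC^1$ algorithm for Boolean formula value (or Theorem~\ref{thm:eval-nc1} with $m=1$).

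For the $\L$-hardness in the pointer setting I would reuse the reduction from $st$-reachability on paths that appears in Figure~\ref{fig:jenner-logspace}. The key observation, already noted in the text, is that in case (A) the trees $T_1$ and $T_2$ are isomorphic, so the root of $T_1$ is trivially simulated by the root of $T_2$; in case (B) the $a$-labelled edge lies on a branch of $T_1$ where no branch of $T_2$ at the same depth carries an $a$-labelled edge, so the simulation condition \ref{forth-condition} fails already at the root. This lifts the reachability reduction from bisimulation to simulation. For the $\NC^1$-hardness in the term setting, I would verify the two stated AND/OR-gadget equivalences: the $\Rightarrow$ directions are immediate by taking the pointwise union of the given simulations, while the $\Leftarrow$ directions use the rigidity of the gadgets, namely that any simulation from the left-hand tree to the right-hand one must match each grandchild of the root on the left to a grandchild on the right with the same edge-label pattern, from which the required simulations between the $G_i$ and $H_i$ can be extracted. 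With these combinatorial equivalences in hand, the $\AC^0$-reduction of \cite{JennerKMT03} from the balanced Boolean formula value problem transfers without change.

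The one genuine obstacle is the $\Leftarrow$ direction of the AND/OR-gadget equivalences. For bisimulation these come almost for free from symmetry of $\sim$; for simulation one has to track carefully which subtree on the right is forced to simulate which subtree on the left, ruling out the "cross-matchings" that would otherwise make the equivalence fail. Once this bookkeeping is done, the rest of the proof is a mechanical adaptation of the corresponding proofs for bisimulation (Theorem~\ref{thm-bisi-logspace}, Theorem~\ref{thm:bisi-nc1}, and Theorem~\ref{thm-lower-bound-L-NC1}).
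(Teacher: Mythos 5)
This is essentially the paper's own proof: the upper bounds come from observing that dropping the second conjunction in \eqref{bisi-formula} yields a tree-shaped circuit of width $1$, i.e.\ a Boolean formula tree, evaluable in $\L$ resp.\ $\NC^1$; the lower bounds come from re-running the reachability reduction of Figure~\ref{fig:jenner-logspace} and the AND/OR gadgets of Figure~\ref{fig-lower-NC1} with $\sqsubseteq$ in place of $\sim$. One local correction, though: your justification for why $T_1 \not\sqsubseteq T_2$ in Figure~\ref{fig:jenner-logspace}(B) is wrong. There \emph{is} a branch of $T_2$ carrying an $a$-labelled edge at the same depth as the $a$-edge of $T_1$ (namely the branch through $v_j'$), and the branch of $T_1$ containing the $a$-edge is in fact simulated by it, since that branch of $T_1$ dead-ends at $v_i$ and simulation tolerates the left side stopping early. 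The actual obstruction sits on the \emph{other} branch of $T_1$: its entirely $b$-labelled subtree through $v_j'$ can be matched neither by the short dead-end branch of $T_2$ (wrong depth) nor by the branch of $T_2$ through $v_j'$, which at depth $\depth(v_j')$ offers only an $a$-transition where $T_1$ demands a $b$-transition. The conclusion you need is still true, so the reduction goes through, but as written your argument for it would not.
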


On the other hand, the construction from the proof of Lemma~\ref{lemma-unlabeling-in-AC0} (elimination of edge labels)
does not carry over to the simulation problem. Indeed, there is a difference in the complexities of the simulation problems 
for edge-labelled trees and unlabelled trees, if trees are given in term representation.
The easy observation is that in unlabelled graphs, $u$ is simulated by $v$ if and only if the longest path starting in $u$
is at most as long as the longest path starting in $v$ (the path lengths can be infinite).

\begin{thm} \label{thm-lower-sim}
The following holds:
\begin{itemize}
\item The problem {\rm $\Sim$(Un}\-{\rm labelled-Graphs)} is $\NL$-complete.
\item The problem {\rm $\Sim$(Unlabelled-Trees)} is $\L$-complete if the trees are given in pointer 
representation.
\item The problem {\rm $\Sim$(Unlabelled-Trees)} is $\TC^0$-complete if the trees are given in term representation.
\end{itemize}
\end{thm}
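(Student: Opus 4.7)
All three parts rest on the observation stated just before the theorem: on an unlabelled graph, $u \sqsubseteq v$ iff the longest path starting at $u$ has length at most the longest path starting at $v$, with the length being $\infty$ whenever the source reaches a cycle. I would first justify this by a short induction on the approximants $\sqsubseteq_k$ (using that there is only one edge label, so the back and forth clauses collapse); thereafter each part becomes a path-length comparison.

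For part~(1) the upper bound uses that $v_1 \not\sqsubseteq v_2$ is equivalent to the existence of some $k \in \N \cup \{\infty\}$ such that $v_1$ admits a path of length $k$ while $v_2$ admits no path of length at least $k$. Both the existence (or, for $k = \infty$, the existence of a cycle reachable from $v_1$) and the non-existence from $v_2$ are decidable in $\NL$ by Immerman--Szelepcs\'enyi, placing simulation in $\NL$. For $\NL$-hardness I would reduce from dag reachability: given a dag $G$ on $n$ nodes and $s, t$, add a self-loop at $t$ to obtain $G_1$ and take $G_2$ to be a path on $n+1$ nodes rooted at $p_0$; then $s$ reaches $t$ iff $s$ reaches a cycle in $G_1$ iff $s \not\sqsubseteq p_0$, and closure of $\NL$ under complement yields $\NL$-hardness.

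For part~(2) the longest path from a node in an unlabelled tree is just the height of its subtree, computable in $\L$ by the standard logspace depth-first tree traversal used elsewhere in this paper, so comparing the two heights is in $\L$. For $\L$-hardness I would reduce from the $\L$-complete problem ``given a rooted path $P$ in pointer representation and two nodes $s, t$ of $P$, does $s$ precede $t$?'': a logspace transducer extracts the subpath starting at $t$ as $T_1$ and the subpath starting at $s$ as $T_2$, and since height along a path decreases monotonically with position, $T_1 \sqsubseteq T_2$ iff $s$ weakly precedes $t$.

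For part~(3), the height of a tree given as a balanced bracket string $s$ equals one less than the maximum over positions $i$ of $|\{j\leq i : s_j = \auf\}| - |\{j\leq i : s_j = \zu\}|$; each prefix count is in $\TC^0$, the maximum of polynomially many binary-encoded numbers is in $\TC^0$, and height comparison is then in $\TC^0$. For $\TC^0$-hardness I would reduce from the problem ``given a balanced bracket string $s$ and a unary integer $k$, is the maximum nesting depth of $s$ at least $k$?'', which reduces to simulation via $T_1 := \auf^k\zu^k$ and $T_2 := s$; this max-depth problem is itself $\TC^0$-hard via an $\AC^0$-reduction from a standard $\TC^0$-complete counting problem such as majority, using a fixed-length bracket gadget per input bit whose cumulative effect on the nesting depth records the bit count, padded outside so that the overall string is balanced in $\AC^0$. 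The hard part will be designing this gadget in $\AC^0$: the output length and the total open/close balance must be independent of the input bits, while the maximum nesting depth must still accumulate bit by bit; once this is in place, the other lower bounds follow routinely from the path-length characterisation.
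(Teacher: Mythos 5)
Your overall strategy --- the longest-path characterisation of simulation on unlabelled graphs, path-length/height computations for the three upper bounds, and reachability/order/majority reductions for the lower bounds --- is exactly the paper's. Part (1) is fine. But two of your lower-bound arguments have genuine problems. For part (2), your reduction ``extracts the subpath starting at $t$'' with a logspace transducer. Deciding which nodes of a pointer-represented path lie after $t$ is precisely the order-between-vertices problem you are reducing from, so the step is circular; and, more fundamentally, an $\L$-hardness proof must use reductions strictly weaker than $\L$ (the paper uses many-one $\AC^0$-reductions), since under logspace reductions every nontrivial language is trivially $\L$-hard. The extraction is also unnecessary: $\Sim(\C_1,\C_2)$ takes designated nodes as part of the input, and simulation from a node of a tree depends only on the subtree below that node, so you can present the whole path unchanged as both input trees with designated nodes $t$ and $s$ --- an essentially identity, hence $\AC^0$, reduction, which is what the paper does.

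For part (3), you correctly identify that the entire content of the lower bound is the $\AC^0$ gadget turning a majority instance into a balanced bracket string whose nesting depth records the bracket surplus --- and then you leave that gadget unconstructed, so the proof is incomplete exactly where it is nontrivial. For the record, the paper's gadget is simple: given $x \in \{\auf,\zu\}^n$, take $T_1 = \auf^{2n+2}\zu^{2n+2}$ and $T_2 = \auf^{n+1}\, x \;\auf^{n+1}\zu^{n+1}\; \overline{x}\, \zu^{n+1}$, where $\overline{x}$ reverses $x$ and swaps the two bracket symbols. This string is balanced, every proper prefix has positive balance (the $\auf^{n+1}$ padding dominates any deficit of a prefix of $x$), each output symbol depends on at most one input symbol (so the map is $\AC^0$), and the maximum nesting depth is $2n+2+b$ where $b$ is the surplus of opening brackets in $x$; hence the root of $T_1$ is simulated by the root of $T_2$ iff $b \ge 0$. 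Your sketch is compatible with this construction, but as written it does not yet constitute a proof.
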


\begin{proof}
	The upper bounds are easy to show.
	In general graphs we can test in $\NL$ whether from a given node there exists a path of a given length $d \in \N \cup \{ \infty \}$.
	Since $\NL$ is closed under complement, we can also check whether from a given node there does not exist a path of a given length $d \in \N \cup \{ \infty \}$.
	To decide simulation for unlabelled trees we only need to compare the heights,
	which can be done in logspace if the trees are given in pointer representation,
	and in $\TC^0$ if the trees are given in term representation.
	
	For the $\NL$-hardness on general graphs we reduce from the $\NL$-complete problem whether a cycle is reachable from a given node $u$ in a given graph $G$.
	The latter holds if and only if $u$ is simulated by the root of a path of length $|V(G)|$.
	
	For the $\L$-hardness on trees we again reduce from the reachability problem on directed paths \cite{Ete97}.
	Assume that $G$ is a path (which we view as a tree) with $n$ nodes and we want to test whether there exists a path from $u \in V(G)$
	to $v \in V(G)$. Note that such a path exists if and only if the longest path starting in $v$
        is at most as long as the longest path starting in $u$. By the observation before Theorem~\ref{thm-lower-sim} this is equivalent
        to the fact that $v$ is simulated by $u$.

	For the $\TC^0$-hardness we reduce from the $\TC^0$-complete majority problem:
	Given a string $x = a_1 \cdots a_n \in \{ \auf, \zu \}^*$, is the number of opening brackets in $x$
	at least the number of closing brackets in $x$?
	Consider the path $T_1 = \auf^{2n+2} \zu^{2n+2}$ of length $2n+1$ and the tree
	\[
		T_2 = \auf^{n+1} x \; \auf^{n+1} \zu^{n+1} \; \overline{x} \zu^{n+1}
	\]
	where $\overline{x} = \overline{a_n} \cdots \overline{a_1}$ and $\overline{\phantom{a}}$ swaps $\auf$ and $\zu$.
	One can verify that $T_2$ is indeed a valid term representation and that both terms can be computed in $\AC^0$ from the string $x$.
	Notice that the node in $T_2$ that is represented by the innermost bracket pair $\auf \zu$ has maximal depth.
	Now one can show that the root of $T_1$ is simulated by the root of $T_2$ if and only if
	the height of $T_2$ is at least $2n+1$ if and only if
	the number of $\auf$'s in $x$ is at least the number of $\zu$'s. 
\end{proof}
In the rest of Section~\ref{simulation} we complete the complexity picture for the simulation problem on finite graphs with edge labels.
In Section~\ref{simulation-bounded-path-width} we show that $\P$-completeness for the simulation problem already holds for 
graphs of bounded path-width.
In Section~\ref{sec-sim-tree-dag} we show that simulation between a tree and an arbitrary graph,
as well as simulation between an arbitrary graph and a tree is $\LogCFL$-complete.

\subsection{Simulation on graphs of bounded path-width} \label{simulation-bounded-path-width}

In this section we show that the simulation problem is $\P$-complete on
graphs of bounded path-width, and hence also on graphs of bounded tree-width.
It remains open whether the bisimulation problem for graphs of bounded tree-width belongs to $\NC$
or remains $\P$-complete.
For integers $i,j$ we use the abbreviation $[i,j] = \{ k \in \N \mid i \leq k \leq j \}$.
Let PW$_{\le k}$ denote the class of all graphs with path-width at most $k$.

\begin{thm}
	There is a number $k$ such that $\Sim${\rm (PW$_{\le k}$)} \P-complete.
\end{thm}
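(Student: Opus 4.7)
My plan is as follows. The $\P$-inclusion is immediate from the $\P$-completeness of $\Sim${\rm (Graphs)} quoted earlier, so only $\P$-hardness on PW$_{\le k}$ for some fixed $k$ needs to be proven. I would reduce from the monotone circuit value problem, which is $\P$-complete even for layered circuits with alternating $\land$-layers and $\lor$-layers and fan-in two. The idea is to exploit the ``product'' nature of the simulation game: although the two graphs $G_1, G_2$ will each have bounded path-width, the simulation relation is computed over $V(G_1) \times V(G_2)$, whose implicit structure suffices to encode circuit evaluation. Note that a naive ``circuit as a single graph'' encoding cannot work here because the circuit value problem for graphs of bounded tree-width is in logspace by Theorem~\ref{theo-coro-elberfeld}; the hardness must arise from the product of the two small graphs, not from either graph alone.

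Concretely, given a normalized circuit $C$, I would build edge-labeled graphs $G_1$ and $G_2$ and designated nodes $v_1, v_2$ such that $v_1 \sqsubseteq v_2$ iff the output gate evaluates to $1$. The $\forall/\exists$ alternation inherent to simulation ($u \sqsubseteq v$ iff $\forall u \arrow{a} u' \, \exists v \arrow{a} v'$ with $u' \sqsubseteq v'$) would be exploited to implement gates: at each $\land$-gate Spoiler's forced universal choice on the $G_1$-side selects among two sub-evaluations that Duplicator must then match; at each $\lor$-gate Duplicator's existential choice on the $G_2$-side selects which sub-evaluation will be pursued. Edge labels would be used to ensure that the two sides proceed ``in lock-step'' along the circuit's topological order, and that an incorrect move immediately falsifies the simulation (via a dead-end label pattern that cannot be matched).

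For the path-width bound, I would arrange both $G_1$ and $G_2$ along a linear spine indexed by the positions of the gate instructions of $C$ listed in topological order, attaching a constant-size local gadget at each position. A path decomposition following this spine would then have bags that contain only the gadget for the currently processed gate together with a bounded number of neighboring gadgets, yielding path-width bounded by an absolute constant $k$.

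The main obstacle will be handling circuit fan-out (where the same sub-circuit feeds into multiple later gates) without blowing up the path-width: in a direct layout the cross-edges between distant gadgets would destroy the linear structure. My intended remedy is to implement ``copy'' gadgets that duplicate a truth value along the spine in $O(1)$ positions each, together with an auxiliary bookkeeping label that synchronizes the two copies on the $G_2$-side. Proving that the simulation game on the resulting pair is indeed equivalent to the circuit's value — and, in parallel, that the path decomposition along the spine retains constant width through all copy-gadgets — will be the technical heart of the argument.
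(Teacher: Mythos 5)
You have correctly isolated the one idea that matters here --- the hardness must live in the product $V(G_1)\times V(G_2)$, since each factor alone has bounded path-width and is harmless by Theorem~\ref{theo-coro-elberfeld} --- but the concrete construction you sketch does not realize that idea, and the obstacle you defer to ``the technical heart'' is fatal to your layout. If both $G_1$ and $G_2$ are spines indexed by the \emph{same} parameter (the gates in topological order), with constant-size gadgets and edge labels forcing lock-step motion over that parameter, then the current gate is essentially determined by the current spine position, so (at least) one of the two graphs must contain an image of the circuit's wiring; bounded path-width of that graph then bounds the path-width of the circuit itself, and by your own opening observation such circuits are already evaluable in logspace --- so no $\P$-hard instance survives your normalization. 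The copy-gadget remedy does not escape this: a $\P$-complete circuit has unbounded cutwidth, so at some cut of any linear layout unboundedly many wire values are simultaneously live, and routing them through $O(1)$-size gadgets per position either blows up the bags or multiplexes unboundedly many values into $O(1)$ states. If instead you weaken the synchronization so that one spine position serves many gates, the reachable part of the product collapses to (essentially) a diagonal carrying $O(1)$ bits per position, a constant-memory one-pass scan that cannot decide a $\P$-hard problem.

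The missing idea is to index the two graphs by two \emph{different, orthogonal} coordinates of the computation, so that the product is genuinely two-dimensional, and to reduce not from an arbitrary circuit but from an object whose dependencies are local in both coordinates. The paper reduces from the computation tableau $C(i,t)$ of a polynomial-time Turing machine: one graph $P$ is indexed by tape positions $i$ (edges only between adjacent positions, path-width $3$), the other graph $T$ by time steps $t$ (a constant-size gadget per step, since the alphabet $\Omega$ is fixed), and the invariant proved by induction on $t$ is that $C(i,t)=a$ iff $(i,0)\sqsubseteq(t,a)$. Because cell $(i,t)$ depends only on cells $(i-1,t-1),(i,t-1),(i+1,t-1)$, neither graph ever needs a long-range edge and no value is ever ``carried'' inside a single graph: the pair consisting of a position-node and a time-node itself addresses the tableau cell whose content is being asserted. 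Reworking your reduction along these lines --- equivalently, first normalizing your source circuit into a grid-shaped tableau circuit with nearest-neighbour wiring and then letting $G_1$ range over rows and $G_2$ over columns --- is what is needed to close the gap.
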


\begin{proof}
	Fix a \P-complete language $L \subseteq \{0,1\}^*$ and a deterministic polynomial time bounded Turing machine 
	$M = (Q,\Gamma,\{0,1\}, q_0, q_f, \delta)$
	that accepts $L$. Here $Q$ is the set of states, $\Gamma \supseteq \{0,1,\Box\}$
	is the tape alphabet ($\Box$ is the blank symbol), $q_0$ is the initial state, $q_f$ is the final state, and 
	$\delta \colon Q \times \Gamma \to Q \times \Gamma \times \{\to,\gets\}$
	is the transition function, where $\to$ and $\gets$ indicate the head direction. 
	The machine has a single two-way infinite tape, whose cells are indexed with integers. Initially, the input $x$ is written in 
	cells $0, \ldots, |x|-1$ and the tape head scans cell $0$.
	We can assume that there is a polynomial $p(n)$ such that for every input $x \in \{0,1\}^*$ we have:
	$x \in L$ if and only if after $p(|x|)$ many transitions the machine is in state $q_f$, cell $0$ contains $\Box$, and 
	the tape head scans cell $0$. 
	
	We can view configurations of $M$ as words from $\Gamma^* (Q \times \Gamma) \Gamma^*$.
	Let $\Omega = \Gamma \cup (Q \times \Gamma)$. 
	We define a partial mapping $\Delta \colon \Omega^3 \to \Omega$ which
	computes from the three symbols at positions $i-1$, $i$, $i+1$ in a configuration
	the symbol at position $i$ in the successor configuration.
	Formally, for all $a,a',b,c \in \Gamma$, $p,q \in Q$ we define
	\begin{itemize}
		\item $\Delta(a,b,c)=b$ 
		\item $\Delta(b,c,(q,a)) = (p,c)$ if $\delta(q,a) = (p,a',\gets)$
		\item $\Delta(b,c,(q,a)) = c$ if $\delta(q,a) = (p,a',\to)$ 
		\item $\Delta(b,(q,a),c) = a'$  if $\delta(q,a) = (p,a',d)$ for some $d \in \{\to,\gets\}$ 
		\item  $\Delta((q,a),b,c) = b$ if $\delta(q,a) = (p,a',\gets)$ 
		\item $\Delta((q,a),b,c) = (p,b)$ if $\delta(q,a) = (p,a',\to)$ 
	\end{itemize}
	In all other cases, $\Delta$ is undefined.
	
	Let us fix an input $x = a_0 a_1 \cdots a_{n-1}$ of length $n>0$  for the machine $M$ and let $N = p(n)+1$.
	Then there exists a unique computation of $M$ on input $x$. We denote with $C$ the
	corresponding computation table. Formally, it is a mapping
	$C \colon [-N,N] \times [0,N-1] \to \Omega$, where $C(i,t)$ is the symbol at cell $i$ in the $t$-th configuration.
	It can be defined by the following properties:
	\begin{itemize}
		\item $C(0,0) = (a_0,q_0)$, $C(i,0)=a_i$ for $i \in [1,n-1]$, $C(i,0) = \Box$ for $i \in [-N,N]  \setminus [0,n-1]$,
		\item $C(-N,t) = C(N,t) = \Box$ for all $t \in [0,N-1]$
		\item $C(i,t) = \Delta(C(i-1,t-1), C(i,t-1), C(i+1,t-1))$ for all $t \in [1,N-1]$, $i \in [-N+1,N-1]$.
	\end{itemize}
	Let us fix the set of edge labels $A = \{ -1, 0, 1, \alpha \} \uplus \Omega$. We define two edge-labelled graphs  $P$ 
	(for position) and $T$ (for time) with
	edge labels from $A$ and the node sets 
	$$
	V(P) =   [-N,N] \times  \{0,1\},  \quad
	V(T) =   [0,N-1]  \;\cup\; [0,N-1] \times \Omega \;\cup\;  [0,N-1] \times \Omega^3 .
	$$
	For better readability, we write edges of $P$ (resp., $T$) 
	as $x \arrow{a}_P y$ (resp., $x \arrow{a}_T y$). Then, $P$ and $T$ contain the following edges:
	\begin{align*}
		& (i,0)   \arrow{\alpha}_P   (i,1)  \text{ for all }  i \in [-N,N]  \\
		& (i,1)   \arrow{\delta}_P  (i+\delta,0) \text{ for all }  i \in [-N,N], \delta \in \{-1,0,1\} \text{ with }  i+\delta \in [-N,N] \\
		& (i,0)   \xrightarrow{C(i,0)}_P  (i,0)  \text{ for all }  i \in [-N,N] \\
		& (t,a)   \arrow{\alpha}_T    (t-1, b,c,d) \text{ for all } a \in \Omega, t \in [1,N-1], (b,c,d) \in \Delta^{-1}(a) \\
		& (t,a_{-1},a_0,a_1)   \arrow{\delta}_T    (t, a_\delta) \text{ for all } t \in [0,N-2], a_{-1},a_0,a_1 \in \Omega, \delta \in \{-1,0,1\} \\
		& (t,a)   \arrow{b}_T  t  \text{ for all }  a,b \in \Omega, t \in [1,N-1] \\
		& (0,a)   \arrow{b}_T  0 \text{ for all }  a \in \Omega, b \in \{a,\alpha\}  \\
		&  t \arrow{a}_T  t \text{ for all }  t \in [0,N-1], a \in A 
	\end{align*}
	An example of the construction is shown in Figure~\ref{fig:ptime-computation}, where we assume $N=3$ for simplicity.
	It is easy to see that both $P$ and $T$ (and hence also the disjoint union of $P$ and $T$) have bounded path-width. 
	More precisely, $P$ has path-width 3 (the bags are the sets $\{(i,0), (i,1), (i+1,0), (i+1,1)\}$ for $-N \leq i \leq N-1$), whereas
	the path-width of $T$ is bounded by $|\Omega|^3 + |\Omega|$ (the bags are the set
	$\{t, (t,a), (t-1,b,c,d) \mid a,b,c,d \in \Omega \}$ for $1 \leq t \leq N-1$ and $\{(t,a), (t,b,c,d) \mid a,b,c,d \in \Omega \}$ 
	for $0 \leq t \leq N-2$). Recall that $|\Omega|$ is a fixed constant since the machine $M$ is fixed.

	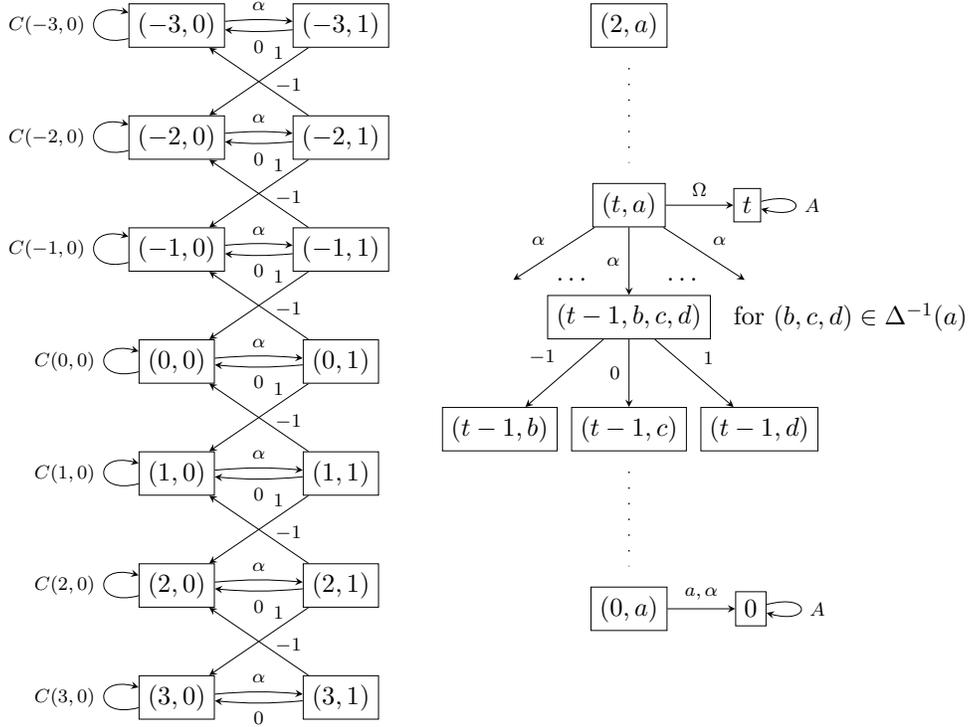
\begin{figure}[t]
		\centering
		
		\tikzstyle{lts} = [->, >=stealth]
		\tikzstyle{state} = [draw]
		\tikzstyle{loopl} = [loop left, distance = .7cm]
		\tikzstyle{loopr} = [loop right, distance = .7cm]
		
		\scalebox{.9}{
			\begin{tikzpicture}[lts]
			
			\node [state] (10) {$(-3,0)$};
			\node [state,below = 1cm of 10] (20) {$(-2,0)$};
			\node [state,below = 1cm of 20] (30) {$(-1,0)$};
			\node [state,below = 1cm of 30] (40) {$(0,0)$};
			\node [state,below = 1cm of 40] (50) {$(1,0)$};
			\node [state,below = 1cm of 50] (60) {$(2,0)$};
			\node [state,below = 1cm of 60] (70) {$(3,0)$};
			
			\node [state,right = 1cm of 10] (11) {$(-3,1)$};
			\node [state,below = 1cm of 11] (21) {$(-2,1)$};
			\node [state,below = 1cm of 21] (31) {$(-1,1)$};
			\node [state,below = 1cm of 31] (41) {$(0,1)$};
			\node [state,below = 1cm of 41] (51) {$(1,1)$};
			\node [state,below = 1cm of 51] (61) {$(2,1)$};
			\node [state,below = 1cm of 61] (71) {$(3,1)$};
			
			\draw[bend left = 5] (10) edge node [above] {\scriptsize $\alpha$} (11);
			\draw[bend left = 5] (20) edge node [above] {\scriptsize $\alpha$} (21);
			\draw[bend left = 5] (30) edge node [above] {\scriptsize $\alpha$} (31);
			\draw[bend left = 5] (40) edge node [above] {\scriptsize $\alpha$} (41);
			\draw[bend left = 5] (50) edge node [above] {\scriptsize $\alpha$} (51);
			\draw[bend left = 5] (60) edge node [above] {\scriptsize $\alpha$} (61);
			\draw[bend left = 5] (70) edge node [above] {\scriptsize $\alpha$} (71);
			
			\draw[bend left = 5] (11) edge node [below] {\scriptsize $0$} (10);
			\draw[bend left = 5] (21) edge node [below] {\scriptsize $0$} (20);
			\draw[bend left = 5] (31) edge node [below] {\scriptsize $0$} (30);
			\draw[bend left = 5] (41) edge node [below] {\scriptsize $0$} (40);
			\draw[bend left = 5] (51) edge node [below] {\scriptsize $0$} (50);
			\draw[bend left = 5] (61) edge node [below] {\scriptsize $0$} (60);
			\draw[bend left = 5] (71) edge node [below] {\scriptsize $0$} (70);

			\draw (11) edge node [above, pos=0.3] {\scriptsize $1$} (20);
			\draw (21) edge node [above, pos=0.3] {\scriptsize $1$} (30);
			\draw (31) edge node [above, pos=0.3] {\scriptsize $1$} (40);
			\draw (41) edge node [above, pos=0.3] {\scriptsize $1$} (50);
			\draw (51) edge node [above, pos=0.3] {\scriptsize $1$} (60);
			\draw (61) edge node [above, pos=0.3] {\scriptsize $1$} (70);
			
			\draw (21) edge node [above, pos=0.2] {\scriptsize $-1$} (10);
			\draw (31) edge node [above, pos=0.2] {\scriptsize $-1$} (20);
			\draw (41) edge node [above, pos=0.2] {\scriptsize $-1$} (30);
			\draw (51) edge node [above, pos=0.2] {\scriptsize $-1$} (40);
			\draw (61) edge node [above, pos=0.2] {\scriptsize $-1$} (50);
			\draw (71) edge node [above, pos=0.2] {\scriptsize $-1$} (60);
			
			\draw[loopl] (10) edge node {\scriptsize $C(-3,0)$} (10);
			\draw[loopl] (20) edge node {\scriptsize $C(-2,0)$} (20);
			\draw[loopl] (30) edge node {\scriptsize $C(-1,0)$} (30);
			\draw[loopl] (40) edge node {\scriptsize $C(0,0)$} (40);
			\draw[loopl] (50) edge node {\scriptsize $C(1,0)$} (50);
			\draw[loopl] (60) edge node {\scriptsize $C(2,0)$} (60);
			\draw[loopl] (70) edge node {\scriptsize $C(3,0)$} (70);

			\node[state, right = 5.4cm of 10] (first) {$(2,a)$};
			
			\node[state, below = 2cm of first] (s0) {$(t,a)$};
			\node[state, below = 1cm of s0] (s1) {$(t-1,b,c,d)$};
			\node[above left = 0cm and .5cm of s1] (s2) {};
			\node[above right = 0cm and .5cm of s1] (s3) {};
			\node[state, below = 1cm of s1] (s4) {$(t-1,c)$};
			\node[state, left = .2cm of s4] (s5) {$(t-1,b)$};
			\node[state, right = .2cm of s4] (s6) {$(t-1,d)$};
			
			\node[right = .2cm of s1] {for $(b,c,d) \in \Delta^{-1}(a)$};
			\node[above left = -.2cm and .5cm of s3] {$\dots$};
			\node[above right = -.2cm and .5cm of s2] {$\dots$};
			
			\draw[etc] (first) edge (s0);
			
			\draw (s0) edge node [left] {\scriptsize $\alpha$} (s1);
			\draw (s0) edge node [above left] {\scriptsize $\alpha$} (s2);
			\draw (s0) edge node [above right] {\scriptsize $\alpha$} (s3);
			\draw (s1) edge node [left] {\scriptsize $0$} (s4);
			\draw (s1) edge node [above left] {\scriptsize $-1$} (s5);
			\draw (s1) edge node [above right] {\scriptsize $1$} (s6);
			
			\node[state, right = 1cm of s0] (s00) {$t$};
			
			\draw (s0) edge node [above] {\scriptsize $\Omega$} (s00);
			\draw[loopr] (s00) edge node {\scriptsize $A$} (s00);
			
			\node[state, below = 2cm of s4] (last) {$(0,a)$};
			\draw[etc] (s4) edge (last);
			
			\node[state, right = 1cm of last] (last0) {$0$};
			
			\draw (last) edge node [above] {\scriptsize $a,\alpha$} (last0);
			\draw[loopr] (last0) edge node {\scriptsize $A$} (last0);
			
			\end{tikzpicture}
		}
		
		\caption{The edge-labelled graphs $P$ and $T$, where we assume $N=3$ for simplicity.}
		\label{fig:ptime-computation}
	\end{figure}

	The following claim proves the theorem, since $M$ accepts $x$ if and only if $C(0,N-1) = (q_f, \Box)$ by our assumptions on $M$.
	
	\subparagraph*{Claim.} For all $t \in [0,N-1]$, $i \in [-N,N]$ and
	$a \in \Omega$ such that $i+t \leq N$ and $i-t \geq -N$ we have:
	$C(i,t) = a$ if and only if $(i,0) \sqsubseteq (t,a)$.  
	
	We prove the claim by induction on $t$. First, note that for every  $v \in V(P)$  and every  $t \in [0,N-1] \subseteq V(T)$ 
	we have $v \sqsubseteq t$, since at $t$ we can loop with every $a \in A$.
	For $t = 0$ note that indeed $(i,0) \sqsubseteq (0, C(i,0))$: The only outgoing edges 
	for $(i,0)$ are labelled with $C(i,0)$ and $\alpha$. From $(0,C(i,0))$ these labels lead to node $0$, which
	simulates every node of $P$. On the other hand, if $a \neq C(i,0)$, then $(i,0)$ has a 
	$C(i,0)$-labelled outgoing edge, whereas $(0,a)$ has no such outgoing edge.
	This implies $(i,0) \not\sqsubseteq (0,a)$.
	
	Now assume that $t \in [1,N-1]$, $i \in [-N,N]$, $i+t \leq N$, and $i-t \geq -N$
	and that the claim holds for $t-1$. First assume that $C(i,t) = a$. Since $t \geq 1$,
	we have $i+1 \leq N$ and $i-1 \geq -N$, i.e., $i \in [-N+1,N-1]$.
	We have to show that $(i,0) \sqsubseteq (t,a)$. The edge
	$(i,0)   \xrightarrow{C(i,0)}_P  (i,0)$ can be simulated by the edge $(t,a)  \xrightarrow{C(i,0)}_T t$ (recall that $(i,0) \sqsubseteq t$). 
	Now consider the other possible edge $(i,0)   \arrow{\alpha}_P  (i,1)$. Since $C(i,t) = a$,
	there must exist $(b,c,d) \in \Delta^{-1}(a)$ such that $b = C(i-1,t-1)$, $c = C(i,t-1)$, and $d = C(i+1,t-1)$.
	Also note that $i+\delta+(t-1) \leq N$ and $i+\delta-(t-1) \geq  -N$ for all $\delta \in \{-1,0,1\}$.
	Hence, by induction $(i-1,0) \sqsubseteq (t-1,b)$, $(i,0) \sqsubseteq (t-1,c)$, and  $(i+1,0) \sqsubseteq (t-1,d)$.
	But this implies that $(i,1) \sqsubseteq (t-1, b,c,d)$.
	Hence, we can choose the edge $(t,a)   \arrow{\alpha}_T    (t-1, b,c,d)$ in order to simulate the edge 
	$(i,0)   \arrow{\alpha}_P  (i,1)$. 
	
	Finally, assume that $C(i,t) \neq a$. We have to show that $(i,0) \not\sqsubseteq (t,a)$.
	Let us choose the edge $(i,0)   \arrow{\alpha}_P  (i,1)$. We have to show that for every
	$(b,c,d) \in \Delta^{-1}(a)$, $(i,1) \not\sqsubseteq (t,b,c,d)$. Let us fix a triple 
	$(b,c,d) \in \Delta^{-1}(a)$. Since $C(i,t) \neq a$, one of the following three statements holds:
	$C(i-1,t-1) \neq b$, $C(i,t-1) \neq c$, $C(i+1,t-1) \neq d$. Hence, by induction, 
	$(i-1,0) \not\sqsubseteq (t-1,b)$ or $(i,0)  \not\sqsubseteq (t-1,c)$ or  $(i+1,0) \not\sqsubseteq (t-1,d)$.
	This implies that, indeed, $(i,1) \not\sqsubseteq (t,b,c,d)$.
\end{proof}
It seems to be difficult to modify the above proof so that it shows $\P$-hardness for bisimulation on
graphs of bounded path-width or bounded tree-width.  One might try to restrict the choices
of the players in the bisimulation game (see e.g.~\cite{AcIng07}) so that they are forced to play as in the simulation game.
There is a technique to achieve this (defenders forcing) but the problem is that it yields grid-like
graph structures and hence graphs of unbounded tree-width.

\subsection{Simulation between  trees and general graphs} \label{sec-sim-tree-dag}

In this section we show that the problems $\Sim$(Trees, Graphs)  and $\Sim$(Graphs, Trees) are both
\LogCFL-complete. Since simulation is not symmetric, it is not obvious that these two problems are equivalent.

\begin{thm}
	\label{thm:trees-graphs}
	\Sim{\rm (Trees, Graphs)} is \LogCFL-complete.
\end{thm}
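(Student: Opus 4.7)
The plan is to prove the two directions separately. For the $\LogCFL$ upper bound I will use the characterization of $\LogCFL$ as the class of languages accepted by an alternating logspace Turing machine whose accepting computations admit a polynomial-size proof tree. The machine stores the current position $(u,v) \in V(T) \times V(G)$ in logarithmic space. From a configuration $(u,v)$ it universally branches over the outgoing tree edges $u \arrow{a} u'$, and for each such edge existentially picks a matching graph edge $v \arrow{a} v'$ before recursing on $(u',v')$. A universal configuration with no successors (reached when $u$ is a leaf of $T$) is vacuously accepting, while an existential configuration with no successors (no matching $a$-successor of $v$) is rejecting. Since $T$ is a tree, each node of $T$ appears in at most one branch of the proof tree, so the proof tree has size $O(|V(T)|)$, which is polynomial.

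For $\LogCFL$-hardness I will reduce from the circuit value problem for logarithmic-depth circuits with fan-in-$2$ $\wedge$-gates and unbounded fan-in $\vee$-gates; this is $\LogCFL$-complete since $\LogCFL$ coincides with (logspace-uniform) $\SAC^1$. Given such a circuit $C$ of depth $d = O(\log n)$ with output gate $g_0$, I first level $C$ in logspace by inserting unary identity gates (such as $\wedge(g,g)$ or a unary $\vee$) so that edges go from layer $\ell$ only to layer $\ell+1$, consecutive layers alternate between $\wedge$- and $\vee$-type, and all constant leaves lie at the bottom layer $d$. This inflates size and depth only polynomially.

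Next I construct the graph $G$ with one node $v_g$ for each gate together with a fresh node $\top$ carrying a self-loop for every edge label in use. For a $\wedge$-gate $g=\wedge(g_L,g_R)$ I add the edges $v_g \arrow{L} v_{g_L}$ and $v_g \arrow{R} v_{g_R}$; for a $\vee$-gate $g=\vee(g_1,\ldots,g_k)$ I add $v_g \arrow{\vee} v_{g_j}$ for every input $g_j$; for a $1$-leaf I add $v_g \arrow{T} \top$ with a fresh label $T$, and $0$-leaves stay sinks. The tree depends only on the layer: $T_d$ is the two-node tree with a single $T$-labelled edge; inductively $T_\ell$ has its root attached via $L$- and $R$-labelled edges to two fresh copies of $T_{\ell+1}$ if layer $\ell$ is a $\wedge$-layer, and via a single $\vee$-labelled edge to one copy of $T_{\ell+1}$ if layer $\ell$ is a $\vee$-layer. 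Only $\wedge$-layers duplicate the tree, and since there are at most $d$ of them, $|T_0| = 2^{O(\log n)} = n^{O(1)}$.

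The correctness will follow from a straightforward induction on $\ell$ (decreasing from $d$ down to $0$) showing that for every gate $g$ at layer $\ell$ the root of $T_\ell$ is simulated by $v_g$ if and only if $g$ evaluates to $1$. The base case uses that only $1$-leaves carry the $T$-edge to $\top$, so the unique $T$-edge in $T_d$ is matched precisely when $g$ is true; the $\wedge$-step forces both the $L$- and $R$-successors to be simulated, while the $\vee$-step lets the existential player pick a true input. The main obstacle is arranging that a single tree $T_\ell$ can be used uniformly for every gate at layer $\ell$; this is exactly what the leveling step buys, since after leveling all gates at a given layer share the same type and their successors all lie in the immediately following layer, making the shape of the ``probe'' tree $T_\ell$ depend only on $\ell$.
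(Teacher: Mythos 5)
Your proposal is correct and follows essentially the same route as the paper: the upper bound via an alternating logspace machine that branches universally over tree edges and existentially over graph edges (with a proof tree of size $O(|V(T)|)$), and the lower bound by evaluating a layered, alternating $\SAC^1$ circuit through a graph built from the circuit's gates together with a polynomial-size ``probe'' tree whose shape depends only on the layer. The differences (your edge labels $L,R,\vee,T$ versus the paper's $a,b$, the self-loops on $\top$, and the explicit leveling argument where the paper cites a normal-form lemma) are cosmetic.
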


\begin{proof}
Again we can restrict ourselves to the case that the input vertices are the roots of the graph and the tree, respectively.
We start with showing that the simulation problem in the theorem belongs to $\LogCFL$. For this we outline a
straightforward alternating
logspace algorithm with polynomial proof tree size. 
Given a tree $T$ and a graph $G$, the algorithm stores a pair $(u,w) \in V(T) \times V(G)$
and continues as follows:
\begin{itemize}
\item Branch universally to all edges $u \arrow{a} u'$ in $T$.
\item Branch existentially to all edges $w \arrow{a} w'$ in $G$.
\item Continue with the pair $(u',w')$.
\end{itemize}
The algorithm starts with the pair $(r,v)$. It is obvious that there exists a proof tree if and only if $r \sqsubseteq v$.
Moreover, a proof tree results from $T$ by inserting on each edge a node of $G$ with a unique child.

Let us now prove the lower bound.
Fix a $\LogCFL$-complete problem $L$ and let $w$ be an input word of length $n$. Let $(C_n)_{n \geq 0}$ be a logspace-uniform
 $\SAC^1$ circuit family for $L$.
We first construct in logspace the $\SAC^1$-circuit $C_n$ for input length $n$. 
We set the input gates to the bits in $w$; let $C = (G_0,\beta)$ be the resulting  Boolean circuit. 
Recall from Section~\ref{sec-circuits} that $G_0$ is a directed graph $G_0 = (V_0,E_0)$ and that
edges from $E_0$ are directed towards the input gates.

Our goal is to reduce the question whether $C$ evaluates to $1$ to a simulation problem. By \cite[Lemma~4.6]{GottlobLS01}
we can assume that $C$ is layered and alternating, i.e., the set of gates is partitioned into $m \in O(\log n)$
many layers $L_1, L_2, \ldots,  L_m$, where $m>1$ is odd, $L_1$ contains the input gates, all wires go from layer $L_{i}$ to
layer $L_{i-1}$ for some $2 \leq i \leq m$, and $L_m = \{ g_0 \}$ contains the output gate of the circuit.
Moreover,  layers $2i$  ($2i+1$, resp.)  contain only $\wedge$-gates ($\vee$-gates, resp.) 
for $1 \leq i \leq \lfloor m/2 \rfloor$. Since $m$ is odd, the output gate $g_0$ is a $\vee$-gate.
Recall that every $\wedge$-gate $u$ has exactly two input nodes that we denote with $u_1$ and $u_2$
(we allow $u_1 = u_2$).

\begin{figure}
\centering
	\begin{tikzpicture}
	
	\tikzset{gate/.append style={fill=white}}
	
	\node[gate] (05) {\scriptsize $\vee$};
	
	\node[gate, below left = 20pt and 15pt of 05] (35) {\scriptsize $\wedge$};
	\node[gate, left = 25pt of 35] (15) {\scriptsize $\wedge$};
	\node[gate, below right = 20pt and 15pt of 05] (06) {\scriptsize $\wedge$};
	\node[gate, right = 25pt of 06] (09) {\scriptsize $\wedge$};
	
	\draw[arrow] (05) edge node [above left, inner sep = .5] {\scriptsize $a$} (35);
	\draw[arrow] (05) edge node [above left, inner sep = .5] {\scriptsize $a$} (15);
	\draw[arrow] (05) edge node [above right, inner sep = .5] {\scriptsize $a$} (06);
	\draw[arrow] (05) edge node [above right, inner sep = .5] {\scriptsize $a$} (09);
	
	\node[gate, below left = 20pt and 5pt of 15] (16) {\scriptsize $\vee$};
	\node[gate, right = 25pt of 16] (19) {\scriptsize $\vee$};
	\node[gate, right = 25pt of 19] (36) {\scriptsize $\vee$};
	\node[gate, right = 25pt of 36] (39) {\scriptsize $\vee$};
	\node[gate, right = 25pt of 39] (40) {\scriptsize $\vee$};

	\draw[arrow] (15) edge node [above left, inner sep = .5] {\scriptsize $a$} (16);
	\draw[arrow] (15) edge node [above right, inner sep = .5] {\scriptsize $b$} (19);
	\draw[arrow] (35) edge node [above left, inner sep = .5] {\scriptsize $a$} (19);
	\draw[arrow] (35) edge node [above right, inner sep = .5] {\scriptsize $b$} (36);
	\draw[arrow] (06) edge node [above left, inner sep = .5] {\scriptsize $a$} (36);
	\draw[arrow] (06) edge node [above right, inner sep = .5] {\scriptsize $b$} (39);
	\draw[arrow] (09) edge node [above left, inner sep = .5] {\scriptsize $a$} (39);
	\draw[arrow] (09) edge node [above right, inner sep = .5] {\scriptsize $b$} (40);
	
	\node[gate, below left = 40pt and 15pt of 19] (40) {\scriptsize $\wedge$};
	\node[gate, below right = 40pt and 20pt of 19] (41b) {\scriptsize $\wedge$};
	\node[gate, below right = 40pt and 15pt of 39] (41) {\scriptsize $\wedge$};
	\node[below = 40pt of 19] (42) {};
	\node[below = 40pt of 39] (42b) {};
	
	\node[gate, below left = 20pt and 10pt of 40] (43) {\scriptsize $0$};	
	\node[gate, right = 20pt of 43] (44) {\scriptsize $1$};	
	\node[gate, right = 20pt of 44] (45) {\scriptsize $0$};	
	\node[gate, right = 20pt of 45] (46) {\scriptsize $1$};	
	\node[gate, right = 20pt of 46] (47) {\scriptsize $1$};
	\node[gate, right = 20pt of 47] (48) {\scriptsize $0$};	
	
	\draw[arrow] (40) edge node [above left, inner sep = .5] {\scriptsize $a$} (43);
	\draw[arrow] (41b) edge node [above left, inner sep = .5] {\scriptsize $a$} (45);
	\draw[arrow] (41) edge node [above left, inner sep = .5] {\scriptsize $a$} (47);
	\draw[arrow] (40) edge node [above right, inner sep = .5] {\scriptsize $b$} (44);
	\draw[arrow] (41b) edge node [above right, inner sep = .5] {\scriptsize $b$} (46);
	\draw[arrow] (41) edge node [above right, inner sep = .5] {\scriptsize $b$} (48);
		
	\node[gate, below right = 20pt and 5pt of 45] (w) {\scriptsize $\omega$};
	
	\draw [etc] (19) -- (42);
	\draw [etc] (39) -- (42b);

	\draw[arrow] (44) edge node [below left, inner sep = 1] {\scriptsize $a$} (w);
	\draw[arrow] (46) edge node [above left, inner sep = .5] {\scriptsize $a$} (w);
	\draw[arrow] (47) edge node [below right, inner sep = 1] {\scriptsize $a$} (w);

	\node [right = 90pt of 05] (m) {$L_m$};
	\node [below = 10pt of m] (m1) {$L_{m-1}$};
	\node [below = 10pt of m1] (m2) {$L_{m-2}$};
	\node [below = 35pt of m2] (m3) {$L_2$};
	\node [below = 10pt of m3] (m4) {$L_1$};

	
	\node [gate, inner sep = 3pt, left = 180pt of 05] (t1) {};
	\node [gate, inner sep = 3pt, below = 20pt of t1] (t2) {};
	\node [gate, inner sep = 3pt, below left = 20pt and 20pt of t2] (t3) {};
	\node [gate, inner sep = 3pt, below right = 20pt and 20pt of t2] (t4) {};

	\draw[arrow] (t1) edge node [right] {\scriptsize $a$} (t2);
	\draw[arrow] (t2) edge node [above left] {\scriptsize $a$} (t3);
	\draw[arrow] (t2) edge node [above right] {\scriptsize $b$} (t4);

	\draw[arrow] (t3) edge node [left] {\scriptsize $a$} +(0,-20pt);
	\draw[arrow] (t4) edge node [right] {\scriptsize $a$} +(0,-20pt);
	\draw [etc] (t2) ++ (0,-40pt) edge +(0,-40pt);

	\node [gate, inner sep = 3pt, below left = 45pt and 14pt of t3] (t5) {};
	\node [gate, inner sep = 3pt, right = 22pt of t5] (t6) {};
	\node [gate, inner sep = 3pt, below left = 45pt and 6pt of t4] (t7) {};
	\node [gate, inner sep = 3pt, right = 22pt of t7] (t8) {};

	\node [gate, inner sep = 3pt, below left = 20pt and 1pt of t5] (t11) {};
	\node [gate, inner sep = 3pt, below right = 20pt and 2pt of t5] (t12) {};
	\node [gate, inner sep = 3pt, below left = 20pt and 1pt of t6] (t13) {};
	\node [gate, inner sep = 3pt, below right = 20pt and 2pt of t6] (t14) {};
	\node [gate, inner sep = 3pt, below left = 20pt and 1pt of t7] (t15) {};
	\node [gate, inner sep = 3pt, below right = 20pt and 2pt of t7] (t16) {};
	\node [gate, inner sep = 3pt, below left = 20pt and 1pt of t8] (t17) {};
	\node [gate, inner sep = 3pt, below right = 20pt and 2pt of t8] (t18) {};

	\draw[arrow] (t5) edge node [above left, inner sep = .5] {\scriptsize $a$} (t11);
	\draw[arrow] (t5) edge node [above right, inner sep = .5] {\scriptsize $b$} (t12);
	\draw[arrow] (t6) edge node [above left, inner sep = .5] {\scriptsize $a$} (t13);
	\draw[arrow] (t6) edge node [above right, inner sep = .5] {\scriptsize $b$} (t14);
	\draw[arrow] (t7) edge node [above left, inner sep = .5] {\scriptsize $a$} (t15);
	\draw[arrow] (t7) edge node [above right, inner sep = .5] {\scriptsize $b$} (t16);
	\draw[arrow] (t8) edge node [above left, inner sep = .5] {\scriptsize $a$} (t17);
	\draw[arrow] (t8) edge node [above right, inner sep = .5] {\scriptsize $b$} (t18);

	\node [gate, inner sep = 3pt, below = 18pt of t11] (at11) {};
	\node [gate, inner sep = 3pt, below = 18pt of t12] (at12) {};
	\node [gate, inner sep = 3pt, below = 18pt of t13] (at13) {};
	\node [gate, inner sep = 3pt, below = 18pt of t14] (at14) {};
	\node [gate, inner sep = 3pt, below = 18pt of t15] (at15) {};
	\node [gate, inner sep = 3pt, below = 18pt of t16] (at16) {};
	\node [gate, inner sep = 3pt, below = 18pt of t17] (at17) {};
	\node [gate, inner sep = 3pt, below = 18pt of t18] (at18) {};

	\foreach \x in {1,...,8} {
	\draw[arrow] (t1\x) edge node [left, inner sep = .5] {\scriptsize $a$} (at1\x);
	}
	\end{tikzpicture}
	
	\caption{Reduction from $\SAC^1$-circuits to $\Sim${\rm (Trees, Graphs)}}
	\label{fig:sac-lower}
	
\end{figure}
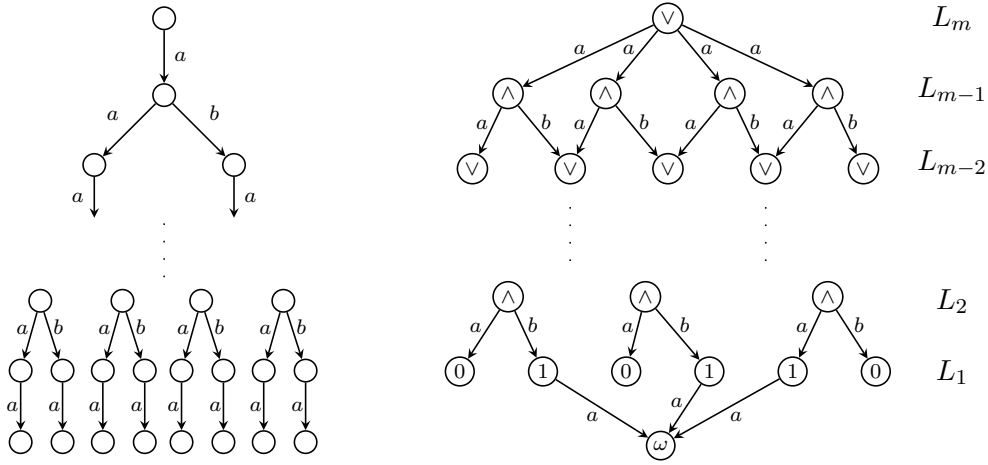

Define the edge-labelled graph $G = (V_0 \uplus \{ \omega \},E)$, where $E$ consists of the following set
of $\{a,b\}$-labelled edges.
\begin{itemize}
\item $v \arrow{a} \omega$ for all $v \in L_1$ with $\beta(v)=1$.
\item $v \arrow{a} v'$ for all $v \in L_i$ with $i > 1$ odd and $(v,v') \in E_0$.
\item $v \arrow{a} v_1$, $v \arrow{b} v_2$ for all $v \in L_i$ with $i > 1$ even, where $v_1, v_2$ are the left and right child of $v$ in $G_0$.
\end{itemize}
The tree $T$ is a full binary tree of height $\lfloor m/2 \rfloor$ with intermediate layers of unary nodes inserted.
More precisely, we define trees $T_i$ ($1 \leq i \leq m$) recursively as follows:
\begin{itemize}
\item If $i=1$ then $T_i$ consists of a root node $r$ with a unique outgoing edge $r \arrow{a} r'$, where
$r'$ is a leaf.
\item If $i > 1$ is odd, then $T_i$ consists of a root node $r$ with a unique outgoing edge $r \arrow{a} r'$, where
$r'$ is the root of a copy of $T_{i-1}$.
\item If $i > 1$ is even, then $T_i$ consists of a root node $r$ with two outgoing edges
$r \arrow{a} r_1$ and $r \arrow{b} r_2$. Each of the nodes $r_1$ and $r_2$ is the root of a copy of 
$T_{i-1}$.
\end{itemize}

The tree $T$ is defined as $T_m$.
Both $T$ and $G$ are displayed in Figure~\ref{fig:sac-lower}.

\subparagraph*{Claim.}  If $r$ denotes the root of $T_i$ and $v \in V(G)$ belongs to layer $L_i$ then $r \sqsubseteq v$ if and only if 
gate $v$ evaluates to $1$ in the circuit $C$.

We prove the claim by induction on $i$. The case $i=1$ is clear: In the tree $T_1$ the root node $r$ has a unique
outgoing edge $r \arrow{a} r'$, where $r'$ has no outgoing edges.
Moreover, layer $L_1$ contains the input gates of the circuit $C$. If $v$ is labelled with $0$, then $v$ has no outgoing edges in $G$,
and hence $r \not\sqsubseteq v$. On the other hand, if $v$ is labelled with $1$, then $v$ has a single outgoing $a$-labelled 
edge to a node without outgoing edges, which implies $r \sqsubseteq v$.

Now assume that $i > 1$. Let us first assume that $i$ is odd. 
Then, the root $r$ of $T_i$ has a single outgoing edge $r \arrow{a} r'$, where 
$r'$ is a root of $T_{i-1}$. Moreover, $v$ is a
$\vee$-gate. If $v$ evaluates to $1$, then $v$ has an outgoing edge $v \arrow{a} v'$,
where $v' \in L_{i-1}$ evaluates to $1$.  By induction, we get $r' \sqsubseteq v'$,
which implies $r \sqsubseteq v$. On the other hand, if $r \sqsubseteq v$, then $v$ must have an 
outgoing edge $(v,a,v')$ such that $r' \sqsubseteq v'$. By
induction, it follows that $v'$ evaluates to $1$. Hence, the $\vee$-gate $v$ evaluates to $1$ too.

Now assume that $i$ is even. Then, the root $r$ of $T_i$ has two outgoing edges $r \arrow{a} r_1$ and $r \arrow{b} r_2$
where $r_i$ is a root of a copy of $T_{i-1}$. 
Moreover, $v$ is a $\wedge$-gate, which has two input gates $v_1$ and $v_2$ (we might have $v_1 = v_2$).
In $G$, $v$ has the two outgoing edges $v \arrow{a} v_1$ and $v \arrow{b} v_2$.
If $v$ evaluates to $1$, then $v_1$ and $v_2$ evaluate to $1$. By induction, we get $r_1 \sqsubseteq v_1$ and 
$r_2 \sqsubseteq v_2$, which implies $r \sqsubseteq v$. 
On the other hand, if $r \sqsubseteq v$, then  we must have $r_1 \sqsubseteq v_1$ and $r_2 \sqsubseteq v_2$.
Hence, by induction, it follows that $v_1$ and $v_2$ both evaluate to $1$. Thus, $v$ evaluates to $1$.
\end{proof}

\begin{thm}
	\Sim{\rm (Graphs, Trees)} is \LogCFL-complete.
\end{thm}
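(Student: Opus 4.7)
The upper bound exploits the closure of $\LogCFL$ under complement (Borodin et al.). I plan to give an alternating logspace algorithm with polynomial proof-tree size for the complementary problem $\neg\Sim$, i.e., deciding $v \not\sqsubseteq r$ for $v \in V(G)$ and $r \in V(T)$. The algorithm maintains a pair $(v,r)$; at each step it existentially picks an out-edge $v \arrow{a} v'$ in $G$, then universally branches over all out-edges $r \arrow{a} r'$ in $T$ carrying the same label $a$, and recurses on each resulting pair $(v',r')$. Because each game step strictly descends in the tree $T$, every node of $T$ is visited at most once on any root-to-leaf path of the proof tree, so the total proof-tree size is bounded by $O(|V(T)|)$, which is polynomial. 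Combined with closure under complement, this yields $\Sim(\mathrm{Graphs},\mathrm{Trees}) \in \LogCFL$.

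For the lower bound I plan to reduce from the $\SAC^1$-circuit value problem, following the same overall strategy as the preceding theorem but with the roles of the two sides of the simulation game swapped. Given a layered alternating $\SAC^1$ circuit $C$ with $\wedge$-gates of fan-in $2$ at even layers and $\vee$-gates of unbounded fan-in at odd layers, the universal branching on the graph side of the game will encode $\wedge$-gates (two out-edges with distinct labels $a$ and $b$ forcing both inputs to be satisfied), while the existential choice on the tree side will encode $\vee$-gates (multiple children sharing a common label, offering a choice among inputs). The leaf gadgets at the input layer will encode the input values $\beta(v) \in \{0,1\}$ of $C$ asymmetrically: inputs with $\beta(v)=1$ receive no out-edges (so that any $T$-leaf trivially simulates them), while inputs with $\beta(v)=0$ receive a distinguishing edge with a fresh label that no $T$-leaf can match.

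The main obstacle I anticipate is keeping the tree $T$ polynomial in the circuit size even though $C$ may share sub-DAGs that a tree cannot share. My plan is to make the graph side ``generic'' (a polynomial-size family of gadgets indexed by layer and gate-type of $C$, analogous to the tree $T_\S$ in the preceding proof) and to push the specific gate-input relations of $C$ onto the tree side via carefully chosen labels, thereby exploiting the alternating layered structure of $\SAC^1$ to avoid a full unfolding of $C$. Once the construction is in place, a straightforward induction on the depth of $C$ should establish that $v_0 \sqsubseteq r_0$ at the distinguished roots holds iff the output gate of $C$ evaluates to $1$, yielding the desired reduction and completing the proof.
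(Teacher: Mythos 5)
Your upper bound is exactly the paper's argument: pass to the complement (legitimate since $\LogCFL$ is closed under complement), run an alternating logspace machine that existentially guesses a $G$-edge and universally branches over the matching $T$-edges, and note that the universal branching embeds any proof tree into the input tree $T$, so proof trees have polynomial size. No issues there.

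The lower bound, however, has a genuine gap. You reduce circuit value to simulation \emph{directly} (gate true iff simulated), which forces the polarities you describe: $\wedge$-gates become Spoiler's universal choices in $G$, and $\vee$-gates must become Duplicator's existential choices in $T$. But Duplicator's choice at a $\vee$-gate is only meaningful if the children of the current $T$-node are in bijection with the inputs of that particular $\vee$-gate and the subtree under each child encodes the subcircuit below that input. Hence the $\vee$-wiring of $C$ must live in $T$, and a tree cannot share subcircuits: the number of leaves needed is the number of output-to-input paths in $C$, which for a depth-$\Theta(\log n)$ layered circuit with unbounded fan-in $\vee$-gates can be $n^{\Theta(\log n)}$. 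Your proposed escape --- a ``generic'' graph side with the gate-input relation pushed onto the tree via labels --- does not resolve this: if $G$ is generic, the game position $(v,r)$ can identify the current gate only through $r$, so $T$ must still contain a distinct node for every gate-and-history combination, which is exactly the superpolynomial unfolding you are trying to avoid.

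The missing idea is to reduce to \emph{non}-simulation instead (again exploiting closure of $\LogCFL$ under complement, this time on the hardness side). In the non-simulation game the quantifiers flip: the $G$-side becomes existential, which matches $\vee$-gates, so the entire circuit --- including the unbounded fan-in $\vee$-wiring --- stays in the DAG $G$ exactly as in the proof of Theorem~\ref{thm:trees-graphs}, and the tree is generic and of polynomial size. The fan-in-two $\wedge$-gates, which now need a universal quantifier that the $G$-side no longer provides, are handled by testing non-simulation against a \emph{pair} of generic trees $T_{i,a},T_{i,b}$: each offers on one of the two labels $a,b$ a copy of a tree $T_*$ that simulates every relevant node of $G$ (thereby absorbing one $\wedge$-input) and recurses on the other label, so that $v\not\sqsubseteq r_a$ together with $v\not\sqsubseteq r_b$ expresses that both inputs evaluate to $1$. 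Without this (or an equivalent) device your construction does not go through.
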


\begin{proof}
	The proof of the upper bound is very similar to the previous proof,
	using an alternating logspace machine with a polynomial proof tree size.
	Since $\LogCFL$ is closed under complement \cite{BCDRT89}, we can consider the problem
	of deciding whether a graph node $u \in V(G)$
	is not simulated by a tree node $w \in V(T)$.
	The algorithm stores the pair $(u,w) \in V(G) \times V(T)$ and continues as follows:
	\begin{itemize}
	\item Branch existentially to all edges $u \arrow{a} u'$ in $G$.
	\item Branch universally to all edges $w \arrow{a} w'$ in $T$.
	\item Continue with the pair $(u',w')$.
	\end{itemize}
	The correctness and the polynomial proof tree size of the algorithm can be argued as above.
	
	For the $\LogCFL$-hardness we can use the same graph $G$ as in the proof of Theorem~\ref{thm:trees-graphs} but define a different tree $T$.
	First we reconsider the tree used in the proof of Theorem~\ref{thm:trees-graphs}, which we call $T_*$.
	Notice that every node in $G$ which is contained in an odd level $L_i$ is simulated by the root of $T_*$.
	Then we define trees $T_i$ for all odd $1 \leq i \leq m$ and trees $T_{i,a}, T_{i,b}$ for all even $1 \le i \le m$
	recursively as follows:
	\begin{itemize}
	\item If $i=1$ then $T_i$ consists of a single node.
	\item If $i > 1$ is odd, then $T_i$ consists of a root node $r$ with two outgoing edges
	$r \arrow{a} r_a$ and $r \arrow{a} r_b$.
	The nodes $r_a$ and $r_b$ are the roots of copies of the trees $T_{i-1,a}$ and $T_{i-1,b}$, respectively.
	\item If $i > 1$ is even, then $T_{i,a}$ consists of a root node $r$ with two outgoing edges
	$r \arrow{a} r_a$ and $r \arrow{b} r_b$.
	The node $r_a$ is the root of a copy of $T_*$, and $r_b$ is the root of a copy of $T_{i-1}$.
	\end{itemize}
	Finally the tree $T$ is defined as $T_m$, which is displayed together with $G$ in Figure~\ref{fig:sac-lower-forcing}.

\begin{figure}
\centering
	\begin{tikzpicture}
	
	\tikzset{gate/.append style={fill=white}}
	
	\node[gate] (05) {\scriptsize $\vee$};
	
	\node[gate, below left = 20pt and 15pt of 05] (35) {\scriptsize $\wedge$};
	\node[gate, left = 25pt of 35] (15) {\scriptsize $\wedge$};
	\node[gate, below right = 20pt and 15pt of 05] (06) {\scriptsize $\wedge$};
	\node[gate, right = 25pt of 06] (09) {\scriptsize $\wedge$};
	
	\draw[arrow] (05) edge node [above left, inner sep = .5] {\scriptsize $a$} (35);
	\draw[arrow] (05) edge node [above left, inner sep = .5] {\scriptsize $a$} (15);
	\draw[arrow] (05) edge node [above right, inner sep = .5] {\scriptsize $a$} (06);
	\draw[arrow] (05) edge node [above right, inner sep = .5] {\scriptsize $a$} (09);
	
	\node[gate, below left = 20pt and 5pt of 15] (16) {\scriptsize $\vee$};
	\node[gate, right = 25pt of 16] (19) {\scriptsize $\vee$};
	\node[gate, right = 25pt of 19] (36) {\scriptsize $\vee$};
	\node[gate, right = 25pt of 36] (39) {\scriptsize $\vee$};
	\node[gate, right = 25pt of 39] (40) {\scriptsize $\vee$};

	\draw[arrow] (15) edge node [above left, inner sep = .5] {\scriptsize $a$} (16);
	\draw[arrow] (15) edge node [above right, inner sep = .5] {\scriptsize $b$} (19);
	\draw[arrow] (35) edge node [above left, inner sep = .5] {\scriptsize $a$} (19);
	\draw[arrow] (35) edge node [above right, inner sep = .5] {\scriptsize $b$} (36);
	\draw[arrow] (06) edge node [above left, inner sep = .5] {\scriptsize $a$} (36);
	\draw[arrow] (06) edge node [above right, inner sep = .5] {\scriptsize $b$} (39);
	\draw[arrow] (09) edge node [above left, inner sep = .5] {\scriptsize $a$} (39);
	\draw[arrow] (09) edge node [above right, inner sep = .5] {\scriptsize $b$} (40);
	
	\node[below = 40pt of 19] (42) {};
	\node[below = 40pt of 39] (42b) {};
	
	\node[gate, below left = 40pt and 0pt of 16] (43) {\scriptsize $0$};	
	\node[gate, right = 20pt of 43] (44) {\scriptsize $1$};	
	\node[gate, right = 20pt of 44] (45) {\scriptsize $0$};	
	\node[gate, right = 20pt of 45] (46) {\scriptsize $1$};	
	\node[gate, right = 20pt of 46] (47) {\scriptsize $1$};
	\node[gate, right = 20pt of 47] (48) {\scriptsize $0$};

	\node[gate, below right = 20pt and 5pt of 45] (w) {\scriptsize $\omega$};
	
	\draw [etc] (19) -- (42);
	\draw [etc] (39) -- (42b);

	\draw[arrow] (44) edge node [below left, inner sep = 1] {\scriptsize $a$} (w);
	\draw[arrow] (46) edge node [above left, inner sep = .5] {\scriptsize $a$} (w);
	\draw[arrow] (47) edge node [below right, inner sep = 1] {\scriptsize $a$} (w);

	\node [right = 90pt of 05] (m) {$L_m$};
	\node [below = 10pt of m] (m1) {$L_{m-1}$};
	\node [below = 10pt of m1] (m2) {$L_{m-2}$};
	\node [below = 35pt of m2] (m3) {$L_1$};


	\node [gate, inner sep = 3pt, right = 180pt of 05] (t2) {};
	\node [gate, inner sep = 3pt, below left = 20pt and 20pt of t2] (t3) {};
	\node [gate, inner sep = 3pt, below right = 20pt and 20pt of t2] (t4) {};

	\draw[arrow] (t2) edge node [above left] {\scriptsize $a$} (t3);
	\draw[arrow] (t2) edge node [above right] {\scriptsize $a$} (t4);

	\node [inner sep = 0, below left = 22pt and 10pt of t3] (t3a) {$T_*$};
	\node [gate, inner sep = 3pt, below right = 22pt and 10pt of t3] (t3b) {};
	\node [gate, inner sep = 3pt, below left = 22pt and 10pt of t4] (t4a) {};
	\node [inner sep = 0, below right = 22pt and 10pt of t4] (t4b) {$T_*$};

	\draw[arrow] (t3) edge node [above left] {\scriptsize $a$} (t3a);
	\draw[arrow] (t3) edge node [above right] {\scriptsize $b$} (t3b);
	\draw[arrow] (t4) edge node [above left] {\scriptsize $a$} (t4a);
	\draw[arrow] (t4) edge node [above right] {\scriptsize $b$} (t4b);

	\draw [etc] (t3b) edge +(0,-40pt);
	\draw [etc] (t4a) edge +(0,-40pt);

	\node [inner sep = 3pt, below left = 47pt and 13pt of t3] (t5) {};
	\node [inner sep = 3pt, right = 22pt of t5] (t6) {};
	\node [inner sep = 3pt, below left = 47pt and 7pt of t4] (t7) {};
	\node [inner sep = 3pt, right = 22pt of t7] (t8) {};

	\node [gate, inner sep = 3pt, below left = 20pt and 1pt of t5] (t11) {};
	\node [gate, inner sep = 3pt, below right = 20pt and 1pt of t5] (t12) {};
	\node [gate, inner sep = 3pt, below left = 20pt and 1pt of t6] (t13) {};
	\node [gate, inner sep = 3pt, below right = 20pt and 1pt of t6] (t14) {};
	\node [gate, inner sep = 3pt, below left = 20pt and 1pt of t7] (t15) {};
	\node [gate, inner sep = 3pt, below right = 20pt and 1pt of t7] (t16) {};
	\node [gate, inner sep = 3pt, below left = 20pt and 1pt of t8] (t17) {};
	\node [gate, inner sep = 3pt, below right = 20pt and 1pt of t8] (t18) {};

	\end{tikzpicture}
	
	\caption{Reduction from $\SAC^1$-circuits to $\Sim${\rm (Graphs, Trees)}}
	\label{fig:sac-lower-forcing}
	
\end{figure}

	\subparagraph*{Claim.}  Assume $v \in V(G)$ belongs to layer $L_i$.
	\begin{itemize}
		\item If $i$ is odd, let $r$ be the root of $T_i$.
		Then $v \not \sqsubseteq r$ if and only if gate $v$ evaluates to $1$ in the circuit $C$.
		\item If $i$ is even, let $r_a,r_b$ be the roots of $T_{i,a}$ and $T_{i,b}$, respectively. 
		Then ($v \not \sqsubseteq r_a$ and $v \not \sqsubseteq r_b$) if and only if gate $v$ evaluates to $1$ in the circuit $C$.
	\end{itemize}
	For $i = 1$ this is clear because $v$ has a successor if and only if it evaluates to $1$.
	Now let $i > 1$.  
	First assume that  $i$ is odd, i.e., $v$ is a $\vee$-gate. Let $r$ be the root of $T_i$.
	Thus, $r$ has two outgoing edges $r \arrow{a} r_a$ and $r \arrow{a} r_b$, where
	$r_a$ and $r_b$ are the roots of copies of the trees $T_{i-1,a}$ and $T_{i-1,b}$, respectively.
	The node $v$ evaluates to 1 in $C$ if and only if $v$ has some $a$-successor $w$ in $G$
	which evaluates to 1.
	By induction, this is equivalent to saying $w \not \sqsubseteq r_a$ and $w \not \sqsubseteq r_b$ for some successor $w$ of $v$,
	which in turn is equivalent to $v \not \sqsubseteq r$.
	
	Now let $i$ be even, i.e., $v$ is a $\wedge$-gate, which has an $a$-successor $v_1$ and a $b$-successor $v_2$.
	Let $r_a$ and $r_b$ be the roots of $T_{i,a}$ and $T_{i,b}$, respectively. 
	The node $v$ evaluates to 1 in $C$ if and only if both successors $v_1$ and $v_2$ of $v$ in $G$
	evaluate to 1.
	By induction, this is equivalent to saying $v_1 \not \sqsubseteq r'$ and $v_2 \not \sqsubseteq r'$
	where $r'$ is the root of $T_{i-1}$.
	Since the root of $T_*$ simulates every node in an odd level of $G$, this is equivalent to $v \not \sqsubseteq r_a$
	and $v \not \sqsubseteq r_b$.
\end{proof}

\section{Conclusion}

We proved the following results:
\begin{enumerate}
	\item The bisimulation problem for trees that are given by pointer structures (resp., in term representation) 
	is complete for deterministic logspace (resp., $\NC^1$). These results also hold for the simulation problem for trees.
	\item Already for graphs of bounded path-width (a subclass of the graphs of bounded tree-width), 
	the simulation problem becomes $\P$-complete. 
	\item The bisimulation problem between a tree and a dag (or arbitrary graph) belongs to $\AC^1$ and is $\NL$-hard.
	\item Simulation of a tree by a dag as well as simulation of a dag by a tree is \LogCFL-complete.
\end{enumerate}
As an application of (1), we showed that equality of hereditarily finite sets  is $\NC^1$-complete.
For the proof of (1) we introduced the new class of tree-shaped circuits and proved that the circuit evaluation problem for 
tree-shaped circuits of bounded width
is in logspace or $\NC^1$, depending on the representation of the circuit. It would be nice to find further
applications of tree-shaped  circuits.

The main open problem that remains is whether the bisimulation problem for graphs of bounded tree-width is in $\NC$ or 
$\P$-complete. Another open problem is the exact complexity of the bisimulation problem for a tree and a dag. Our bounds
(in $\AC^1$ and $\NL$-hard) are not matching. In many situations (e.g. for various classes of infinite state systems), simulation
is harder than bisimulation. We proved that simulation of a tree by a dag (or vice versa) is $\LogCFL$-complete. Hence, one might
conjecture that also the  bisimulation problem for a tree and a dag belongs to $\LogCFL$.

\bibliographystyle{alpha}
\bibliography{refs}

\end{document}